\newcommand{\hypgeo}[2]{%
  \operatorname{%
    {\vphantom{\mathnormal{F}}}_{#1}%
    \kern-\scriptspace
    \mathnormal{F}_{#2}%
  }%
}
\newtheorem{df}{Definition}[section]
\newtheorem{prop}{Proposition}[section]
\newtheorem{theo}{Theorem}[section]
\newtheorem{lemma}{Lemma}[section]
\newtheorem{cor}{Corollary}[section]
\newtheorem{rkk}{Remark}[section]
\newtheorem{claim}{Claim}[section]
\newtheorem{consequence}{Consequence}[section]
\theoremstyle{remark}
\def\paragraph#1{\noindent \textbf{#1}}
\numberwithin{equation}{section}
\newcommand{\D}{{\mathbb D}}
\newcommand{\R}{{\mathbb R}}
\definecolor{darkolivegreen}{rgb}{0.33, 0.42, 0.18}
\definecolor{britishracinggreen}{rgb}{0.0, 0.26, 0.15}
\definecolor{ao(english)}{rgb}{0.0, 0.5, 0.0}
\definecolor{azure(colorwheel)}{rgb}{0.0, 0.5, 1.0}
\definecolor{bleudefrance}{rgb}{0.19, 0.55, 0.91}
\newcommand{\bb}{\mathbb}
\def\D{\mathbb{D}}
\def\Ti#1/{\tilde{#1}}
\def \expect {{\bf E}}
\def\Z{\mathbb{Z}}
\begin{document}

\title{Complex Generalized Integral Means Spectrum of  Drifted Whole-Plane SLE \& LLE}
\date{}
%\author{D. Beliaev, B. Duplantier, M. Zinsmeister}
 \author[Bertrand Duplantier]{Bertrand Duplantier\,\orcidlink{0000-0002-0765-8476}}
      \address{Bertrand Duplantier\\Universit\'e Paris-Saclay\\CNRS\\ CEA\\Institut de Physique Th\'{e}orique
\\91191\\ Gif-sur-Yvette Cedex\\ France}
\email{bertrand.duplantier@ipht.fr}
   \author[Yong Han]{Yong Han\,\orcidlink{0000-0001-6149-3724}}
      \address{Yong Han\\ College of Mathematics and Statistics\\ Shenzhen University\\ Shenzhen 518060\\ Guangdong\\ P. R. China}
\email{hanyong@szu.edu.cn}
\author[Chi Nguyen]{Chi Nguyen}
\address{Chi Nguyen \\ Ho Chi Minh University of Transport \\Department of Information Technology \\ 2 Vo Oanh St.\\ Ward 25\\ Binh Thanh District\\ Ho Chi Minh City\\ Vietnam}
\email{ntpchi@gmail.com}  
\author[Michel Zinsmeister]{Michel Zinsmeister}
   \address{Michel Zinsmeister\\Institut Denis Poisson\\ Universit\'e d'Orl\'eans\\ B\^atiment de math\'ematiques\\ rue de Chartres B.P. 6759-F-45067 Orl\'eans Cedex 2, France}
     \email{zins@univ-orleans.fr}        
\begin{abstract} We present new exact results for the complex generalized integral means spectrum (in the sense of \cite{DHLZ}) for two kinds of whole-plane Loewner evolutions driven by a L\'{e}vy process:
\begin{itemize}
\item[(1)] The case of a L\'{e}vy process with continuous trajectories, which corresponds to Schramm-Loewner evolution SLE$_\kappa$ with a \emph{drift}  term in the Brownian driving function. There is no known result for its standard integral means spectrum, and we show that a natural path to access it goes through the introduction of the complex generalized integral means spectrum, which is obtained via the so-called Liouville quantum gravity.
\item[(2)] The case of symmetric L\'{e}vy processes for which we generalize results by Loutsenko and Yermolayeva (\cite{LY4,LY1,LY2,LY3}).
\end{itemize}
\end{abstract}
\maketitle
\centerline{\emph{Dedicated to the Memory of Krzysztof Gawedzki}}
\section{Introduction}
{\sl ``Il apparut que, entre deux v\'erit\'es du domaine r\'eel, le chemin le plus facile et le plus court passe bien souvent par le domaine complexe.''[It came to appear that, between two truths of the real domain, the easiest and shortest path quite often passes through the complex domain] (Paul Painlev\'e, 1900) \cite{Painleve1900}.}\\

More than two decades ago, Oded Schramm \cite{OS} introduced his celebrated theory of random growth processes SLE$_\kappa$. As an example, in the so-called chordal case in the half-plane $\mathbb H$, it consists of the one-parameter family of Loewner processes driven on the real line $\partial \mathbb H$ by $\sqrt{\kappa}B_t$, where $\kappa$ is a nonnegative number and $B_t$ is  standard one-dimensional Brownian motion. This is the unique family of random processes satisfying a certain Markov property with continuous driving function, that is symmetric with respect to the imaginary axis. This theory may be generalized along two directions:
\begin{enumerate}
\item One can drop symmetry with respect to the imaginary axis: one then considers SLE$_\kappa$ with a \emph{drift} term, e.g., the chordal Loewner process driven by a random function of the form
$$\lambda(t)=\sqrt{\kappa}B_t+at,$$
where $B_t$ is as before standard one-dimensional Brownian motion and $a\in \R$.
\item One can drop the continuity assumption while keeping symmetry: the process so obtained is Loewner evolution driven by a L\'evy process, called LLE (for L\'evy-Loewner evolution).
\end{enumerate}
Notice that the first class of continuous drifted processes coincides with the whole class of LLE processes with continuous trajectories. For $\kappa=0$, the Loewner process generated by $\lambda(t)=at$ becomes deterministic. Several deterministic chordal Loewner processes, driven by Lip-$1/2$ functions, were investigated in \cite{zbMATH02085869,zbMATH02199893,lind2005,lind2010}.

In this paper, we shall consider both extended classes in the \emph{whole-plane} case. In order to understand the multifractal spectra of these processes, such as their integral means spectra (\emph{ims}), and in the spirit of references  \cite{DNNZ,DHLZ,ho:tel-01581324,LY4,LY1,LY2,LY3}, we shall first investigate the cases for which the expected complex moments,  
$$\mathbb{E}\left[\vert f'(z)^p\vert \left \vert\left(\frac{z}{f(z)}\right)^q\right \vert\right],\,\, p,q \in \mathbb C,$$
 may be computed explicitly, to become  part of \emph{integrable probability}. Here $f$ stands for the time $0$  whole-plane map from $\mathbb D$ to the slit plane in the corresponding Loewner process. Note that \emph{complex} values of $(p,q)$ are considered here in the case of whole-plane SLE with drift. In agreement with the citation by P. Painlev\'e above,  the suggested passage by the complex plane will help us discover the precise form of the associated integral means spectrum in the case of SLE$_\kappa$ with drift, via its complex and generalized versions \cite{DHLZ}. 
 
In Section 2, we shall make use of the so-called {\it Liouville quantum gravity} and {\it Coulomb gas}  techniques, in the spirit of \cite{Duplantier00,PhysRevLett.89.264101,MR2112128,DMS14}, to (non-rigorously) derive the full complex generalized integral means spectrum $\beta_1(p,q;\kappa,a)$ of whole-plane SLE$_\kappa$ with drift $a$ and for $(p,q)\in \mathbb C^2$. Section 3 covers SLE integrable cases, which are rigorously solved on a two-dimensional sub-manifold of $\mathbb C^2$ which generalizes the integrable parabolae of \cite{DHLZ} and \cite{ho:tel-01581324},  and successfully compared with the previous claims. For the generalized spectrum of LLE processes studied in Section 4, we shall concentrate on $(p=2,q\in \mathbb R)$ integrable cases, which can be solved analytically by closing some recursions between Fourier modes, in the spirit of \cite{LY3}. 
 
The remainder of the present introductory Section 1 is devoted to providing precise definitions, and as a warm-up, to computing the complex generalized spectrum of the logarithmic spiral, in the first $(p,q)\in \mathbb C^2$, $\kappa=0,a\neq 0$ non-trivial case.
 
\subsection{Interior whole-plane SLE}
% \subsubsection{Definitions}
SLE is a particular case of a growth process called the Loewner process, of which several variants exist, known as {\it chordal}, {\it radial}, {\it dipolar}, or {\it whole-plane} \cite{lawler,beliaev2019}. In this work we will consider the {\it interior whole-plane} case, which is determined by a {\it driving function} $\lambda:[0,+\infty)\to \partial \mathbb D:=\{z\in \mathbb C:\vert z\vert=1\}$ obtained as follows.
Let us start by defining $\gamma :\,[0,+\infty)\to \mathbb C$ to be a %Jordan arc joining $\gamma(0)$ to $\infty$ and not containing the origin $0$.
continuous 
 function such that $\lim_{t\to+\infty} \vert \gamma (t)\vert=+\infty$ and $\gamma (t)\neq 0,\forall t\geq 0$. 
Then, for each $t>0$, the slit domain $\Omega_t=\mathbb C\backslash\gamma ([t,\infty))$ is a simply connected domain containing $0$. By the Riemann Mapping Theorem, there  exists a unique conformal map $f_t:\mathbb D\to\Omega_t$  such that $f_t(0)=0$ and $f_t'(0)>0$. By the Caratheodory convergence theorem, $f_t$ converges to $f_0$, the Riemann mapping of $\Omega_0$, as $t\to 0$. We may assume without loss of generality that $f_0'(0)=1$ and, by re-parametrizing the curve if necessary, choose the normalization $f_t'(0)=e^t$. Loewner's  theorem asserts that there exists a continuous function $\lambda$ taking values in the unit circle such that
\begin{equation}
\label{eq:Loewner}
\frac{\partial}{\partial t} f_t(z)=z\frac{\partial}{\partial z} f_t(z)\frac{\lambda(t)+z}{\lambda(t)-z},\,\,\,\,\lim\limits_{t\rightarrow+\infty} f_t(e^{-t}z)=z,\forall z \in \mathbb{D}.
\end{equation}
%\begin{equation}
%\begin{eqnarray}\label{eq:Loewner}
%&&\frac{\partial}{\partial t} f_t(z)=z\frac{\partial}{\partial z} f_t(z)\frac{\lambda(t)+z}{\lambda(t)-z},\\ \nonumber &&\lim\limits_{t\rightarrow+\infty} f_t(e^{-t}z)=z,\forall z \in \mathbb{D}.
%\end{eqnarray}
%\end{equation}
The Loewner method can be reversed: given a continuous function $\lambda:[0,+\infty)\rightarrow \partial \bb D$, the partial differential equation \eqref{eq:Loewner} has a unique solution $f_t(z)$, which is a conformal map from $\mathbb{D}$ onto a domain $\Omega_t$, and the corresponding family $(\Omega_t)_t$ is increasing in $t$. Nevertheless the domains $\Omega_t$ need not be slit domains as in the example above.

Whole-plane SLE$_\kappa$ is the process driven by  
$$\lambda(t)=e^{i\sqrt{\kappa}B_t},$$
where $\kappa\in[0,+\infty)$ and $B_t$ is standard one-dimensional Brownian motion. Note that when $\kappa=0$, $f_t(z)=\frac{e^tz}{(1-z)^2}$ is the solution to \eqref{eq:Loewner}, so that $f_0$ is the Koebe function.  Thus, as $\kappa \to 0^+$, whole-plane SLE$_\kappa$ may be seen as a stochastic perturbation of the Koebe map. 

In this work, we generalize SLE by adding a drift term to Brownian motion, with a driving function defined as 
\begin{equation}\label{driving}\lambda(t):=e^{i(\sqrt{\kappa}B_t+at)}, a\in \mathbb R.
\end{equation}
The process driven by $\lambda(t)$ then appears for small $\kappa$ as a stochastic perturbation of the $(\kappa=0,\,\, a\neq 0)$ case of the \emph{logarithmic spiral}. 
%The case $\kappa=0$, as we shall see later, corresponds to a logarithmic spiral and the process driven by $\lambda(t)=at+\sqrt{\kappa}B_t$  appears then as a stochastic perturbation of the logarithmic spiral.
\subsection{Complex generalized integral means spectrum}
Let $f$ be a conformal map from $\mathbb{D}$ to $\mathbb{C}$ with $f(0)=0,f'(0)=1$. The generalized integral means spectrum of $f$ was originally defined in \cite{DHLZ} as follows: for any pair of real numbers $(p,q)$, define the integral moments, for $r\in [0,1)$,
\begin{equation}\label{moments}
M_f(p,q):=\int_{0}^{2\pi}r^q\frac{\vert f'(re^{i\theta})\vert^p}{\vert f(re^{i\theta})\vert^q}d\theta,\,\,r\in [0,1).
\end{equation}
The {\it generalized integral means spectrum is then defined as 
$$\beta_f(p,q):=\limsup_{r\to 1^{-}}\left[\log M_f(p,q)/\log \left((1-r)^{-1}\right)\right].$$
%These integrals usually blow up as $r\to 1$, following a power law, i.e. there exists a real number $\beta(p,q)$ such that
If the limit exists, then 
%\begin{equation}\int_{0}^{2\pi}r^q\frac{\vert f'(re^{i\theta})\vert^p}{\vert f(re^{i\theta})\vert^q}d\theta
\begin{equation}M_f(p,q)\stackrel{\cdot}{\sim}\left(1-r\right)^{-\beta_f(p,q)}, \,\,\,r\to 1^-,
\label{defbeta}
\end{equation}
where the notation `$\stackrel{\cdot}{\sim}$' between two quantities stands for the equivalence of the logarithms of these  quantities (i.e., their ratio tends to 1) \cite{DHLZ}.} % We call this function $\beta$ the {\it generalized integral means spectrum} of $f$.

One recovers for $q=0$ the \emph{standard integral means spectrum}, $\beta_f(p):=\beta_f(p,q=0)$, which is related by various Legendre transformations to the so-called \emph{multifractal spectra} \cite{Mand,HentschelP,FP,1986PhRvA..33.1141H,PhysRevA.34.1601}, like those governing the moments of the harmonic measure or the continuum of its local singularities \cite{Makanaliz,MR2450237}. 

 For a random simply connected domain as arising from a whole-plane Loewner process with a random driving function like SLE, the question whether the equivalence \eqref{defbeta}
  holds \emph{almost surely} is notoriously difficult. Earlier works dealt with the `expected spectrum' for Brownian motion \cite{0965.60071,1999PhRvL..82..880D}, self-avoiding walk \cite{1999PhRvL..82..880D}, percolation \cite{1999PhRvL..82.3940D,2008PhRvL.101n4102A}, and SLE \cite{Duplantier00,Duplantier03,PhysRevLett.88.055506,MR2112128,DupLH,PhysRevLett.95.170602,2007JPhA...40.2165R,BS,DNNZ,MR3638311} as well as with the expected generalized spectrum of whole-plane SLE \cite{DHLZ}. The almost sure case was solved only recently for the standard spectrum of chordal SLE by Gwynne, Miller and Sun \cite{gwynne2018} by using the so-called {\it imaginary geometry} of Miller and Sheffield. (See also the earlier works  \cite{0911.3983} for the SLE a.s.~tip spectrum,  \cite{Alberts2016} for the SLE a.s.~boundary spectrum, and the recent work \cite{Schoug20} for the SLE$_\kappa(\rho)$ a.s.~boundary spectrum, where imaginary geometry was also used.) 
  
The case of complex moments corresponds to the mixed multifractal spectrum of the harmonic measure and logarithmic rotations of the conformal map \cite{dis}. It was studied in expectation in Refs. \cite{PhysRevLett.89.264101,2008NuPhB.802..494D,1751-8121-41-28-285006} for the chordal and radial SLE cases. We shall consider here the whole-plane spectrum defined in expectation for complex moments,
 \begin{equation}\label{eq:cgims}
\int_{r\partial\mathbb{D}}\mathbb{E}\left[\vert f'(z)^p\vert\left \vert\left(\frac{z}{f(z)}\right)^q\right \vert\right] |dz|%\mathbb{E}\left[\vert z\vert^q\frac{\vert f'(z)\vert^p}{\vert f(z)\vert^q}\right]\vert dz\vert 
\stackrel{\cdot}{\sim} \left(\frac{1}{1-r}\right)^{\beta(p,q)},\,\,\,p,q\in \mathbb C.
\end{equation}
%As a corollary
It is then natural to introduce the one-point function
 \begin{equation}\label{eq:G}
 G(z):=\mathbb{E}\left[\vert f'(z)^p\vert\left \vert\left(\frac{z}{f(z)}\right)^q\right \vert\right],\,\,\,p,q\in \mathbb C.
 %\mathbb{E}\left[\vert f'(z)\vert^p\vert\frac{z}{f(z)}\vert^q\right]=\mathbb{E}\left[\vert f'(z)^p\vert\vert\left(\frac{z}{f(z)}\right)^q\vert\right].
 \end{equation}
The setting chosen in \eqref{eq:cgims} and \eqref{eq:G} allows for complex values $p,q \in \mathbb C$,  which we shall need to study the drift case. In the more general case of L\'evy processes,  we shall see that their defining properties  are exactly those needed to obtain a PDE satisfied by $G$ \eqref{eq:G}, as initiated in Refs. \cite{PhysRevLett.88.055506,BS} and further developed in Refs. \cite{DNNZ,MR3638311,DHLZ} and \cite{LY4,LY1,LY2,LY3}.

\subsection{Interior-exterior duality}
As mentioned in \cite{DHLZ}, it is interesting to remark that the map $\widehat f$, $$\zeta\in\mathbb C\setminus \overline{\mathbb D} \mapsto \widehat f (\zeta):=1/f(1/\zeta),$$  is just the {\it exterior} whole-plane map from $\mathbb C\setminus\overline{\mathbb D}$ to the slit plane considered in Ref. \cite{BS} by Beliaev and Smirnov and in Ref. \cite{MR3638311}. We identically have for $0<r<1$ and $p\in \mathbb R$,
\begin{equation}\label{q2p}
\int_{r^{-1}\partial\mathbb D} \mathbb E\left(\vert \widehat f'(\zeta)\vert^p \right)|d\zeta|=r^{2p-2}\int_{r\partial\mathbb D} \mathbb E\left(\frac{\vert f'(z)\vert^p}{\vert f(z)\vert^{2p}}\right)|dz|. 
\end{equation} 
We thus see that the standard integral mean of order $(p,q=0)$ for the exterior whole-plane map studied in  \cite{BS,MR3638311} {\it coincides} (up to an irrelevant power of $r$) with the $(p,q')$ integral mean  for $q'=2p$, for the interior whole-plane map.   
\begin{rkk}\label{qq'duality} \emph{Interior-Exterior Duality}. By conformal inversion, we have for any $p,q\in \mathbb C$,
\begin{equation}\label{qq'}
\int_{r^{-1}\partial\mathbb D} \mathbb E\left(\left\vert \frac{\widehat f'(\zeta)^p}{\widehat  f(\zeta)^{q}}\right\vert \right)|d\zeta|=r^{2\Re p-2}\int_{r\partial\mathbb D} \mathbb E\left(\left\vert\frac{f'(z)^p}{f(z)^{2p-q}}\right\vert \right)|dz|, 
\end{equation} 
so that the $(p,q)$ exterior integral means spectrum coincides with the $(p,q')$ interior integral means spectrum for $q'-p=p-q$. In particular, the $(p\in \mathbb R,q'=0)$ interior derivative moments studied in Ref. \cite{DNNZ} correspond to the $(p, q=2p)$ mixed moments of the exterior map. 
\end{rkk}
%In the last equation we have written $G$ in two manners: it happens that in order to deal with the drift case we will need to take complex values for $p,q$ and the second expression of $G$ is the one that extends to this case
\subsection{Generalized spectrum for the logarithmic spiral}
%\subsection{Generalized spectrum for the logarithmic spiral}
In this section we give an example of a generalized integral means spectrum, which is deterministic and corresponds to the $\kappa=0$ case of the drifted SLE$_\kappa$. It is nothing but the logarithmic spiral with parameter $a\in \mathbb{R}$ (Fig. \ref{fig:spiral}), i.e., the curve parametrized by 
\begin{equation}\label{spiral}
\gamma(t)=\exp\left[(1+ia)t\right],\,\,\,t\in \mathbb R.
\end{equation}
\begin{figure}[htbp!]
\begin{center}
\includegraphics[width=7cm]{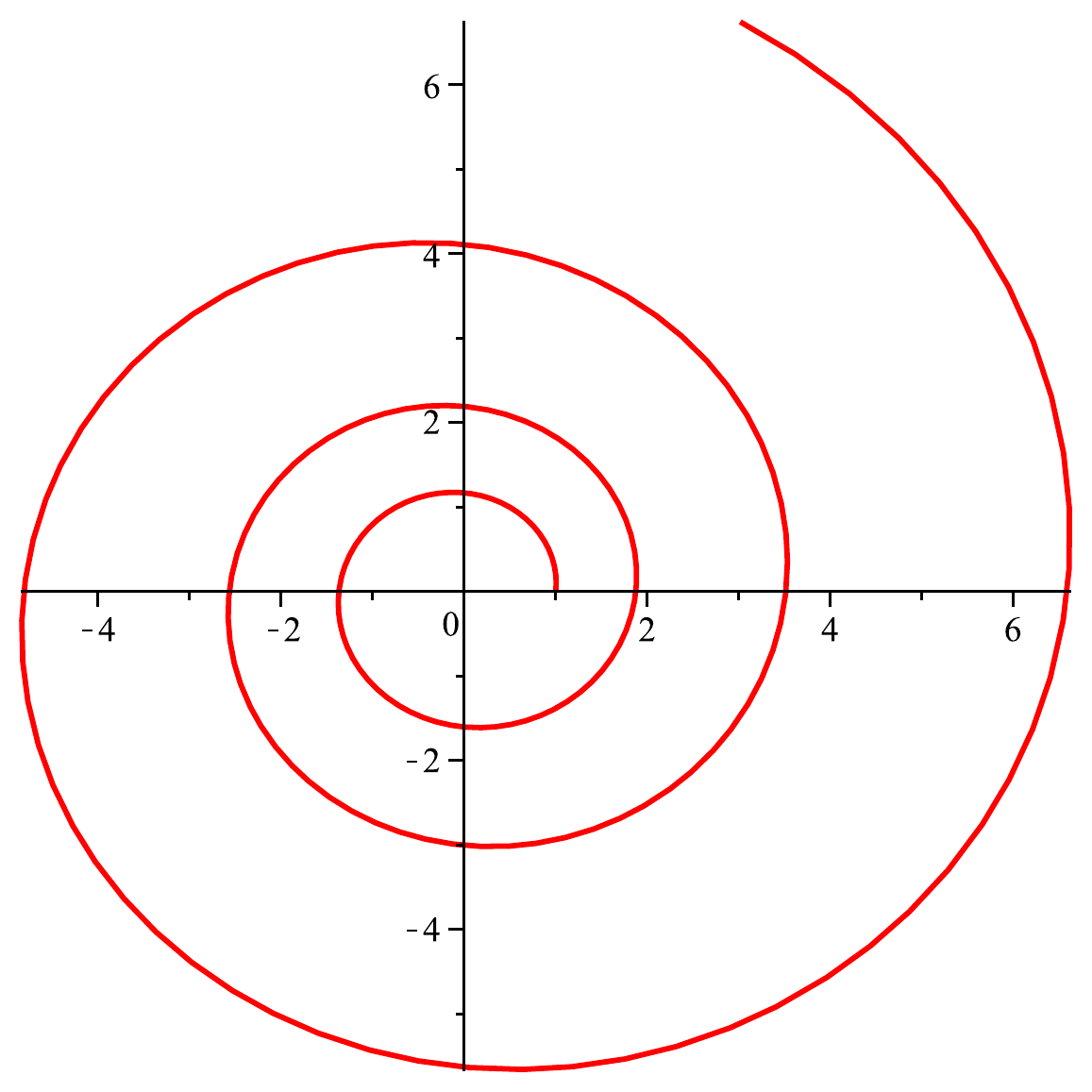}
\end{center}
\caption{Logarithmic spiral $\gamma(t)$ \eqref{spiral} for $a=+5$, restricted to $t\geq 0$.}
\label{fig:spiral}
\end{figure}
\subsubsection{Loewner process for the logarithmic spiral}
Let us define, as before, $\Omega_t:=\mathbb{C}\setminus \gamma[t,+\infty)$  and let $f_t:\mathbb{D}\rightarrow \Omega_t$  be the associated Riemann map, i.e., the conformal map such that 
\begin{equation}
f_t(0)=0,f_t'(0)>0.
\end{equation}
%It is clear that $f_t(z)=e^{(1+ia)t}f_0(e^{-iat}z)$ and  since $f_t(e^{iat})=\gamma(t)$, we  see that  the growing process $(\Omega_t)_{t\geq 0}$ is driven by $\lambda(t)=e^{iat}$. We thus obtain what we announced in the motivation part. The case $\kappa=0$ in the generalized $SLE_\kappa$ process corresponds to a logarithmic spiral.\\
%We now turn to the generalized integral-means spectrum: we define the three quantities
 %\begin{align}
 %\beta_{tip} &=-p-1,\\
 %\beta_0&=0,\\
 %\beta_1&=p+\frac{2(p-q)}{1+a^2}-1.
 %\end{align}
By the Koebe distorsion theorem, $\lim_{t\rightarrow-\infty}f_t'(0)=0$ and $\lim_{t\rightarrow+\infty}f_t'(0)=+\infty$. Then, there exists $t_0$ such that $f_{t_0}'(0)=1$. One also has that $f_{t_0}(e^{i\theta_0})=\gamma(t_0)=e^{(1+ia)t_0}$ for some $\theta_0\in[0,2\pi).$ Consider now the function $\tilde f_t$ defined by $$\tilde{f}_t(z):=e^{(1+ia)t}f_{t_0}(e^{-iat}z).$$ We have
 $\tilde{f}_{t}(0)=0, \tilde{f}_{t}'(0)=e^t,$ and 
 \begin{equation} \label{tildef-gamma}
 \tilde{f}_t(e^{i(\theta_0+at)})=e^{(1+ia)(t+t_0)}=\gamma(t+t_0).
 \end{equation}
 Hence from \eqref{tildef-gamma}, $\tilde{f}_{t}:\mathbb{D}\rightarrow \Omega_{t+t_0}$ is the Loewner process corresponding  to the curve  
$$
\tilde{\gamma}(t):=\gamma(t+t_0),\,\,t\in\mathbb{R},%e^{(1+ia)(t+t_0)}, t\in\mathbb{R}.
$$
with  the associated driving function $\tilde\lambda(t):=e^{i(\theta_0+at)}$.

Define then the curve, $$\eta(t):=e^{-i\theta_0}\tilde{\gamma}(t),\,\,\,t\in\mathbb{R},$$ and the conformal map,   
$$h_t(z):=e^{-i\theta_0}\tilde{f}_t(e^{i\theta_0}z).$$ One still has  $h_{t}(0)=0, h_{t}'(0)=e^t,$ and $h_t(e^{iat})=\eta(t)$, so that $h_t$ is the Loewner map corresponding to $\eta(t)$ and the associated process is driven by $\lambda(t):=e^{iat}$.\\
Notice that the curve $\eta$ is obtained by a time-translation and a rotation of the logarithmic spiral $\gamma$. Thus the integral means spectrum will be the same for the time zero Loewner maps $h_0$ and $\tilde f_0$.

\subsubsection{Complex generalized spectrum for the complete logarithmic spiral} We first focus on the \emph{complete} spiral, for which we first establish the following theorem.
\begin{theo}\label{lem:spiralc}
The complex generalized integral means spectrum of  the \emph{complete} logarithmic spiral $\gamma(t)=e^{(1+ia)t},\,t\in \mathbb R$,  is given, for $p,q\in \mathbb C$, by
\begin{eqnarray}\label{betaspiralcomp}  \beta(p,q;a)=
\sup\left\{0, \beta_1(p,q;0,a),\beta_2(p,q;0,a)\right\},            
\end{eqnarray}
  where 
 \begin{equation}\begin{split}\label{beta1spiralc}
& \beta_1(p,q;0,a):=2\Re{\left(\frac{p-q}{1-ia}\right)}+\Re{p}-1,\\ %\label{beta2spiralc}
 &\beta_2(p,q;0,a):=-2\Re{\left(\frac{p-q}{1-ia}\right)}+\Re{p}-1.
 \end{split}
  \end{equation}
 \end{theo}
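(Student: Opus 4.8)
The plan is to uniformize the spiral complement with an explicit map built from the complex logarithm, to compute the complex moment density in closed form, and then to read off the spectrum from the two boundary singularities it produces.

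First I would identify and uniformize the target domain. Since $\gamma(t)=e^{(1+ia)t}$ accumulates at $0$ as $t\to-\infty$ and at $\infty$ as $t\to+\infty$, the relevant simply connected domain is $\Omega:=\C\setminus(\gamma(\R)\cup\{0\})$. Setting $\zeta=\frac{1}{1+ia}\log z$, the spiral lifts to the real axis, and one full turn around the origin becomes the translation $\zeta\mapsto\zeta+\frac{2\pi i}{1+ia}$, whose imaginary part is $h:=\frac{2\pi}{1+a^2}$. Hence $z=e^{(1+ia)\zeta}$ maps the horizontal strip $S:=\{0<\Im\zeta<h\}$ conformally onto $\Omega$. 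Composing with an explicit map $\phi:\D\to S$, for instance $\phi(z)=\frac{h}{\pi}\log\!\big(i\frac{1-z}{1+z}\big)$, gives a conformal map $f=\exp\big((1+ia)\phi\big)$ of $\D$ onto $\Omega$, with $\phi(1)=-\infty$ and $\phi(-1)=+\infty$, so that the two prime ends $z=1$ and $z=-1$ correspond respectively to $f\to0$ and $f\to\infty$.

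Next I would compute the moment density in closed form. From $f'/f=(1+ia)\phi'$ one gets $f'^{\,p}f^{-q}=(1+ia)^p\,\phi'^{\,p}\,e^{(1+ia)(p-q)\phi}$. Using $\phi'(z)=-\frac{2h}{\pi(1-z^2)}$ and $e^{(1+ia)(p-q)\phi}=\big(i\frac{1-z}{1+z}\big)^{\alpha}$ with $\alpha:=\frac{(1+ia)(p-q)h}{\pi}=\frac{2(p-q)}{1-ia}$, this collapses to $f'^{\,p}f^{-q}=C\,(1-z)^{\alpha-p}(1+z)^{-\alpha-p}$ for a constant $C=C(p,q,a)$. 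Taking moduli on $|z|=r$, noting that $|z^q|$ contributes only $r^{\Re q}\to1$ and that the phase factors $e^{-\Im(\cdot)\arg(1\pm z)}$ stay bounded (as $\arg(1\pm z)\in(-\tfrac\pi2,\tfrac\pi2)$ there), the integral means reduce, up to bounded factors, to
\[
M_f(p,q)\ \stackrel{\cdot}{\sim}\ \int_0^{2\pi}\big|1-re^{i\theta}\big|^{\Re(\alpha-p)}\,\big|1+re^{i\theta}\big|^{-\Re(\alpha+p)}\,d\theta .
\]
I would then invoke the standard boundary asymptotics $\int_0^{2\pi}|1-re^{i\theta}|^{-s}\,d\theta\stackrel{\cdot}{\sim}(1-r)^{-\max(s-1,0)}$. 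The singularity at $z=1$ has $s=\Re(p-\alpha)$ and contributes the exponent $\max(0,\Re p-\Re\alpha-1)=\max(0,\beta_2)$, while the one at $z=-1$ has $s=\Re(p+\alpha)$ and contributes $\max(0,\Re p+\Re\alpha-1)=\max(0,\beta_1)$. Since $\Re\alpha=2\Re\frac{p-q}{1-ia}$, these are exactly $\beta_2(p,q;0,a)$ and $\beta_1(p,q;0,a)$; as the two singular prime ends are separated, the integral grows like the larger contribution, so $\beta(p,q;a)=\sup\{0,\beta_1,\beta_2\}$.

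I expect the main obstacle to be conceptual rather than computational: justifying the uniformization of the \emph{complete} spiral, whose complement is not open in $\C$ and for which $0$ lies on the boundary rather than in the image, so that the normalization $f(0)=0$ of \eqref{moments} cannot literally hold. I would resolve this by observing that the generalized spectrum is insensitive to normalization — replacing $f$ by $cf$ multiplies the density by the constant $|c|^{\Re(p-q)}$, and pre-composing with a disk automorphism merely relocates the two singular prime ends without altering their exponents — so the explicit strip map computes the correct exponent. A secondary point requiring care is the uniform control of the bounded phase factors, ensuring that only the real parts $\Re(\alpha\mp p)$ enter the final asymptotics.
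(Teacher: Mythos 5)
Your proposal is correct and takes essentially the same route as the paper: your map $f=\exp\bigl((1+ia)\phi\bigr)$ is exactly the paper's uniformizer $\Phi(z)=\bigl(i\tfrac{1-z}{1+z}\bigr)^{2(1+ia)/(1+a^2)}$, and your reduction of the moment density to $|1-z|^{\Re(\alpha-p)}|1+z|^{-\Re(\alpha+p)}$ up to bounded phase factors, followed by the circle-integral asymptotics at the two antipodal singular points $z=\pm 1$, is precisely how the paper obtains $\beta_1$, $\beta_2$ and the supremum with $0$. Your closing remark on insensitivity to normalization is a harmless clarification that the paper leaves implicit.
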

\begin{proof}
Let us define on the unit disk $\mathbb D$, the Moebius map $\xi: z\mapsto \xi(z):=i\frac{1-z}{1+z}$, and consider the function $\Phi$  defined on $\mathbb D$ as, 
$$
\Phi(z):=\exp\left[\frac{2}{1-ia}\log \xi(z)\right]=\left(i\frac{1-z}{1+z}\right)^{\frac{2(1+ia)}{1+a^2}},\,\,\, z\in \mathbb D.
$$
Define also the strip domain ${\mathbb S}_{\pi}:=\{x+iy: x\in \mathbb R,\,\, 0<y<\pi\}$. We know that $z\mapsto \xi(z)$  conformally maps $\mathbb D$ onto upper half-plane $\mathbb H$, 
while $z\mapsto \log(z)$  conformally maps $\mathbb H$ onto the strip ${\mathbb S}_\pi$. Lastly, $z\mapsto \exp\left(\frac{2}{1-ia}z\right)$ conformally maps the strip domain ${\mathbb S}_{\pi}$ onto $\mathbb C\setminus \gamma$, with a cut along the whole logarithmic spiral $\gamma:=\{\gamma(t)=e^{(1+ia)t}, t\in \R\}$. 
Consequently, $\Phi$ is a conformal map from the unit disk $\mathbb D$ to the complement of the \emph{whole} logarithmic spiral $\gamma$, with $\Phi(1)=0, \Phi(-1)=\infty$. 

It enjoys the useful property,
\begin{equation}\label{eq:derivPhi}
\Phi'(z)=\frac{2}{1-ia}\big(\log \xi(z)\big)'\Phi(z)=-\frac{4}{1-ia}\frac{\Phi(z)}{1-z^2}.
\end{equation}
%Moreover, $$\Phi'(z)=\Phi(z)\frac{2(1+ia)}{1+a^2}\frac{1+z}{i(1-z)}i\frac{-2}{(1+z)^2}=(i\frac{1-z}{1+z})^{\frac{2(1+ia)}{1+a^2}}\frac{2(1+ia)}{1+a^2}\frac{-2}{(1-z)(1+z)}.$$
Owing to \eqref{eq:derivPhi},the complex mixed moments of $\Phi$ read
\begin{equation}\label{eq:moments}
  \frac{\Phi'(z)^p}{\Phi(z)^q}=\left(-\frac{4}{1-ia}\frac{1}{1-z^2}\right)^p\, \frac{1}{\Phi(z)^{q-p}}, 
  \end{equation}
so that 
% \begin{equation}\label{eq:modmoments}
%  \left\vert \frac{\Phi'(z)^p}{\Phi(z)^q}\right\vert=\left\vert\left(-\frac{4}{1-ia}\right)^p\right\vert\left\vert\frac{1}{(1-z^2)^p}\right\vert\, \left\vert\frac{1}{\Phi(z)^{q-p}}\right\vert. 
 % \end{equation}
   \begin{equation}\label{eq:modmoments}
  \left\vert\frac{\Phi'(z)^p}{\Phi(z)^q}\right\vert= \left\vert\left(-\frac{4}{1-ia}\right)^p\right\vert\times \frac{\left\vert\Phi(z)^{p-q}\right\vert}{\left\vert(1-z^2)^p\right\vert}. 
  \end{equation}
We have explicitly 
  \begin{equation}\label{eq:1-z2}
\left\vert(1-z^2)^p\right\vert=\left\vert1-z\right\vert^{\Re p}\left\vert1+z\right\vert^{\Re p} e^{-\Im p\,\arg(1-z^2)},
 \end{equation}
 and 
 \begin{equation}\label{phipq}
 \left\vert\Phi(z)^{p-q}\right\vert=\exp\Re\left[\frac{2(p-q)}{1-ia}\log \xi(z)\right].
  \end{equation}
 Setting $b=b(p,q):=\frac{2(p-q)}{1-ia}$, we have $\Re\left[b\log \xi(z)\right]=\Re b\, \log |\xi(z)|-\Im b\, \Im \log \xi(z)$, and since $\log \xi(z)\in {\mathbb S}_\pi$, its imaginary part stays \emph{bounded}. 
 We thus have the following (logarithmic) equivalence near the two possible singular points $z=\pm1$,
  \begin{equation}\label{eq:phipqlog}
 \left\vert\Phi(z)^{p-q}\right\vert \stackrel{\cdot}{\sim} |\xi(z)|^{\Re b(p,q)}= \left\vert\frac{1-z}{1+z}\right\vert^{\Re \frac{2(p-q)}{1-ia}}.
  \end{equation}
 Using \eqref{eq:modmoments}, \eqref{eq:1-z2}, and \eqref{eq:phipqlog}, we finally arrive at 
   \begin{equation}\label{eq:asymp}
  \left\vert\frac{\Phi'(z)^p}{\Phi(z)^q}\right\vert \stackrel{\cdot}{\sim} \left\vert\frac{1-z}{1+z}\right\vert^{\Re \frac{2(p-q)}{1-ia}} \left\vert1-z\right\vert^{-\Re p}\left\vert1+z\right\vert^{-\Re p}.
  \end{equation}
\emph{Behaviour near infinity and near the origin.}  For $z=re^{i\theta}$ near $z=-1$ (point at $\infty$ on the spiral), $|1+z|^2= r^2+2r\cos\theta+1$ behaves like $(1-r)^2+(\pi-\theta)^{2}$. Similarly, near $z=+1$ (point $0$ on the spiral),
 $|1-z|^2$ behaves like $(1-r)^2+\theta^{2}$. The integral of \eqref{eq:asymp} along the circle $|z|=r$ for $r\to 1^-$ 
 is thus dominated near $z=-1$  by the contribution of the angular neighbourhood of $\theta=\pi$, while  near $z=+1$ it is symmetrically dominated by that  of the angular neighbourhood of $\theta=0$. From the explicit form of the integrand \eqref{eq:asymp}, we readily obtain  the overall asymptotic behaviour as $r\to 1^-$ of the integral means,
    \begin{equation}\label{eq:integral}
 \int_{r\partial \mathbb D} \left\vert\frac{\Phi'(z)^p}{\Phi(z)^q}\right\vert |dz| \stackrel{\cdot}{\sim} \left(1-r\right)^{-\beta_{\Phi}(p,q;a)},\,\,\, r\to 1^-,%=\left\vert\frac{1-z}{1+z}\right\vert^{\Re \frac{2(p-q)}{1-ia}} \left\vert1-z\right\vert^{-\Re p}\left\vert1+z\right\vert^{-\Re p}.
  \end{equation}
where the integral means spectrum $\beta_{\Phi}$ is given by the largest exponent, 
  \begin{eqnarray}\label{max12}
\beta_{\Phi}(p,q;a):=\beta_1(p,q;0,a)\vee \beta_2(p,q;0,a)\vee 0,
\end{eqnarray}  with the two dual spectra defined as,
\begin{eqnarray}
  \label{beta1.0c}
 && \beta_1(p,q;0,a):=2\Re \frac{p-q}{1-ia}+\Re p-1,\\ \label{beta2.0c}
 && \beta_2(p,q;0,a):=2\Re \frac{q-p}{1-ia}+\Re p-1.
 \end{eqnarray}
 \end{proof}
\begin{rkk} \label{rk:inf-tip} \emph{Singularity localization.}
Exponent $\beta_1$ is associated with the singularity near $z=-1$ on $\mathbb D$ in \eqref{eq:asymp}, i.e., at \emph{infinity} on the spiral, while $\beta_2$ corresponds to that near $z=+1$, i.e., 
near the tip at origin $0$, around which the spiral \emph{indefinitely winds}. 
\end{rkk}
\begin{rkk} \emph{Conformal invariance by inversion and duality.}
The full logarithmic spiral is conformally invariant under the complex inversion, $z\mapsto 1/z$, since $1/\gamma(t)=\gamma(-t)$, and $t\in \mathbb R$.  This inversion exchanges the roles of origin and infinity, and maps 
the interior of $\mathbb D$ to its exterior. The complex generalized integral means spectrum then obeys the \emph{duality property} \eqref{qq'duality}.  Spectra  \eqref{beta1.0c} and \eqref{beta2.0c} are indeed dual of each other under the corresponding exchange $q-p\mapsto p-q$, resulting in the expected invariance under duality of the integral means spectrum $\beta_{\Phi}$ \eqref{max12} for the complete logarithmic spiral.  
\end{rkk}   
\subsubsection{Complex generalized spectrum of the half spiral} Consider now $h_0(z)$, the conformal map corresponding to the whole-plane Loewner process driven by 
$e^{iat}$, stopped at time $t=0$, the image of which, $\gamma(t)=e^{(1+ia)t},\, t\geq 0$,  we may call the \emph{half spiral} (Fig. \ref{fig:spiral}). The complex generalized integral means spectrum of the half logarithmic spiral is given by the following theorem. (See Fig. \ref{fig:spiralp-q}.)
\begin{theo}\label{theo:spiralc} The \emph{complex} generalized integral means spectrum of $h_0$, where $h_t$ is the whole-plane Loewner process driven by $\lambda(t)=e^{iat}$, and whose trace is 
the \emph{half} logarithmic spiral $\gamma(t)=e^{(1+ia)t},\, t\geq 0$,  is given, for $p,q\in \mathbb C$, by
 \begin{eqnarray}\label{betaspiral}  \beta(p,q;\kappa=0,a)=
  \sup\left\{-\Re{p}-1,0, \beta_1(p,q;0,a)=2\Re{\left(\frac{p-q}{1-ia}\right)}+\Re{p}-1\right\}.            
  \end{eqnarray}
  \end{theo}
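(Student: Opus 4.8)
The plan is to exploit the \emph{localization} of the integral means spectrum at the boundary singularities of the conformal map, exactly as in the proof of Theorem \ref{lem:spiralc}, together with the observation that the half spiral differs from the complete spiral only through the nature of its inner end. First I would identify the two special boundary points of $h_0$: the preimage $z_\infty\in\partial\mathbb D$ of the point at infinity (the tail of the spiral) and the preimage $z_1\in\partial\mathbb D$ of the tip $\gamma(0)=1$. Away from these two points the boundary arc $\gamma([0,\infty))$ is a simple analytic curve, so $h_0$ extends analytically across $\partial\mathbb D$ with non-vanishing derivative, and moreover $h_0\neq 0$ there since $0$ is an \emph{interior} point of $\Omega_0$. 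Consequently $|h_0'(z)^p/h_0(z)^q|$ stays bounded on the regular part of $\partial\mathbb D$ as $r\to1^-$, contributing only the trivial exponent $0$, and the spectrum is governed by the two neighbourhoods of $z_\infty$ and $z_1$.

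Near $z_\infty$ the half spiral and the complete spiral coincide: by the scale-covariance $\gamma(t+s)=e^{(1+ia)s}\gamma(t)$ of the logarithmic spiral, the germ of the boundary at infinity is the \emph{same} analytic spiral end in both cases. Hence $h_0$ inherits at $z_\infty$ the same leading singular behaviour that $\Phi$ has at $z=-1$, and running the computation \eqref{eq:asymp}--\eqref{eq:integral} in this single neighbourhood reproduces the infinity exponent $\beta_1(p,q;0,a)\vee 0$. This is precisely where the $a$-dependence and the $q$-dependence survive, since $h_0(z)\to\infty$ there.

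Near $z_1$ the situation differs essentially from the complete spiral: the half spiral terminates at the finite \emph{regular} point $\gamma(0)=1$ as a genuine slit tip (the curve being analytic with $\gamma'(0)=1+ia\neq 0$), rather than winding indefinitely around the origin. The Riemann map to a slit-tip neighbourhood folds the boundary, so $h_0(z)-1\asymp(z-z_1)^2$ and $h_0'(z)\asymp(z-z_1)$, while $h_0(z)\to 1\neq 0$ keeps $|h_0(z)^{-q}|$ bounded; thus $|h_0'(z)^p/h_0(z)^q|\stackrel{\cdot}{\sim}|z-z_1|^{\Re p}$ and \emph{all} $q$-dependence disappears. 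Writing $z=re^{i\theta}$ with $|z-z_1|^2\asymp(1-r)^2+(\theta-\theta_1)^2$ and applying the same scaling argument as at the end of the proof of Theorem \ref{lem:spiralc}, the tip contributes the exponent $-\Re p-1$ in the convergent regime $\Re p<-1$ and $0$ otherwise, i.e. $(-\Re p-1)\vee 0$. Taking the supremum of the infinity exponent $\beta_1$ and the tip exponents $-\Re p-1$ and $0$ then yields \eqref{betaspiral}.

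The main obstacle is the second step: unlike $\Phi$, the map $h_0$ admits no closed form, so transferring the delicate winding exponent $\beta_1$ from the explicit complete-spiral map $\Phi$ to $h_0$ demands a genuine conformal comparison of the two maps near their infinity ends, not a mere substitution into a formula. One must rule out that a bounded conformal correction could shift the exponent, which is most safely done by matching the germs of the two spiral ends at infinity and invoking the scale-covariance above. By contrast, the tip analysis is a routine square-root slit-tip estimate, and the regular-part bound is immediate.
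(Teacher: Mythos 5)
Your proposal is correct and follows essentially the same route as the paper: a three-way localization (neighbourhood of infinity, neighbourhood of the tip, regular bulk), with the infinity exponent $\beta_1$ transferred from the explicit complete-spiral map $\Phi$ of Theorem \ref{lem:spiralc}, the tip contributing $\sup\{-\Re p-1,0\}$, and the regular arc contributing $0$. The only cosmetic difference is at the tip, where the paper makes your square-root folding estimate $h_0(z)-1\asymp(z-z_1)^2$ precise via the factorization $h_0=g\circ\phi$ through the Koebe map, reducing the tip integral means to those of $\phi$; your germ-matching remark at infinity is exactly the conformal-comparison step the paper invokes implicitly when asserting that the half and whole spirals have the same spectrum there.
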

  From this, one immediately deduces the following corollary, which yields the \emph{real} generalized integral means spectrum of the half spiral. 
\begin{cor}\label{theo:spiral} The \emph{real} generalized integral means spectrum of $h_0$, where $h_t$ is the whole-plane Loewner process driven by $\lambda(t)=e^{iat}$, and whose trace is 
the \emph{half} logarithmic spiral $\gamma(t)=e^{(1+ia)t},\, t\geq 0$,  is given, for $p,q\in \mathbb R$, by
 \begin{eqnarray}\label{betaspiralreal}  \beta(p,q;\kappa=0,a)=
  \sup\left\{-p-1,0, \beta_1(p,q;0,a)=2\frac{p-q}{1+a^2}+p-1\right\}.            
 \end{eqnarray}
 \end{cor}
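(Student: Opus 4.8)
The plan is to derive Corollary~\ref{theo:spiral} as a direct specialization of Theorem~\ref{theo:spiralc} to real parameters $p,q\in\mathbb R$. Since the real case is a subset of the complex case, the corollary should follow immediately by substituting $\Re p = p$ and $\Re q = q$ into the complex formula~\eqref{betaspiral}. First I would observe that for real $p$, the term $-\Re p - 1$ becomes simply $-p-1$, which matches the first entry in the supremum of~\eqref{betaspiralreal}. The only nontrivial step is to verify that the exponent $\beta_1(p,q;0,a)$ simplifies correctly under the reality assumption.

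The key computation is to evaluate $\Re\!\left(\dfrac{p-q}{1-ia}\right)$ when $p-q$ is real. Multiplying numerator and denominator by the conjugate $1+ia$ gives
\begin{equation}
\frac{p-q}{1-ia}=\frac{(p-q)(1+ia)}{(1-ia)(1+ia)}=\frac{(p-q)(1+ia)}{1+a^2},
\end{equation}
so that, taking real parts with $p-q\in\mathbb R$,
\begin{equation}
\Re\!\left(\frac{p-q}{1-ia}\right)=\frac{p-q}{1+a^2}.
\end{equation}
Substituting this into the complex expression~\eqref{beta1spiralc}, I would obtain
\begin{equation}
\beta_1(p,q;0,a)=2\,\Re\!\left(\frac{p-q}{1-ia}\right)+\Re p-1=\frac{2(p-q)}{1+a^2}+p-1,
\end{equation}
which is exactly the third entry in the supremum of~\eqref{betaspiralreal}.

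Finally, I would assemble these substitutions: the supremum in~\eqref{betaspiral} taken over the three arguments $\{-\Re p-1,\,0,\,\beta_1(p,q;0,a)\}$ becomes, for real $p,q$, the supremum over $\{-p-1,\,0,\,\tfrac{2(p-q)}{1+a^2}+p-1\}$, establishing~\eqref{betaspiralreal}. The proof is therefore purely a matter of reducing the complex formula to its real form; there is no genuine obstacle, since the reality of $p$ and $q$ forces $p-q\in\mathbb R$ and collapses the real-part operations into ordinary real arithmetic. The only point demanding a moment of care is the algebraic manipulation of the complex denominator $1-ia$, but this is routine once the conjugation trick is applied.
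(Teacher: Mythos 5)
Your proposal is correct and follows exactly the paper's route: the paper itself deduces Corollary~\ref{theo:spiral} ``immediately'' from Theorem~\ref{theo:spiralc} by restricting to real $p,q$, which is precisely the specialization you carry out, with the computation $\Re\bigl((p-q)/(1-ia)\bigr)=(p-q)/(1+a^2)$ for real $p-q$ being the only step needed. Your write-up simply makes explicit the algebra the paper leaves to the reader.
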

This result for the real case, $p,q\in \mathbb R$,  is illustrated in Fig. \ref{fig:spiralp-q}.
\begin{figure}[tb]
\begin{center}
\includegraphics[width=9cm]{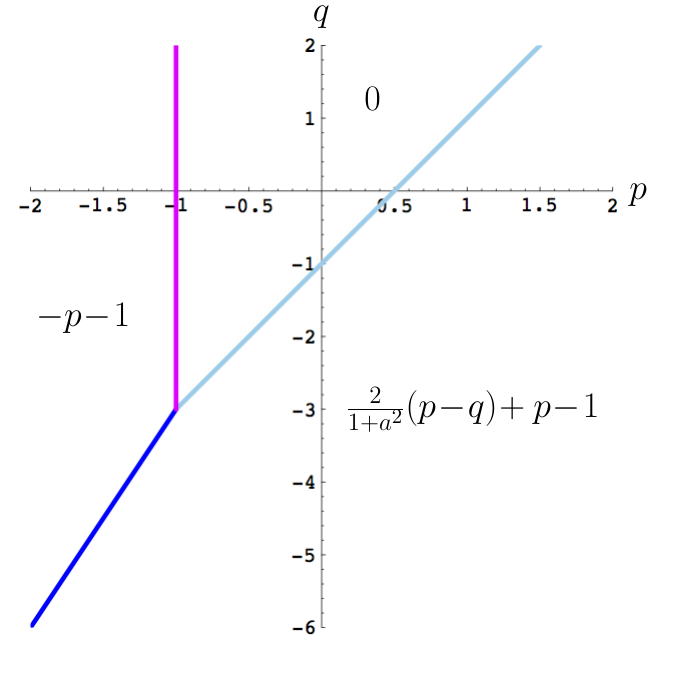}
\end{center}
\caption{The three phases of the generalized integral means spectrum of the logarithmic spiral, with $\beta_{\mathrm{tip}}(p;\kappa=0)=-p-1$, $\beta_0(p;\kappa=0)=0$, 
$\beta_1(p,q;\kappa=0,a)=\frac{2}{1+a^2}(p-q)+p-1$ (Corollary \ref{theo:spiral}).}
\label{fig:spiralp-q}
\end{figure}
\begin{proof}  
{\it $\bullet$ Behaviour near infinity.} For $t\geq 0$,  the half spiral and whole spiral are identical, thus 
have the same spectrum near infinity. So we use the conformal map $\Phi$ to calculate the integral means spectrum near $\infty$, i.e., by considering the mixed moments \eqref{eq:asymp} for $z\to -1$ only, 
as well as the corresponding contribution to integral \eqref{eq:integral}. Because of Remark \ref{rk:inf-tip}, the associated spectrum is $\beta_1$ \eqref{beta1.0c}.\\
 {\it $\bullet$ Behaviour near the tip.} 
  Let $\phi(z):=\frac{z}{(1-z)^2}, z\in \mathbb D$, with $\phi(-1)=-\frac{1}{4},\phi(0)=0,\phi(1)=\infty$, be the \emph{Koebe function}, conformally  mapping the unit disk to the straight cut plane as $\mathbb{D}\mapsto \mathbb{C}\setminus (-\infty,-\frac{1}{4}]$. Let $g$ be the conformal map from $\mathbb{C}\setminus (-\infty,-\frac{1}{4}]$ to the plane cut by the half spiral, $\Omega_0:=\mathbb{C}\setminus \gamma[0,\infty)$, with $g(0)=0,g(-\frac{1}{4})=\gamma(0)=1$. Then $h_0=g\circ \phi$. Notice that both $g$ and $g'$  are bounded near $\phi(-1)=-\frac{1}{4}$, hence also $h_0$ near $z=-1$.  
  Let us define  $r \partial \mathbb D_\varepsilon:= \{z: |z|=r, |1+z|< \varepsilon\}$, for some fixed $\varepsilon$ such that $1-r <\varepsilon<1$, as the neighbourhood along the circle $r\partial \mathbb D$ of
   the pre-image $z=-1$ by $h_0$ of the half spiral tip $\gamma(0)=1$. In this domain, we have the logarithmic equivalence, as $r \to 1^-$,  
$$\int_{r \partial \mathbb D_\varepsilon} \left\vert\frac{h_0'(z)^p}{h_0(z)^q}\right\vert|dz| \stackrel{\cdot}{\sim} \int_{r \partial \mathbb D_\varepsilon}|\phi'(z)^p| |dz|,\,\,\,r \to 1^-.$$  We thus obtain that the integral means spectrum near the tip of the half spiral is the same as the \emph{ims}  near the tip of the half line, which is simply, 
  \begin{equation}\label{eq:betatip0}
  \beta_{\mathrm{tip}}(p;\kappa=0):=-\Re p-1.
  \end{equation} 
{\it $\bullet$ Bulk behaviour.} Away from $\infty$ and the tip, the half spiral is rectifiable, and its bulk integral means spectrum is trivial, 
$\beta_0(p;\kappa=0)=0$. 
This ends the proof of Theorem \ref{theo:spiralc}.
\end{proof}

\section{Complex generalized spectrum of drifted whole-plane SLE}
\subsection{Introduction}In this section, we will predict the exact form of the generalized integral means spectrum $\beta_1(p,q;\kappa,a)$ associated with the whole-plane SLE$_\kappa$ with drift $a$. As we shall see, its most symmetric and simplest form is obtained for the complex generalized spectrum  where the exponents are complex variables $p,q\in \mathbb C$. We shall use a non-fully rigorous method inherited from theoretical physics. More specifically, we use two-dimensional quantum gravity where the Euclidean Lebesgue measure is replaced by the Liouville quantum measure. This allows us to compute multifractal exponents in Liouville quantum gravity (LQG) for $p\in \mathbb C$ in the $q=0$ case, and for any $a\in \mathbb R$.  The conversion to the complex multifractal spectrum in the Euclidean plane is then obtained by using the celebrated Knizhnik-Polyakov-Zamolodchikov (KPZ) relation \cite{MR947880,MR1005268,MR981529,2009arXiv0901.0277D,springerlink:10.1007/s00222-010-0308-1bis,rhodes-2008,PhysRevLett.107.131305,DMS14}. The final step to get the complex generalized spectrum for $q\neq 0$ is then obtained via the introduction of the \emph{packing spectrum, 
\begin{equation}\label{eq:packing}
s_1(p,q;\kappa,a):=\beta_1(p,q;\kappa,a)-\Re\, p+1,
\end{equation}
 together with the fact that it is a function of variable $p-q$ only.}
 
\begin{figure}[htbp!]
\begin{center}
\includegraphics[width=7.7329cm]{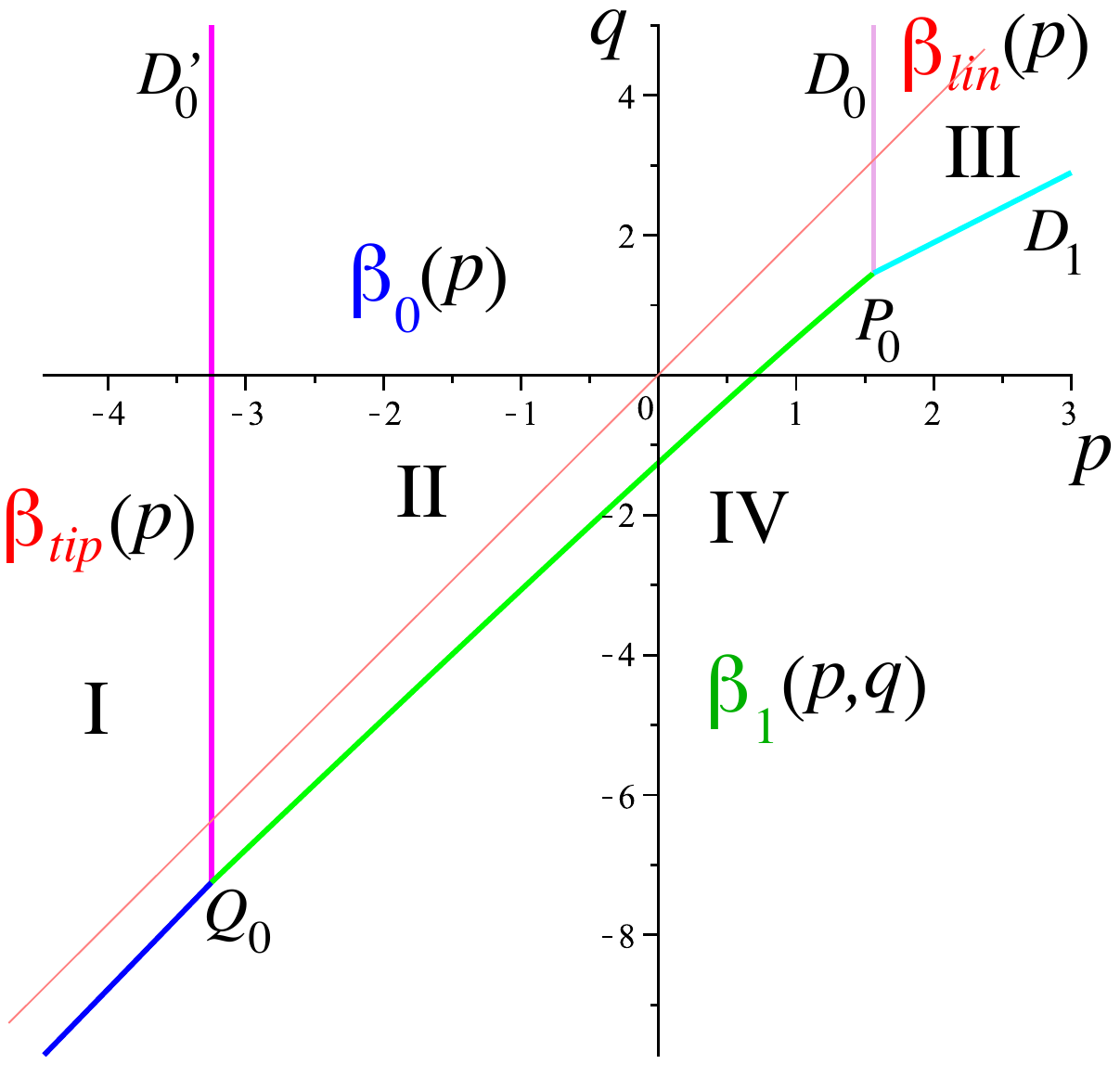}
\end{center}
\caption{Phase transition lines for the generalized integral means spectrum of whole-plane SLE$_\kappa$ with no drift $a=0$. The standard \emph{ims} of the $q=2p$ exterior version crosses phases I, II, III only, while the $q=0$ standard \emph{ims} of the interior version crosses phases  I, II, IV (from Ref. \cite{DHLZ}).}
\label{fig:separatrix}
\end{figure}
\subsection{Driftless and real case} Let us denote by $\beta(p,q;\kappa,a)$ the generalized integral means spectrum of the whole-plane SLE$_\kappa$ with drift coefficient $a$. 
Ref. \cite{DHLZ} studied the $a=0$ case and for $(p,q)\in \mathbb R^2$, for which it is shown that $\beta(p,q;\kappa,a=0)$ has four possible forms, of which three are independent of $q$,
  \begin{eqnarray}
 \beta_{\mathrm{tip}}(p;\kappa) &:=& -p-1+\frac 14\left(4+\kappa-\sqrt{(4+\kappa)^2-8\kappa p}\right),\\
 \beta_0(p;\kappa) &:=& -p+\frac{4+\kappa}{4\kappa}\left(4+\kappa-\sqrt{(4+\kappa)^2-8\kappa p}\right),\\
 \beta_{\mathrm{lin}}(p;\kappa) &:=& p-\frac{(4+\kappa)^2}{16\kappa},\\
 \beta_1(p,q;\kappa,a=0) &:=& p+2(p-q)-\frac 12-\frac 12\sqrt{1+2\kappa(p-q)}.
 \end{eqnarray} 
  The separatrices between the different phases are located as follows \cite[Theorem 1.7]{DHLZ} (See Fig. \ref{fig:separatrix}.) For $p\leq -1-\frac{3\kappa}{8}$ there is a (quartic) curve ending at point $Q_0: p_0'=-1-\frac{3\kappa}{8},q_0'=-2-\frac{7\kappa}{8}$, that separates the half-plane into two parts, $\beta$ being equal to $\beta_{\mathrm{tip}}$ above that curve and to $\beta_1$ below it. In the strip $-1-\frac{3\kappa}{8}\leq p\leq \frac{3(4+\kappa)^2}{32\kappa}$, there is a section of parabola joining $Q_0$ to point $P_0=(p_0,q_0)$, with
  \begin{equation}\label{eq:P0}
  p_0=\frac{3(4+\kappa)^2}{32\kappa},\,\,\,q_0=\frac{(4+\kappa)(8+\kappa)}{16\kappa},
  \end{equation} that separates the strip into two parts, an upper one where $\beta=\beta_0$ and a lower one where $\beta=\beta_1$. Finally the half-plane $p\geq p_0$ is similarly split by the half-line with unit slope starting at $P_0$ into an upper part where $\beta=\beta_{\mathrm{lin}}$, while $\beta=\beta_1$ in the lower part. It should be noticed that the generalized spectrum $\beta$ is not everywhere the maximum of the four spectra listed above \cite{DHLZ}.
  \begin{figure}[!htb]
\begin{center}
\includegraphics[width=7.8329cm]{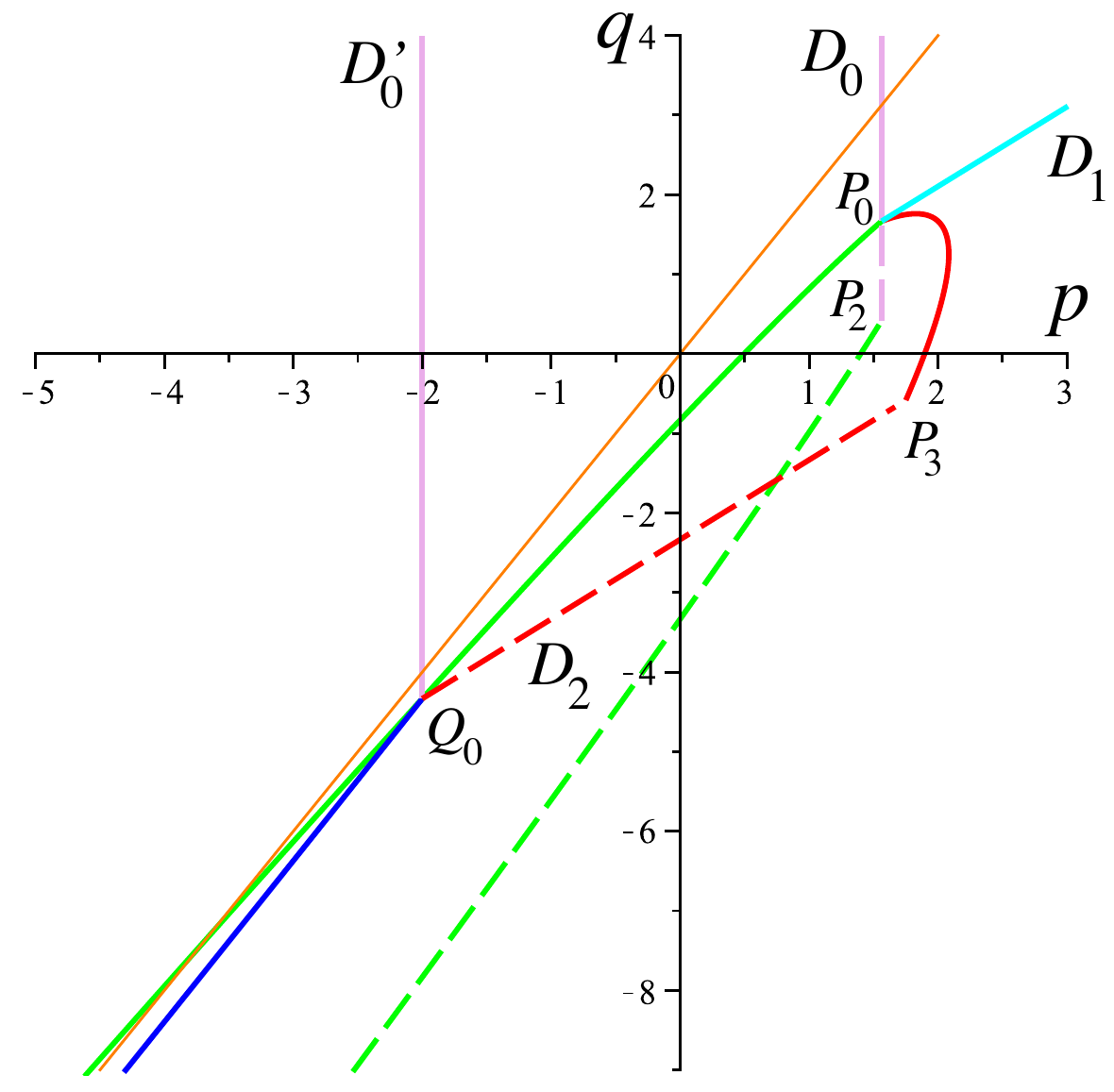}
\end{center}
\caption{Domains of validity of the proofs in the driftless case $a=0$, $\kappa=8/3$  (from  Ref. \cite{DHLZ}).}
\label{fig:raquette}
\end{figure}
The existence of these phase transition lines was established  in \cite{DHLZ} within a connected semi-infinite domain of the $(p,q)$ plane, as indicated in Fig. \ref{fig:raquette}. This domain of validity sweeps the plane from its upper-left part up to a piecewise boundary first made,  for increasing values of $p$, of the dotted green parabola up to its intersection with the straight line $D_2$ of equation $p-q=1+\frac{\kappa}{2}$. It then follows this line up to its intersection  $P_3=\left(1+\frac{2}{\kappa}, \frac{4-\kappa^2}{2\kappa}\right)$ with the red parabola. From there, the boundary is made of the section of red parabola up to point $P_0$ \eqref{eq:P0}, followed by the straight line $D_1$ of equation  $q-p=\frac{16-\kappa^2}{32 \kappa}$. These restrictions to the domain of proof are due to technicalities involved in the proofs \cite{DNNZ,MR3638311,DHLZ}, and the spectrum is supposed to be still given by $\beta_1$ in the whole connected domain located to the right of the piecewise boundary just described. Recent work by Xuan Hieu Ho extends the domain of validity to the whole interior of the red parabola \cite{XHH}. Let us now turn to the complex generalized spectrum of whole-plane SLE for $(p,q)\in \mathbb C^2$, possibly with a drift term. 

\bigskip

 %$P_2=\left(\frac{3(4+\kappa)^2}{32\kappa}, \frac{4+\kappa}{16}\right)$.
\subsection{Complex case with drift}
\begin{claim}\label{claim1}
For $p,q\in \mathbb C$, and $a=0$, the complex spectrum $\beta_1$ of whole-plane SLE can be obtained by  combining Liouville quantum gravity and Coulomb gas methods. It is 
\begin{align}\label{beta1complexe}
&\beta_1(p,q;\kappa,a=0)= s_1(p-q;\kappa,a=0) +
\Re p -1\\
&s_1(p-q;\kappa,a=0)=s_1(\tau):=2\tau+\frac 12-\frac 12\sqrt{1+2\kappa \tau},\\ \label{tbis}
&1+2\kappa \tau:=\frac{1}{2}\left\{1+2\kappa \Re (p-q)+\left |1+2\kappa(p-q)\right |\right\}.
\end{align}
%\begin{eqnarray}\label{beta1complexe}
%\beta_1(p,q;\kappa,a=0)&=& \Re p+2\textcolor{red}{\tau}-\frac 12-\frac 12\sqrt{1+2\kappa \textcolor{red}{\tau}},\\ \label{tbis}
%1+2\kappa \textcolor{red}{\tau}&:=&\frac{1}{2}\left\{1+2\kappa \Re (p-q)+\left |1+2\kappa(p-q)\right |\right\}.
%\end{eqnarray}
\end{claim}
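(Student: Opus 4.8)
The plan is to reduce the computation to the single-variable \emph{packing spectrum} $s_1$ of \eqref{eq:packing}, to determine it first in the derivative sector $q=0$ by quantum-gravity methods, and only then to restore the $q$-dependence using the fact that $s_1$ is a function of $p-q$ alone. The first step is to set up the one-point function $G(z)$ of \eqref{eq:G} for whole-plane SLE$_\kappa$ at $a=0$: differentiating $|f_t'(z)^p|\,|z/f_t(z)|^q$ along the Loewner flow \eqref{eq:Loewner} and applying It\^o calculus to the driving Brownian motion produces a stationarity relation whose infinitesimal generator is a second-order differential operator with coefficients that are \emph{entire} in $(p,q)$ (at most quadratic, coming from the $\tfrac\kappa2\partial_{\theta\theta}$ diffusion term). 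Writing $z=e^{\varepsilon+i\theta}$ and letting $\varepsilon\to0^-$, the associated indicial (Frobenius) equation at the unit circle is a quadratic in the boundary exponent whose discriminant is proportional to $1+2\kappa(p-q)$. This already predicts, structurally, both the appearance of $\sqrt{1+2\kappa(p-q)}$ and the fact that the nontrivial part of $\beta_1$ beyond the term $\Re p-1$ depends on $(p,q)$ only through $p-q$.

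To fix $s_1$ in the real case I would carry out the Liouville-quantum-gravity computation in the spirit of \cite{Duplantier00,MR2112128,DMS14}: encode the $p$-th derivative moment as a Liouville vertex insertion on the $\gamma$-LQG surface with $\gamma=\sqrt\kappa$, evaluate the corresponding quantum scaling exponent, and transport it back to the Euclidean plane by inverting the KPZ relation, whose quadratic structure is precisely what generates the square root $\sqrt{1+2\kappa(p-q)}$. The Coulomb-gas accounting of charges then fixes the additive and multiplicative constants, giving $s_1(\tau)=2\tau+\tfrac12-\tfrac12\sqrt{1+2\kappa\tau}$ for real $\tau=p-q$. Combined with \eqref{eq:packing}, this must reproduce the real driftless spectrum $\beta_1(p,q;\kappa,0)$ of \cite{DHLZ}, which serves as the base case to be matched.

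The genuinely new ingredient, and the step I expect to be the main obstacle, is the passage to \emph{complex} $(p,q)$. Since $|f'(z)^p|=|f'(z)|^{\Re p}e^{-\Im p\,\arg f'(z)}$, the imaginary parts of $p,q$ couple the modulus to the random rotation $\arg f'$, and the angular integral in \eqref{eq:cgims} is then controlled by a saddle point over this phase. The generator of $G$ continues analytically in $(p,q)$, so the candidate exponent is the analytic continuation of $\sqrt{1+2\kappa(p-q)}$, now genuinely complex; but a generalized integral means spectrum is a real decay rate, and the maximization over the rotation phase should select not this complex root itself but the square of its real part. Concretely, I would argue that the effective discriminant is $\big(\Re\sqrt{1+2\kappa(p-q)}\big)^2=\tfrac12\{1+2\kappa\,\Re(p-q)+|1+2\kappa(p-q)|\}$, i.e. exactly \eqref{tbis}, so that $s_1$ is obtained from the real formula by replacing $1+2\kappa\tau$ with this prescribed nonnegative real number. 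Justifying that this real-part prescription --- rather than, say, $\Re\!\big(2\tau+\tfrac12-\tfrac12\sqrt{1+2\kappa\tau}\big)$ --- is the one dictated by the phase maximization, and controlling the competition with subdominant branches, is the delicate point; within the present non-rigorous, physics-based framework it amounts to selecting the indicial root of largest real part.

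Finally I would check the internal consistency of \eqref{beta1complexe}--\eqref{tbis}. For $\Im(p-q)=0$ and $\Re(1+2\kappa(p-q))\ge0$ the prescription \eqref{tbis} returns $1+2\kappa\tau=1+2\kappa(p-q)$, recovering the real formula of \cite{DHLZ}. In the limit $\kappa\to0^+$ one finds $1+2\kappa\tau\to1$ and $\tau\to\Re(p-q)$, so that $s_1\to2\,\Re(p-q)$, matching the complete-spiral spectrum of Theorem \ref{lem:spiralc} at $a=0$ (the $\kappa=0$ deterministic limit of drifted SLE). One should also verify compatibility with the interior--exterior duality of Remark \ref{qq'duality} under $q-p\mapsto p-q$, a nontrivial but expected symmetry of the whole-plane setting.
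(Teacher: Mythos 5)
Your skeleton matches the paper's: reduce to the packing spectrum \eqref{eq:packing}, determine it at $q=0$ by Liouville quantum gravity, and restore the $q$-dependence by invoking (as the paper does, from \cite{DHLZ}) that $s_1$ depends on $p-q$ only. Your consistency checks are sound, and your closed formula does agree with \eqref{tbis}, via the identity $\bigl(\Re\sqrt{w}\bigr)^2=\tfrac12\bigl(\Re w+|w|\bigr)$, which is the paper's Eq. \eqref{rez2}.

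However, there is a genuine gap exactly where you flag ``the delicate point'': you never derive how $\Im p$ enters, you only posit the answer. Worse, the mechanism you do invoke --- analytic continuation of the indicial roots of the PDE and ``selecting the indicial root of largest real part'' --- would not give \eqref{tbis}. Any prescription that takes real parts of complex boundary exponents produces a linear term $2\Re(p-q)$, i.e. the alternative formula $2\Re(p-q)+\tfrac12-\tfrac12\Re\sqrt{1+2\kappa(p-q)}$ that you yourself reject; it falls short of the Claim by $\tfrac{1}{\kappa}\bigl(\Im\sqrt{1+2\kappa(p-q)}\bigr)^2>0$ whenever $\Im(p-q)\neq0$, because the Claim's linear term is $2\tau$ with $\tau-\Re(p-q)=\tfrac{1}{2\kappa}\bigl(\Im\sqrt{1+2\kappa(p-q)}\bigr)^2$. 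Note also that the saddle point cannot sit in the angular integral of \eqref{eq:cgims}: the one-point function $G(z)$ of \eqref{eq:G} is already real and positive, so there is no phase to maximize over there. The paper's derivation supplies the missing mechanism inside the expectation: $\tilde t=\Im p$ exponentially weights configurations by $e^{\tilde t\arg\mathcal{C}}$, where $\arg\mathcal{C}$ is the winding angle at the tip; Coulomb-gas arguments shift the tip exponent quadratically, $\hat x_1(s,\tilde t)=x_1(s)-\tfrac{\kappa}{2}\,\tilde t^{\,2}/k^2(s)$ as in \eqref{x1sti}; the identity $k^2(s)=2\kappa\,(x_1(s)+b)$ of \eqref{ks2} turns $\Re p=\hat x_1(s,\tilde t)-x_1$ into the relation $\Re p=\tau-\tfrac14\,\tilde t^{\,2}/(\tau+\tfrac{1}{2\kappa})$ of \eqref{ttau}; the packing spectrum keeps its real functional form $s_1(\tau)$ but in the variable $\tau$ obtained by inverting this quadratic relation, with the $(+)$-branch fixed by continuity at $\tilde t=0$ in \eqref{taukt}. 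That inversion --- a Legendre-type transform over the winding angle, not a largest-real-part rule --- is what produces \eqref{tbis}; without it the Claim's specific form is guessed rather than derived.
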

\begin{claim}\label{claim2}For $a\neq 0$, the complex spectrum $\beta_1$ of whole-plane SLE with drift is given by an  extension of the above proofs, as 
%\begin{eqnarray}\label{beta1spiralecomplexe}
%\beta_1(p,q;\kappa,a)&=& \Re p+2\textcolor{red}{\tau}-\frac 12-\frac 12\sqrt{1+2\kappa \textcolor{red}{\tau}},\\ \label{tter}
%1+2\kappa \textcolor{red}{\tau}&:=&\frac{1}{2}\left\{\Re\left[(1+ia)^2+2\kappa (p-q)\right]+\left |(1+ia)^2+2\kappa(p-q)\right |\right\}.
%\end{eqnarray}
\begin{align}\label{beta1spiralecomplexe}
&\beta_1(p,q;\kappa,a)= s_1(p-q;\kappa,a) +
\Re p -1\\\label{s1comp}&s_1(p-q;\kappa,a)=s_1(\tau):=2\tau+\frac 12-\frac 12\sqrt{1+2\kappa \tau},\\ \label{tter}
&1+2\kappa \tau:=\frac{1}{2}\left\{\Re\left[(1+ia)^2+2\kappa (p-q)\right]+\left |(1+ia)^2+2\kappa(p-q)\right |\right\}.
\end{align}
\end{claim}
\begin{rkk}
In the limit $\kappa\to 0$, the integral means spectrum \eqref{beta1.0c} of the half-spiral is recovered from \eqref{beta1spiralecomplexe} \eqref{s1comp}, by observing  that the expansion to order $O(\kappa)$ of the r.h.s. of \eqref{tter} indeed yields $\tau=\Re{\left(\frac{p-q}{1-ia}\right)}$. 
\end{rkk}
As we shall see in Section \ref{Redcheck}, this complex spectrum yields the correct answer along an \emph{integrable} complex parabola in the complex space $(p,q)\in \mathbb C^2$.
 %One can in particular check that this complex spectrum yields the correct answer $$\beta_1(p,q;\kappa,a)=\kappa \alpha\bar\alpha/2$$ for $(p,q)$ on the complex parabola $A=B=C=0$ in Eqs. \eqref{pararouge},\eqref{pararouge2}, and \eqref{pararouge3}.

In the \emph{real} moment case, $(p,q)\in \mathbb R^2$, the generalized integral means spectrum $\beta_1(p,q;\kappa,a)$ associated with whole-plane SLE$_\kappa$ with drift $a$ is given  by the explicit formulae:
\begin{eqnarray}\label{beta1spirale}
&&\beta_1(p,q;\kappa,a)= p+2\tau-\frac 12-\frac 12\sqrt{1+2\kappa \tau},\\ \label{t}
&&1+2\kappa \tau:=\frac{1}{2}\left\{1-a^2+2\kappa(p-q)+\sqrt{\left[1-a^2+2\kappa(p-q)\right]^2+4a^2}\right\}.
\end{eqnarray}
\begin{consequence}
Eq. \eqref{t} can be inverted into:
\begin{equation}\label{p-qversust}
p-q=\tau\left(1+\frac{a^2}{1+2\kappa \tau}\right).
\end{equation}
Therefore the phase transition lines in the $(p,p-q)$ plane for $a\neq 0$ are obtained from those for $a=0$ by the non-linear transform,
\begin{eqnarray}\nonumber
&&p\mapsto p,\\ \label{nltp-q} &&p-q=\tau \mapsto p-q=\tau\left(1+\frac{a^2}{1+2\kappa \tau}\right).
\end{eqnarray}
\end{consequence}
In  the work \cite{DHLZ}, the location of the various phase transition lines in the case of whole-plane SLE without drift was established with the help of several master curves: a so-called `red  parabola' where the one-point function $G$ \eqref{eq:G} is integrable, a so-called `green parabola' where the spectrum changes from $\beta_0$ to $\beta_1$, and a `blue quartic' where it changes from $\beta_{\mathrm{tip}}$ to $\beta_1$, as well as several straight lines, like $D_0'$ where the spectrum changes from $\beta_{\mathrm{tip}}$ to $\beta_0$, $D_0$ where it changes from $\beta_0$ to $\beta_{\mathrm{lin}}$, and $D_1$ where it changes from $\beta_{\mathrm{lin}}$ to $\beta_1$  (Fig. \ref{fig:separatrix}). These curves are also instrumental in delimiting the domains of validity of the proofs (Fig. \ref{fig:raquette}). Applying the non-linear transform \eqref{nltp-q} in the $(p,q-p)$ plane to these curves yields the corresponding curves in the case of whole-plane SLE with drift. They are illustrated in Figs. \ref{Fig4-5} and \ref{Fig6}.
\begin{figure}[!htb]
\begin{center}
\includegraphics[width=9cm]{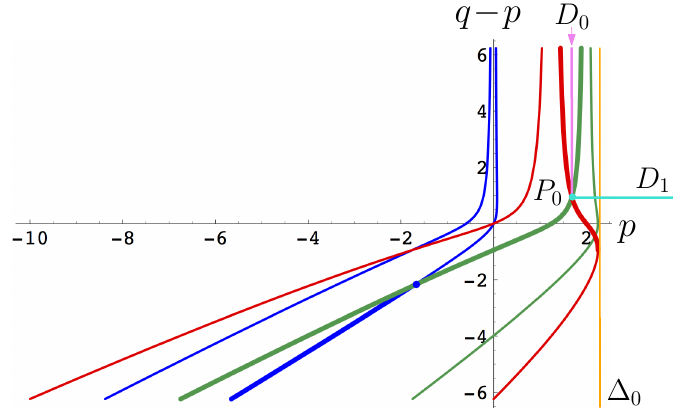}
\end{center}
\caption{Non-linear mapping \eqref{nltp-q} of the red and green parabolae and blue quartic of Ref. \cite{DHLZ} (here $\kappa=2$, $a=1$).}
\label{Fig4-5}
\end{figure} 
\begin{figure}[!htb]
\begin{center}
\includegraphics[width=9cm]{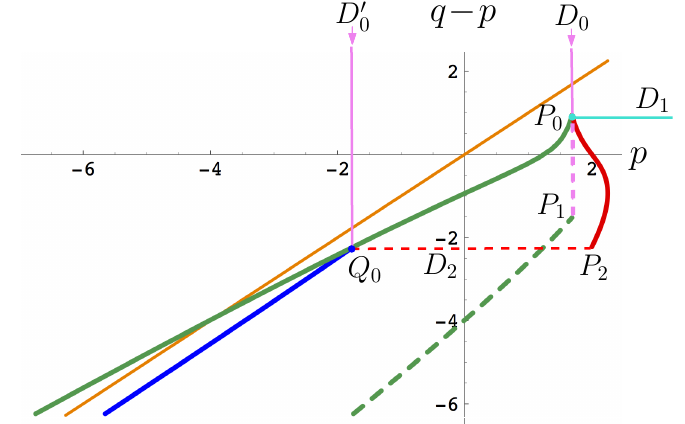}
\end{center}
\caption{Non-linear mapping \eqref{nltp-q} of the domains of validity of the proofs,  as shown in Fig. \ref{fig:raquette} from \cite{DHLZ} (here $\kappa=2$, $a=1$).}
\label{Fig6}
\end{figure}
%\begin{figure}[ht!]\begin{center}
%\includegraphics[width=9cm]{Fig2bis.pdf}
%\caption{{Phase transition lines for the generalized integral means spectrum of whole-plane SLE$_\kappa$ with drift (here $\kappa=2$, $a=1$)}}
%\label{Fig2bis}
%\end{center}
%\end{figure}
\subsubsection{Phase diagram}
\begin{figure}[!ht]
\begin{center}
\includegraphics[width=9cm]{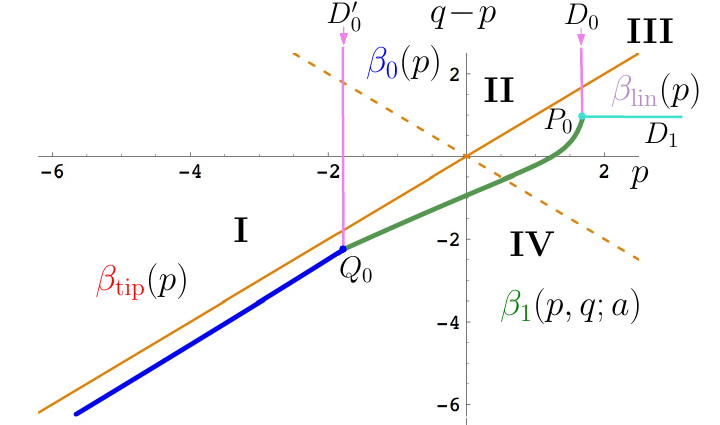}
\end{center}
\caption{Phase transition lines for the generalized integral means spectrum of simple whole-plane SLE$_\kappa$ with drift (here $\kappa=2$, $a=1$). The first bisector with $q=2p$ (orange continuous line) corresponds to the standard integral means spectrum \emph{(ims)} for the exterior case which crosses only phases I, II and III, whereas the second bisector with $q=0$ (orange dotted line) yields the standard \emph{ims} for the interior case, which does enter phase IV with the $\beta_1$ spectrum.}
\label{Fig2ter}
\end{figure}
\begin{figure}[!ht]
\begin{center}
\includegraphics[width=9cm]{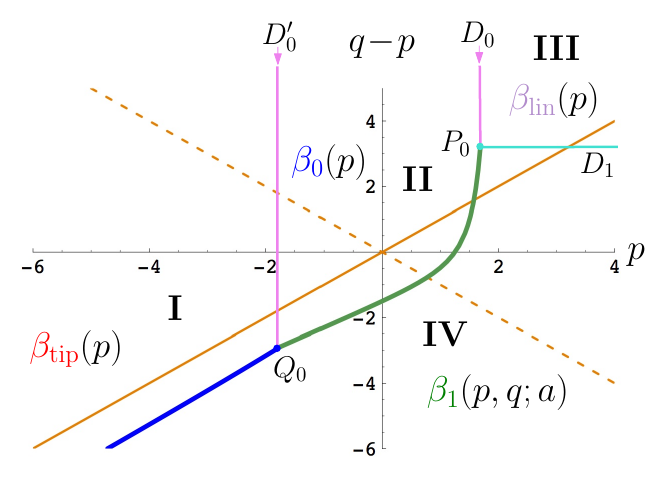}
\end{center}
\caption{Phase transition lines for the generalized integral means spectrum of simple whole-plane SLE$_\kappa$ with drift (here $\kappa=2$, $a=2$). The standard \emph{ims} in the $q=2p$ exterior case crosses all four phases in the order I, II, IV and III, while the $q=0$ standard \emph{ims} in the interior case crosses phases I, II and IV.}
\label{Fig7bis}
\end{figure}
\begin{figure}[!ht]
\begin{center}
\includegraphics[width=9cm]{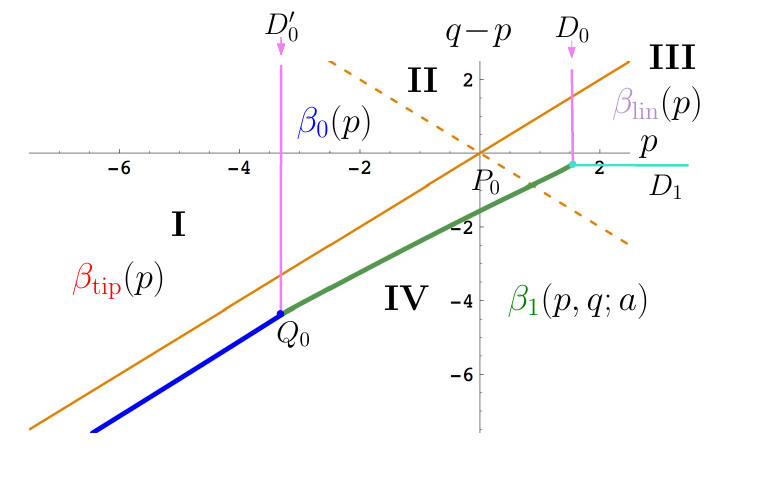}
\end{center}
\caption{Phase transition lines for the generalized integral means spectrum of non-simple whole-plane SLE$_\kappa$ with drift (here $\kappa=6$, $a=2$). The successive phase crossings of the two standard \emph{ims} bisector lines are analogous to those depicted in Fig. \ref{Fig2ter}.}
\label{Fig8bis}
\end{figure}
\begin{figure}[!htb]
\begin{center}
\includegraphics[width=9cm]{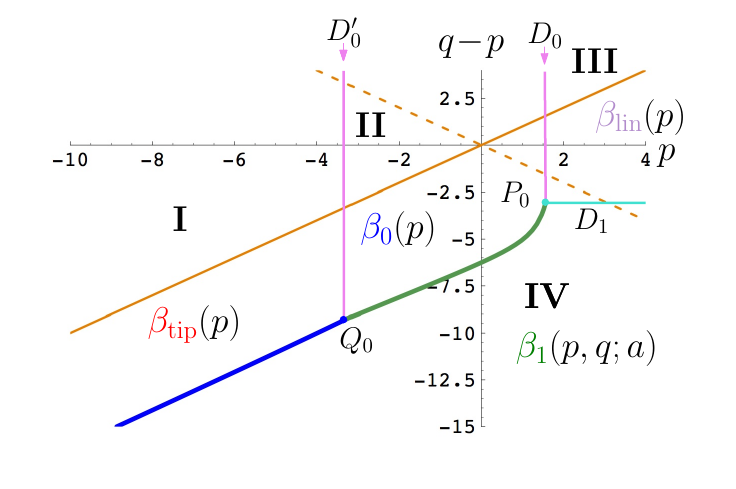}
\end{center}
\caption{Phase transition lines for the generalized integral means spectrum of non-simple whole-plane SLE$_\kappa$ with drift (here $\kappa=6$, $a=8$). The standard \emph{ims} in the $q=2p$ exterior case crosses phases I, II, III only, while the $q=0$ standard \emph{ims} in the interior case crosses all four phases in the order I, II, III and IV.}
\label{Fig9bis}
\end{figure}
Various cases, relative to the values of parameters $\kappa$ and $a$, and drawn thanks to the non-linear mapping \eqref{nltp-q}, are depicted in Figs. \ref{Fig2ter}, \ref{Fig7bis}, \ref{Fig8bis}, and \ref{Fig9bis}.  In these figures, it is especially interesting to focus on the standard integral means spectra in the $p,q-p$ plane, obtained for the whole-plane exterior version, along the line $q=2p$, hence $q-p=p$ (first bisector, golden continuous line), and for the whole-plane interior version along the line $q=0$, hence $q-p=-p$ (second bisector, golden dotted line).

Point $P_0$ \eqref{eq:P0} in the drift-less case  yields a value of $t_0:=p_0-q_0=\frac{\kappa^2-16}{32\kappa}$, so that $1+2\kappa t_0=\kappa^2/2$. The position of the translated point $P_0=(p_0,\tilde q_0)$ in the presence of drift $a$ is given by Eq. \eqref{nltp-q} as
\begin{equation}
\tilde q_0=p_0+\frac{16-\kappa^2}{32\kappa}\left(1+\frac{16 a^2}{\kappa^2}\right).
\end{equation}
To determine whether the first bisector enters region IV as in Fig. \ref{Fig7bis}, so that the \emph{exterior} standard whole-plane spectrum has a $\beta_1$ component, or avoids it as in Fig. \ref{Fig2ter}, we need to know the sign of $\tilde q_0-2p_0$. If positive, the first bisector passes below $P_0$ so that it successively traverses regions I, II, IV and III as in  Fig. \ref{Fig7bis}. Owing to  \eqref{eq:P0}, this happens for 
\begin{equation}\label{cond1}
\frac{a^2}{\kappa^2} \geq a_0(\kappa):=\frac{1}{4} \frac{2+\kappa}{4-\kappa},\,\,\,\kappa<4.
\end{equation} 
This phenomenon thus occurs only for simple SLE$_{\kappa<4}$ curves, and for a sufficiently strong drift term $a$. Otherwise, one is in the configuration of Figs. \ref{Fig2ter}, \ref{Fig8bis}, and \ref{Fig9bis} for the first bisector, and the $\beta_1$ spectrum does not appear in the standard {\it ims} of the exterior whole-plane SLE with drift.

To determine whether the second bisector enters region III and crosses all four phases as in Fig. \ref{Fig9bis}, so that the \emph{interior} standard whole-plane spectrum has a linear component $\beta_{\mathrm{lin}}$, or whether it avoids the linear phase III as in Fig. \ref{Fig8bis}, we need to know the position of $P_0$ with respect to that bisector, hence the sign of $\tilde q_0$. If negative, the second bisector passes above $P_0$, so that it successively traverses regions I, II, III and IV as in  Fig. \ref{Fig9bis}. This happens for 
\begin{equation}\label{cond2}
\frac{a^2}{\kappa^2} \geq \tilde a_0(\kappa):=\frac{1}{8} \frac{8+\kappa}{\kappa-4},\,\,\,\kappa>4.
\end{equation} 
This phenomenon thus occurs only for non-simple SLE$_{\kappa>4}$ curves, and for a sufficiently strong drift term $a$. Otherwise, one is in the configuration of Figs. \ref{Fig2ter}, \ref{Fig7bis}, and \ref{Fig8bis} for the second  bisector, and the $\beta_{\mathrm{lin}}$ spectrum does not appear in the standard {\it ims} of the interior whole-plane SLE with drift, which takes the successive forms $\beta_{\mathrm{tip}}, \beta_0, \beta_1$.
\begin{rkk}\label{dualitycd}
The two conditions on the \emph{reduced drift parameter}, $a/\kappa\geq a_0(\kappa)$, in fact obey \emph{SLE duality} \cite{Duplantier00,MR2112128,DupLH,MR2439609,dub_dual}. Defining  the dual SLE parameter $\kappa'=16/\kappa$, with $\kappa'>4$ and $\kappa< 4$, one checks that $\tilde a_0(\kappa')=\tilde a_0(16/\kappa)=a_0(\kappa)$. The occurence here of this reduced drift parameter $a/\kappa$ may seem natural, if one recalls that the quadratic variation  of $\sqrt{\kappa} B_t+at$ is $\kappa t$  and its mean $at$.
\end{rkk}
\subsection{Derivation of Claims \ref{claim1} and  \ref{claim2}}
\subsubsection{Discourse on the Method} We are going to use here a \emph{Liouville quantum gravity} (LQG) approach,  which historically gave the first derivation of the standard SLE multifractal spectrum  \cite{Duplantier00}, which was later confirmed by a standard mathematical  approach \cite{BS,MR3638311,gwynne2018}. It is based on the celebrated Knizhnik-Polyakov-Zamolodchikov (KPZ) relation \cite{MR947880,MR981529,MR1005268} between scaling exponents in the Euclidean plane, and their counterparts under a random LQG measure that gives the scaling limit of the area measure on a random planar map. The KPZ relation is now mathematically proved \cite{springerlink:10.1007/s00222-010-0308-1bis,rhodes-2008,DRSV12}.  Although  the LQG method, which originates in theoretical physics, is \emph{heuristic} and not  fully rigorous, it often offers the quickest and most natural path to the derivation of scaling exponents and multifractal spectra. It is also intimately related to the recently developed and rigorous \emph{wedge-welding} theory in Liouville quantum gravity \cite{10.1214/15-AOP1055,DMS14} (See in particular Appendix B in \cite{DMS14} for a mathematically precise description of the KPZ interpretation.)
\subsubsection{Derivation of Claim \ref{claim1}}
Let us first recall that in the original work on whole-plane SLE \cite{DNNZ}, the novel integral means spectrum, $\beta_1(p;\kappa):=\beta_1(p;q=0;\kappa, a=0)$, derived there  for $p\in \mathbb R$, was related to some Liouville quantum gravity results obtained in \cite{MR2112128}. (See \cite[Section 1.3]{DNNZ}.)  It was found that the related \emph{packing spectrum}, defined as,  
\begin{align}\label{packing}
s_1(p;\kappa):=\beta_1(p;\kappa)-p+1,
\end{align}
is given by 
\begin{align} \label{s}
 s_1(p)=s_1(p;\kappa):=2p+\frac{1}{2}-\frac{1}{2}\sqrt{1+2\kappa p}.
\end{align} 
When seen as a function of $p$, it has for inverse in terms of  $s=s_1$, 
\begin{equation*}
p=\frac{s}{2}+ \frac{\kappa}{8} {\mathcal U}^{-1}_\kappa(s),
\end{equation*}  
where we defined
\begin{align*}
&{\mathcal U}^{-1}_\kappa(x):= \frac{1}{2\kappa}\left(\kappa-4+\sqrt{(4-\kappa)^2+16\kappa x}\right),\\
&{\mathcal U}_\kappa(x):=\frac{1}{4}x\left(\kappa x+4-\kappa\right),\\
&{\mathcal V}_\kappa(x):={\mathcal U}_\kappa\left(\frac{1}{2}\left(x+1-\frac{\kappa}{4}\right)\right)=\frac{1}{16\kappa}\left[\kappa^2 x^2-(4-\kappa)^2\right].
\end{align*}  
Here ${\mathcal U}_\kappa$ is the KPZ function of Liouville quantum gravity adapted to SLE$_\kappa$, while ${\mathcal V}_\kappa$ is an associated function that relates boundary scaling dimensions to bulk ones \cite{MR2112128,DupLH}.  Here we generalize methods introduced in \cite{Duplantier00,PhysRevLett.89.264101} and  expounded in \cite{MR2112128,DupLH}, and use notations similar to those of \cite{MR2112128}, Section 8. For simplicity, we first implicitly assume SLE paths to be simple, i.e., with $\kappa \leq 4$, since the quantum gravity composition rules differ for the simple and non-simple phases of SLE \cite{MR2112128,DupLH}.  Nevertheless, the results obtained also hold for $\kappa >4$. One has the set of identities,
\begin{align}\label{pxs} 
%&{\mathcal V}_\kappa(x):={\mathcal U}_\kappa\left(\frac{1}{2}(x+1-\frac{\kappa}{4})\right)=\frac{1}{16\kappa}\left[\kappa^2 x^2-(4-\kappa)^2\right]\\
&p=x_1(s)-x_1, \\ \label{x1s}
& x_1(s):=2{\mathcal V}_\kappa\left[{\mathcal U}^{-1}_\kappa(s)+{\mathcal U}^{-1}_\kappa(\tilde x_1)\right],\\ \label{x1} 
&x_1:=x_1(0)=\frac{1}{8\kappa}(6-\kappa)(2-\kappa),\\ \label{x1tilde}
&\tilde x_1:=\frac{6-\kappa}{2\kappa}, \,\,\,{\mathcal U}^{-1}_\kappa(\tilde x_1)=\frac{2}{\kappa}.
\end{align}  
 
The scaling exponent $x_1(s)$ geometrically corresponds to a configuration where the SLE \emph{tip} is locally avoiding a bunch of $s$ \emph{independent Brownian paths}. The tip here should be understood as the so-called SLE `second tip'  at the origin \cite{MR3638311}, after inversion of unbounded (interior) whole-plane SLE \cite{DNNZ}, as in Beliaev and Smirnov's  bounded (exterior) version of whole-plane SLE \cite{BS,MR3638311}. 

In the LQG approach,  $s$ independent Brownian paths \emph{avoiding} an SLE path near its tip are conformally equivalent to a certain number $k(s)$ of \emph{mutually-avoiding} SLEs in a star configuration,  given by 
  \begin{equation}\label{ks}
  k(s)=1+\frac{{\mathcal U}^{-1}_\kappa(s)}{{\mathcal U}^{-1}_\kappa(\tilde x_1)},
  \end{equation}
 such that $x_1(s)=2{\mathcal V}_\kappa(2 k(s)/\kappa)$. When $p\in \mathbb C$, its imaginary part $\tilde t:=\Im p$ corresponds to exponentially weighting by $\exp (\tilde t \arg \mathcal C)$ the mutually-avoiding SLE-Brownian path configurations $\mathcal C$, with local \emph{winding angle} $\arg \mathcal C$ around the tip. One can then show by Coulomb gas arguments \cite{PhysRevLett.89.264101,MR2112128,2008NuPhB.802..494D} that the new scaling exponent associated with the tip is 
  \begin{equation}\label{x1sti}
  \hat x_1(s,\tilde t):= x_1(s) -\frac{\kappa}{2}\frac{\tilde t^{\,2}}{k^2(s)}.%=1+\frac{{\mathcal U}^{-1}_\kappa(s)}{{\mathcal U}^{-1}_\kappa(\tilde x_1)},
  \end{equation}
The average logarithmic spiral rotation rate $a$ near the tip is then obtained by Legendre transformation as \cite{PhysRevLett.89.264101,MR2112128},
 \begin{equation}\label{a}
 a=\frac{\partial}{\partial \tilde t} \, \hat x_1(s,\tilde t).
 \end{equation} 
On the other hand, the real part of $p$, $t:=\Re p$, is now given by the generalization of \eqref{pxs},
 \begin{equation}\label{tx1sti}
  \Re p=t=\hat x_1(s,\tilde t)-x_1,%=1+\frac{{\mathcal U}^{-1}_\kappa(s)}{{\mathcal U}^{-1}_\kappa(\tilde x_1)},
  \end{equation} 
whereas the packing spectrum for complex $p$, $s=s_1(p;\kappa)=\beta_1(p;\kappa)-\Re p+1$, is still given by \eqref{s}, but now in terms of the reduced variable $\tau$, 
\begin{align}\label{tau}
& \tau:=x_1(s)-x_1,\\ \label{s1}
& s=s_1(\tau)=2\tau+\frac{1}{2}-\frac{1}{2}\sqrt{1+2\kappa \tau}.
 %=1+\frac{{\mathcal U}^{-1}_\kappa(s)}{{\mathcal U}^{-1}_\kappa(\tilde x_1)},
  \end{align}
  From Eqs. \eqref{x1s}, \eqref{ks}, we find the simple identity \cite{PhysRevLett.89.264101,MR2112128}
   \begin{equation}\label{ks2}
  \frac{1}{2\kappa}{k^2(s)}= x_1(s) +b,\,\,\, b=\frac{(4-\kappa)^2}{8\kappa}.%=1+\frac{{\mathcal U}^{-1}_\kappa(s)}{{\mathcal U}^{-1}_\kappa(\tilde x_1)},
  \end{equation}
 We thus find for \eqref{x1sti}  the simple formula,
 \begin{equation}\label{tx1stibis}
 \hat  x_1(s,\tilde t)=x_1(s)-\frac{1}{4}\frac{{\tilde t}^2}{x_1(s)+b},%=1+\frac{{\mathcal U}^{-1}_\kappa(s)}{{\mathcal U}^{-1}_\kappa(\tilde x_1)},
  \end{equation}
  from which \eqref{tx1sti} gives,
 \begin{align}\nonumber
  t=\hat x_1(s,\tilde t)-x_1&=x_1(s)-x_1-\frac{1}{4}\frac{{\tilde t}^2}{x_1(s)-x_1+c}\\ \label{ttau}
  &=\tau-\frac{1}{4}\frac{{\tilde t}^2}{\tau+c},\,\,\,
  c:=b+x_1=\frac{1}{2\kappa}.%=1+\frac{{\mathcal U}^{-1}_\kappa(s)}{{\mathcal U}^{-1}_\kappa(\tilde x_1)},
  \end{align}
  Eq. \eqref{ttau} is then inverted into 
  \begin{align}\label{taut}
  \tau=\frac{1}{2}\left(t-c\pm \sqrt{(t+c)^2+\tilde t^2}\right),
  \end{align}
  which can be recast as
  \begin{align}\label{taukt}
  %&\tau=\frac{1}{2}\left(t-c\pm \sqrt{(t-c)^2+\tilde t^2}\right)\\
  1+2\kappa \tau=\frac{1}{2}(1+2\kappa t)\pm \frac{1}{2}\sqrt{(1+2\kappa t)^2+4\kappa^2\tilde t^2}.
 \end{align} 
  For $\tilde t=0$, we have $\tau=t$, which selects the (+)-branch in \eqref{taukt}, and recalling that $t=\Re p, \tilde t=\Im p$, we obtain
  \begin{equation} 
 1+2\kappa \tau=\frac{1}{2}(1+2\kappa \Re p)+ \frac{1}{2}\sqrt{(1+2\kappa \Re p)^2+4\kappa^2\Im p^2}, 
\end{equation}
which is the announced complex formula \eqref{t} for $p\in \mathbb C, q=0$.  When $q\neq 0$, we invoke the general validity of the observation made in Ref. \cite{DHLZ} that  the generalized packing spectrum, $s_1(p,q;\kappa,a=0)=\beta_1(p,q;\kappa,a=0)-\Re p+1$,  solely depends on the reduced variable $p-q$, hence $s_1(p,q;\kappa,0)=s_1(p-q,0;\kappa,0)$. \qed
\subsubsection{Derivation of Claim \ref{claim2}}
When $a\neq 0$, we modify the above aproach as follows. In the absence of Brownian paths, $s=0$, \eqref{tx1stibis} becomes, since $x_1(0)=x_1$,
\begin{equation}\label{tx10tibis}
  \hat x_1(0,\tilde t)=x_1(0)-\frac{1}{4}\frac{{\tilde t}^2}{x_1(0)+b}=x_1-\frac{\kappa}{2}{\tilde t}^2.%=1+\frac{{\mathcal U}^{-1}_\kappa(s)}{{\mathcal U}^{-1}_\kappa(\tilde x_1)},
  \end{equation} 
  The spiral rotation rate $a$  then corresponds via \eqref{a} to a parameter $\tilde t_0$ such that,
  \begin{equation}
  a=-\kappa \tilde t_0,\,\,\, \hat x_1(0,\tilde t_0)=x_1-\frac{a^2}{2\kappa}.
  \end{equation}
 Re-centering around the spiralling rate $a$,  we define, instead of \eqref{tx1stibis},
  \begin{equation}\label{tx1stiter}
  \widehat x_1(s,\tilde t):=x_1(s)-\frac{1}{4}\frac{{(\tilde t-\tilde t_0)}^2}{x_1(s)+b},%=1+\frac{{\mathcal U}^{-1}_\kappa(s)}{{\mathcal U}^{-1}_\kappa(\tilde x_1)},
  \end{equation}
  and substitute to \eqref{tx1sti}, \eqref{ttau}
 \begin{align}\nonumber
  t=\widehat x_1(s,\tilde t)-\hat x_1(0,\tilde t_0)&=x_1(s)-x_1+\frac{a^2}{2\kappa}-\frac{1}{4}\frac{{(\tilde t-\tilde t_0)}^2}{x_1(s)-x_1+c}\\ \label{ttaubis}
  &=\tau +\frac{a^2}{2\kappa}-\frac{1}{4}\frac{{(\tilde t-\tilde t_0)}^2}{\tau+c},\,\,\,
  c=\frac{1}{2\kappa},\,\,\,\tilde t_0=-\frac{a}{\kappa}.%=1+\frac{{\mathcal U}^{-1}_\kappa(s)}{{\mathcal U}^{-1}_\kappa(\tilde x_1)},
  \end{align}
  Thus, instead of \eqref{taut} we find
 \begin{align}\label{tautbis}
  \tau=\frac{1}{2}\left(t-c-\frac{a^2}{2\kappa}\pm \sqrt{\left(t+c-\frac{a^2}{2\kappa}\right)^2+(\tilde t-\tilde t_0)^2}\right).
  \end{align} 
  By again selecting the $(+)$-branch, and recalling that $t=\Re p$, $\tilde t=\Im p $, this can finally be written as 
  \begin{align}\nonumber  %&\tau=\frac{1}{2}\left(t-c\pm \sqrt{(t-c)^2+\tilde t^2}\right)\\
1+2\kappa \tau&=\frac{1}{2}(1-a^2+2\kappa t) + \frac{1}{2}\sqrt{(1-a^2+2\kappa t)^2+4(a+\kappa\tilde t)^2},\\
 %\end{align}
 %\begin{align}\label{beta1spiralecomplexe}
%\beta_1(p,q;\kappa,a)&=& \Re p+2\tau-\frac 12-\frac 12\sqrt{1+2\kappa \tau},\\ 
\label{tterbis}
&=\frac{1}{2}\left\{\Re\left[(1+ia)^2+2\kappa p\right]+\left |(1+ia)^2+2\kappa p\right |\right\}.
\end{align}
This is the announced result \eqref{tter} for $p\in \mathbb C, q=0$. Again, for $q\neq 0$, we invoke the fact \cite{DHLZ} that  the generalized packing spectrum, $s_1(p,q;\kappa,a)=\beta_1(p,q;\kappa,a)-\Re p+1$,  solely depends on the reduced variable $p-q$. \qed

%This last discussion provides a conjecture for $\beta (p,q,\kappa,0)$ while the results of the preceding section provide a value for $\beta (p,q,0,a)$. Combined with the exact results given by the theorem, we may formulate a conjecture. Notice first that $\beta_1 (p,q,\kappa,0)=B_1 (p,p-q,\kappa,0)$ where $B_1(x,y,\kappa,0)=x+2y-\frac 12-\frac 12\sqrt{1+2y}$ and so are the other three functions (the corresponding functions $B_{tip},\,B_0,\,B_{lin}$ simply do not depend on $y$). We might thus write $\beta (p,q,\kappa,a)=B(p,p-q,\kappa,a)$ and formulate the\\

%\textbf{Conjecture}:  $B(p,p-q,\kappa,a)=B(p,\frac{p-q}{1+\frac{4a^2}{(2+\kappa)^2}},\kappa,0)$.\\

%In other words we pass from the phase diagram for $a=0$ to the one for $a>0$ by the linear map $p\mapsto p,\;(p-q)\mapsto \frac{p-q}{1+\frac{4a^2}{(2+\kappa)^2}}$.
%\begin{figure}[ht!]
%\begin{center}
%\includegraphics[width=6cm]{phases12.pdf}
%\caption{{Phases for generalized whole-plane SLE with $a=1,\,\kappa=2$.}}
%\end{center}
%\end{figure}

\section{Integrable probability for drifted whole-plane SLE}
In order to anticipate the next section,  let us put the computations in a more general setting. 
\subsection{Some background on L\'{e}vy processes}
\begin{df}
A L\'evy process is a stochastic process $(L_t)_{t\geq 0}$ such that
\begin{itemize}
\item[(1)] $L_0=0$ (a.s);
\item[(2)] For any discrete ordered set $\left\{t_i,\,i\in I_n:=\{0,\cdots,n\}\right\}$, such that $t_0=0$ and $0\leq  t_i<t_{i+1},\, \forall i\in I_{n-1}$, the successive increments ,$L_{t_{i+1}}-L_{t_i},\,\,i\in I_{n-1}$, 
are all mutually independent; 
%$0\leq t_1<t_2<t_3<...<t_n$,$L_{t_2}-L_{t_1},L_{t_3}-L_{t_2},...,L_{t_n}-L_{t_{n-1}}$ are independent;
\item[(3)] For any $0\leq s\leq t$, $L_t-L_s$ has the same law as $L_{t-s}$.
\item[(4)] $L_t$ is continuous in probability, $\lim_{t\to 0} \mathbb P(|L_t-L_0|>\varepsilon)= 0,\,\,\,\forall \varepsilon >0$, which rules out fixed discontinuities of the path $t\mapsto L_t$.
\end{itemize}
\end{df}
Notice that Brownian motion is a special L\'evy process, and  a general difference with Brownian motion is that \emph{random jumps} are allowed. The characteristic function of a L\'evy process $L_t$ has the form
\begin{equation}\label{levysymbol}
\mathbb E[e^{i\xi L_t}]=e^{-t\eta(\xi)},
\end{equation}
where $\eta$, called the \textbf{L\'evy symbol}, is a continuous complex function of $\xi\in\R$, satisfying $\eta(0)=0$ and $\eta(-\xi)=\overline{\eta(\xi)}$. If $\eta(-\xi)=\eta(\xi)$,  $L_t$ is a \textbf{symmetric L\'evy process}. For Brownian motion, the L\'evy symbol is $\eta(\xi)=\frac{\xi^2}{2}$. More generally, the function
$$
\eta(\xi)=\frac{|\xi|^\alpha}{2},\alpha\in(0,2],
$$
is the L\'evy symbol of the so-called $\alpha-$stable process.

\subsection{Derivation of the PDE}
The inner whole-plane Loewner process is defined as the solution of the ODE in $\mathbb C$ 
%\textcolor{blue}{\cite[Def. 6.28]{lawler}}, 
\begin{align}\label{gt}
 \begin{cases}
 \partial_t g_t(z)=g_t(z)\frac{g_t(z)+\lambda(t)}{g_t(z)-\lambda(t)}, \,\,\, t\geq 0,\\
 \lim\limits_{t\rightarrow +\infty}e^{t}g_t(z)=z,\,\,\,\,\forall z\in\mathbb{C},
 \end{cases}
 \end{align}
with driving function $\lambda(t)=e^{iL_t}$ where $L_t$ is real-valued; $g_t$ is a conformal mapping from a simply connected domain $\Omega_t \subset \mathbb C$ onto $\mathbb{D}$, where $\Omega_t$ is  defined as $\Omega_t:=\left\{z\in\mathbb{C}: \tau_z>t\right\}$, where    \begin{equation*}
    \tau_z:=\sup\left\{t\in\mathbb{R}: \inf_{s\leq t}|g_s(z)-\lambda(s)|>0\right\}.
    \end{equation*}
Its inverse function $f_t:=g_t^{-1}$ obeys the PDE \eqref{eq:Loewner}, 
\begin{align}
  \begin{cases}\label{eq:PDE}
 \partial_t f_t(z)=zf_t'(z)\frac{\lambda(t)+z}{\lambda(t)-z}\\
 \lim\limits_{t\rightarrow+\infty}f_t(e^{-t}z)=z,\,\,\,\,\forall z\in\mathbb{D},
 \end{cases}
 \end{align}
where $f_t$ is now a mapping from $\mathbb D$ to the domain $\Omega_t=\mathbb C\setminus K_t$, where the connected set $K_t$ is the hull of the Loewner process.  In this section, we will assume $L_{t}, t \geq 0$, to be a \emph{L\'evy process}. The (complex) average integral means spectrum of the conformal map $f=f_0$, where $f_t$ is defined by \eqref{eq:PDE}, describes the singular behavior of the expectation, 
\begin{equation}\label{E-fpp}
\mathbb{E}[|f'(z)^p|]=\mathbb{E}\left[f'(z)^{\frac{p}{2}}\overline{f'(z)}^{\frac{\bar{p}}{2}}\right], \quad p\in\mathbb{C}.
\end{equation}
Similarly to the method used in \cite{DHLZ}, we shall consider the L\'evy-Loewner evolution (LLE) {\it two-point function} for $z_1,z_2\in \mathbb D$, defined as,
\begin{equation}\label{eq:GL}
G(z_1,\overline{z}_2):=\mathbb{E}\left[z_1^{\frac{q}{2}}\frac{f'(z_1)^{\frac{p}{2}}}{f(z_1)^{\frac{q}{2}}}\overline{z}_2^{\frac{\bar{q}}{2}}\frac{\overline{f'(z_2)}^{\frac{\bar{p}}{2}}}{\overline{f(z_2)}^{\frac{\bar{q}}{2}}}\right], \quad p,q\in\mathbb{C}.
\end{equation}
The moment \eqref{E-fpp} is the value $G(z,\bar z)$ at coinciding points $z_1=z_2=z$, for the case $q=\bar q=0$. Following essentially the same approach as was introduced in \cite{MR2153402,BS,MR3638311,DHLZ}, 
we aim at finding a partial differential equation satisfied by $G$.\\

%Let us start by considering the radial SLE$_\kappa$ defined by the ODE

%\begin{equation}\label{gt}
%\partial_tg_t(z)=g_t(z)\frac{g_t(z)+\lambda(t)}{g_t(z)-\lambda(t)}
%\end{equation}
%with the initial condition $g_0(z)=z,$ where $\lambda(t)=e^{iW_t}$. 

%As mentioned in \cite{BS}, it is more convenient to work with the Riemann map $f_t=g_t^{-1}$. %Unfortunately this function is given by a PDE instead of ODE, hence some difficulities in using It\^o %calculus.

Since $f_t$ obeys a PDE instead of an ODE, the use of It\^o calculus is problematic. A way to overcome this difficulty \cite{BS} is to consider the ODE \eqref{gt} for negative times, and then compare 
the \emph{reverse} function $g_{-t}$ to the \emph{inverse} $g_t^{-1}$. The details are as follows. 

For any fixed $s\geq 0$, define the auxiliary function $g_t^{(s)}$ such that:  $ g_t^{(s)}(z)=e^{-t}z$ for $t>s$, while for $t\leq s$, $g_t^{(s)}$ is the solution to the differential equation \eqref{gt} with the initial (continuity) condition $g_s^{(s)}(z)=e^{-s}z$,
\begin{equation}
\begin{cases}\label{3}
\partial_tg_t^{(s)}(z)=g_t^{(s)}(z)\frac{g_t^{(s)}(z)+\lambda(t)}{g_t^{(s)}(z)-\lambda(t)}, \,\,\, t\geq 0,\\
g_s^{(s)}(z)=e^{-s}z.
\end{cases}
\end{equation}
The family of conformal maps $(g_t^{(s)}(z))_{t \geq 0}$ is illustrated in Fig.\ref{fig:gst}.
\begin{figure}[!htb]
\begin{center}
\includegraphics[width=0.8\linewidth]{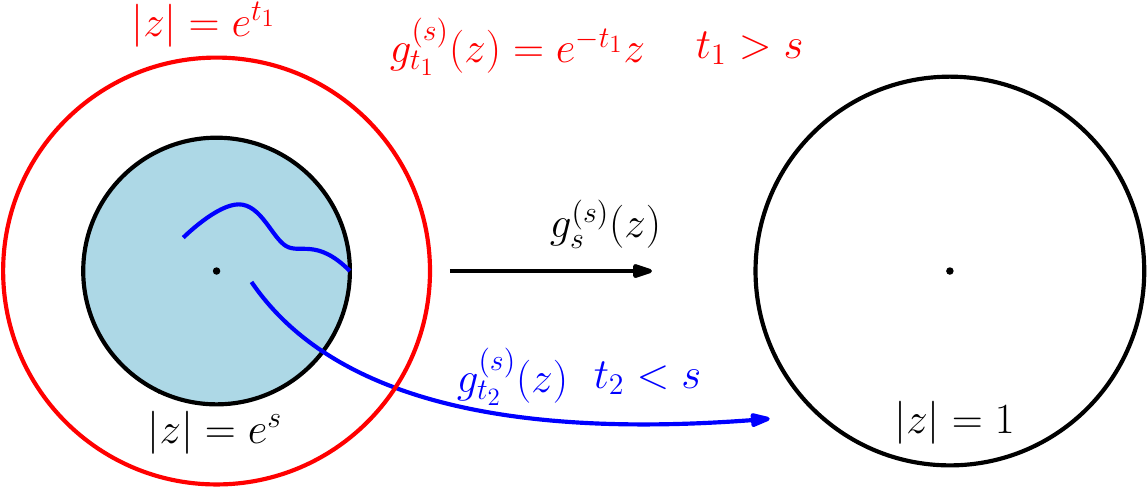}
\end{center}
\caption{Family of auxiliary conformal maps $(g_t^{(s)})_{t \geq 0}$ for fixed $s\geq 0$.  When $0\leq s \leq t$, $g_{t}^{(s)}(z)=e^{-t}z$ maps $e^t\mathbb D$ onto $\mathbb D$, 
whereas for $0\leq t\leq s$, $g_{t}^{(s)}(z)$ maps 
$e^s \mathbb D \setminus K_{t}$ onto $\mathbb D$.}
\label{fig:gst}
\end{figure}
\begin{lemma}\label{lem-gts} With $g_t$ and $g_t^{(s)}$ defined as above, we have, for any $t\geq 0$,
$$\lim\limits_{s\rightarrow+\infty}g_t^{(s)}(z)=g_t(z).$$
\end{lemma}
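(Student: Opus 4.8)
The plan is to recast the whole comparison in the \emph{normalized} variable $u_t := e^t g_t(z)$, in which both flows acquire clean endpoint data, and then to run a single Gronwall estimate. Differentiating \eqref{gt} gives $\partial_t u_t = u_t\,\dfrac{2g_t}{g_t-\lambda(t)} =: \Psi(t,u_t)$, with $\Psi(t,u) = \dfrac{2e^{-t}u^2}{e^{-t}u - \lambda(t)}$, and the same ODE holds for $u_t^{(s)} := e^t g_t^{(s)}(z)$ by \eqref{3}. The normalization $\lim_{t\to+\infty} e^t g_t(z)=z$ becomes $u_\infty = z$, whereas the continuity condition in \eqref{3} becomes the \emph{exact} identity $u_s^{(s)} = z$. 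Integrating each ODE from a level $\sigma$ up to its own endpoint, I would write
\begin{equation*}
u_\sigma = z - \int_\sigma^{\infty}\Psi(\tau,u_\tau)\,d\tau, \qquad u_\sigma^{(s)} = z - \int_\sigma^{s}\Psi(\tau,u_\tau^{(s)})\,d\tau,
\end{equation*}
so that $\epsilon(\sigma):=|u_\sigma^{(s)}-u_\sigma|$ satisfies $\epsilon(\sigma)\le \int_\sigma^s |\Psi(\tau,u_\tau^{(s)})-\Psi(\tau,u_\tau)|\,d\tau + |R(s)|$ with tail remainder $R(s):=\int_s^\infty \Psi(\tau,u_\tau)\,d\tau$.

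The two quantitative inputs are a decay estimate for $R(s)$ and an integrable Lipschitz bound for $\Psi$ along trajectories. For the first, since $g_\sigma(z)\in\mathbb{D}$ and $\partial_\sigma\log|g_\sigma| = (|g_\sigma|^2-1)/|g_\sigma-\lambda|^2<0$, the modulus $|g_\sigma|$ decreases and, because $u_\sigma\to z$, one has $g_\sigma\sim e^{-\sigma}z$; hence $|\Psi(\sigma,u_\sigma)| = 2e^{\sigma}|g_\sigma|^2/|g_\sigma-\lambda|=O(e^{-\sigma})$ and $|R(s)|=O(e^{-s})$. For the second, a direct computation gives $\partial_u\Psi(\sigma,u) = \dfrac{2g(g-2\lambda)}{(g-\lambda)^2}$ with $g=e^{-\sigma}u$, so along a trajectory confined to a compact subset $\{|w|\le\rho'\}$ of $\mathbb{D}$ one has $|\partial_u\Psi|\le C(\sigma)$ with $C(\sigma)=O(|g_\sigma|)=O(e^{-\sigma})$ for large $\sigma$ and $C(\sigma)=O(1)$ near $\sigma=t$. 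The crucial point is that $C$ is then \emph{integrable} on $[t,\infty)$, so $M(t):=\int_t^\infty C(\tau)\,d\tau$ is finite \emph{independently of $s$}. Gronwall's inequality in backward integral form would then yield $\epsilon(t)\le |R(s)|\,e^{M(t)} = O(e^{-s})$, whence $g_t^{(s)}(z)=e^{-t}u_t^{(s)}\to e^{-t}u_t = g_t(z)$.

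The step I expect to be the genuine obstacle is justifying that the auxiliary trajectory $\sigma\mapsto g_\sigma^{(s)}(z)$ actually \emph{remains} in a fixed compact subset of $\mathbb{D}$ throughout $[t,s]$, uniformly in $s$, so that it never meets the singular locus $g=\lambda$ of $\Psi$ and the Lipschitz bound above is legitimate. I would settle this by a continuity/bootstrap argument: fix $\rho:=|g_t(z)|<\rho'<1$ and let $\sigma^\ast$ be the first level (descending from $s$) at which $|g^{(s)}|$ would reach $\rho'$. On $[\sigma^\ast,s]$ the bound is valid, so the estimate above gives $|g_{\sigma^\ast}^{(s)}-g_{\sigma^\ast}|=e^{-\sigma^\ast}\epsilon(\sigma^\ast)\le e^{-t}e^{M(t)}|R(s)|\to 0$; but $|g_{\sigma^\ast}|\le\rho<\rho'=|g_{\sigma^\ast}^{(s)}|$ forces $|g_{\sigma^\ast}^{(s)}-g_{\sigma^\ast}|\ge \rho'-\rho>0$, a contradiction for $s$ large. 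Hence no such $\sigma^\ast$ exists, the trajectory stays in $\{|w|\le\rho'\}$ on all of $[t,s]$, and the estimate closes. The same reasoning applies verbatim for any fixed $t\in\mathbb{R}$, negative $t$ included, since $z\in\Omega_t$ implies $z\in\Omega_\sigma$ for all $\sigma\ge t$.
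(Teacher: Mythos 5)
Your route is genuinely different from the paper's. The paper disposes of Lemma \ref{lem-gts} in a few lines by reduction to a known result: setting $\tilde g_t^{(s)}(z):=1/g_{-t}^{(s)}(1/z)$ and $\tilde\lambda(t):=1/\lambda(-t)$, the interior statement becomes exactly Lawler's exterior whole-plane result (Lemma \ref{lem-gtstilde}), and one only checks that inversion plus time reversal intertwines the ODE \eqref{3} with \eqref{5} and equation \eqref{gt} with the exterior normalization. You instead reprove convergence from scratch by a Gronwall comparison in the normalized variable $u_\sigma=e^\sigma g_\sigma(z)$, with a bootstrap keeping the auxiliary trajectory away from the singular locus $g=\lambda$. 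That architecture is correct, and most of your ingredients check out: the integral equations with endpoint data $u_\infty=z$ and $u^{(s)}_s=z$, the tail bound $|R(s)|=O(e^{-s})$, the monotonicity $\partial_\sigma\log|g_\sigma|=(|g_\sigma|^2-1)/|g_\sigma-\lambda|^2<0$, the formula $\partial_u\Psi=2g(g-2\lambda)/(g-\lambda)^2$, and the contradiction at the first exit level $\sigma^\ast$ are all sound. What you buy is a self-contained quantitative proof (with rate $O(e^{-s})$); what the paper buys is brevity, by outsourcing the analysis to Lawler.

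There is, however, a gap in the Lipschitz step, precisely where the estimate must close. The inequality $|\Psi(\tau,u^{(s)}_\tau)-\Psi(\tau,u_\tau)|\le C(\tau)\,\epsilon(\tau)$ requires a bound on $\partial_u\Psi(\tau,\cdot)$ on the \emph{whole segment} joining $u_\tau$ to $u^{(s)}_\tau$, hence a bound in terms of $\max\bigl(|g_\tau|,|g^{(s)}_\tau|\bigr)$, not of $|g_\tau|$ alone. Your bootstrap hypothesis only gives $|g^{(s)}_\tau|\le\rho'$, and on the region $\{|g|\le\rho'\}$ the best uniform bound on $|\partial_u\Psi|$ is a constant of order $\rho'/(1-\rho')^2$, not $O(e^{-\tau})$; so, as written, $C(\sigma)=O(|g_\sigma|)$ is asserted but not justified where it is used. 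With only a constant $C$, the Gronwall factor is $e^{C(s-t)}$, which overwhelms the tail $O(e^{-s})$ as soon as $C\ge 1$ (and $\rho'$ must be close to $1$ whenever $|g_t(z)|$ is), so the estimate does not close. The repair stays inside your framework: run the bootstrap with the \emph{pair} of conditions $|g^{(s)}_\sigma|\le\rho'$ and $|u^{(s)}_\sigma|\le U+1$, where $U:=\sup_{\sigma\ge t}|u_\sigma|<\infty$. On that region $\max\bigl(|g_\tau|,|g^{(s)}_\tau|\bigr)\le (U+1)e^{-\tau}$, hence $C(\tau)=O(e^{-\tau})$ uniformly in $s$ and $M(t)<\infty$; the Gronwall output $\epsilon(\sigma)\le |R(s)|\,e^{M(t)}$ then restores both bootstrap conditions with strict inequality for $s$ large, since $|u^{(s)}_\sigma|\le|u_\sigma|+\epsilon(\sigma)$ and $|g^{(s)}_\sigma|\le|g_\sigma|+e^{-\sigma}\epsilon(\sigma)$. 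With that strengthening your proof is complete.
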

%\noindent 
This lemma is just the interior version of the following result by Lawler \cite{lawler} for the exterior whole-plane case.
\begin{lemma}\cite[Def. 6.28, Prop. 4.21]{lawler} \label{lem-gtstilde}
Let $\tilde{g}_t(z)$ be the solution of the differential equation,
\begin{equation}\label{5}
\begin{cases}
\partial_t \tilde{g}_t(z)=\tilde{g}_t(z)\frac{\tilde{\lambda}(t)+\tilde{g}_t(z)}{\tilde{\lambda}(t)-\tilde{g}_t(z)},\,\,\, t\leq 0,\\
\lim\limits_{t\rightarrow-\infty}e^t\tilde{g}_t(z)=z,\,\,\,\forall z\in\mathbb{C}\backslash\{0\}.
\end{cases}
\end{equation}
For any fixed $s\geq 0$, define $\tilde{g}_t^{(s)}(z)$ as: $\tilde{g}_t^{(s)}(z)=e^{-t}z$ if $t\leq -s$; for $t\geq -s$, $\tilde{g}_t^{(s)}(z)$ is the solution 
of the above differential equation with initial value $\tilde{g}_{-s}^{(s)}(z)=e^sz$. Then for $t\leq 0$, $\lim\limits_{s\rightarrow+\infty} \tilde{g}_t^{(s)}(z)=\tilde{g}_t(z)$.
\end{lemma}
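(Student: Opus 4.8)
The plan is to treat \eqref{5} as a non-autonomous flow and reduce the whole statement to a single Gronwall-type estimate. Write $F_t(w):=w\frac{\tilde\lambda(t)+w}{\tilde\lambda(t)-w}$ for the driving vector field, and let $\Phi_{\sigma,t}$ denote the associated two-parameter flow, i.e.\ $\Phi_{\sigma,t}(w)$ solves \eqref{5} in $t$ with $\Phi_{\sigma,\sigma}(w)=w$; it obeys the composition law $\Phi_{\sigma',t}=\Phi_{\sigma,t}\circ\Phi_{\sigma',\sigma}$ for $\sigma'\le\sigma\le t$. By construction $\tilde g_t^{(s)}(z)=\Phi_{-s,t}(e^{s}z)$, while the genuine solution satisfies $\tilde g_t(z)=\Phi_{-s,t}(\tilde g_{-s}(z))$ for every $s$. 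Hence the quantity to control is $\tilde g_t^{(s)}(z)-\tilde g_t(z)=\Phi_{-s,t}(e^{s}z)-\Phi_{-s,t}(\tilde g_{-s}(z))$, and the entire problem becomes: how does the flow $\Phi_{-s,t}$ propagate forward the initial discrepancy $e^{s}z-\tilde g_{-s}(z)$ present at time $-s$?

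Next I would estimate the two ingredients separately. Setting $h_t(z):=e^{t}\tilde g_t(z)$, a direct computation gives
\[
\partial_t h_t=h_t\,\frac{2\tilde\lambda(t)}{\tilde\lambda(t)-\tilde g_t(z)}.
\]
Since $|\tilde\lambda|=1$ and, by the far-past normalization in \eqref{5}, $|\tilde g_t(z)|\sim|z|e^{-t}\to\infty$ as $t\to-\infty$, the logarithmic rate is $O(e^{t})$, an integrable quantity, so $h_t(z)\to z$ with $|h_t(z)-z|=O(e^{t})$. Consequently the initial discrepancy is only of order one, uniformly in $s$: $|e^{s}z-\tilde g_{-s}(z)|=e^{s}|z-h_{-s}(z)|=O(1)$. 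In parallel, the modulus identity
\[
\frac{d}{dt}\log|\tilde g_t(z)|=\frac{1-|\tilde g_t(z)|^{2}}{|\tilde\lambda(t)-\tilde g_t(z)|^{2}}<0,
\]
valid also for $\tilde g_t^{(s)}$, shows that both trajectories remain in the exterior $\{|w|>1\}$ and, each tracking $|z|e^{-t}$, have modulus $\ge M$ on a common interval $(-\infty,\tau_M]$ with $\tau_M$ a fixed time independent of $s$.

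The comparison is then closed by contraction followed by Gronwall. From the expansion $F_t(w)=-w-2\tilde\lambda(t)+O(1/w)$ one gets $F_t'(w)=-1+O(|w|^{-2})$, so on the large-modulus interval $[-s,\tau_M]$ the squared difference $\delta_t:=|\tilde g_t^{(s)}(z)-\tilde g_t(z)|^{2}$ obeys $\tfrac{d}{dt}\delta_t\le\big(-2+O(|\tilde g_t^{(s)}(z)|^{-2})\big)\,\delta_t$; integrating from $-s$ to $\tau_M$ and absorbing the correction, which is $O(e^{2t})$ and hence has $\int_{-\infty}^{\tau_M}e^{2t}\,dt<\infty$, yields $|\tilde g_{\tau_M}^{(s)}(z)-\tilde g_{\tau_M}(z)|\le C\,e^{-(\tau_M+s)}=O(e^{-s})$. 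On the remaining \emph{fixed} interval $[\tau_M,t]$ the true trajectory stays in a compact subset of $\{|w|>1\}$ at a positive distance from the unit circle, hence from $\tilde\lambda$, where $F$ is Lipschitz with some constant $L$; a standard Gronwall estimate then gives $|\tilde g_t^{(s)}(z)-\tilde g_t(z)|\le O(e^{-s})\,e^{L(t-\tau_M)}\to0$ as $s\to+\infty$, which is the claim.

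The main obstacle is to make the a priori control of the \emph{approximate} trajectory $\tilde g_t^{(s)}$ uniform in $s$, since only the genuine solution $\tilde g_t$ is handed to us. Three points need care: that $\tilde g_t^{(s)}$ never meets the moving singularity $w=\tilde\lambda(t)$ (handled by the modulus monotonicity, which keeps it in $\{|w|>1\}$); that it genuinely tracks $|z|e^{-t}$ in the far past, so that the $O(|w|^{-2})$ corrections over the long interval $[-s,\tau_M]$ accumulate to a bounded quantity rather than to something of order $s$ (handled by the integrability of the correction together with a bootstrap on the modulus); and that $\tilde g_t^{(s)}$ stays inside the Lipschitz tube around $\tilde g_t$ on $[\tau_M,t]$, which follows for $s$ large by a continuity/bootstrap argument from the $O(e^{-s})$ bound already obtained at $\tau_M$.
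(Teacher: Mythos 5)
Your proof is sound, but note that the paper does not actually prove this lemma: it is quoted directly from Lawler \cite[Prop.~4.21]{lawler}, and the paper only explains how its interior version (Lemma \ref{lem-gts}) follows by the inversion $z\mapsto 1/z$ and time reversal. Your argument is essentially the standard one behind Lawler's proposition, resting on the same two estimates: the far-past control $e^{t}\tilde g_t(z)=z\bigl(1+O(e^{t})\bigr)$ coming from $\partial_t\log\bigl(e^t\tilde g_t\bigr)=2\tilde\lambda(t)/\bigl(\tilde\lambda(t)-\tilde g_t\bigr)=O(e^{t})$, and the contraction $F_t'(w)=-1+O(|w|^{-2})$ at large modulus, followed by Gronwall on a fixed compact interval. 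One structural difference is worth recording: Lawler does not presuppose a solution of \eqref{5}; he applies the same contraction estimate to two approximants $\tilde g^{(s)},\tilde g^{(s')}$ to show the family is Cauchy as $s\to+\infty$, defines $\tilde g_t$ as the limit, and then verifies the normalization $\lim_{t\to-\infty}e^t\tilde g_t(z)=z$ — so his statement delivers existence and uniqueness as well, whereas your comparison of $\Phi_{-s,t}(e^sz)$ with $\Phi_{-s,t}(\tilde g_{-s}(z))$ proves convergence conditionally on existence, which is all the statement as quoted literally asserts. Two minor imprecisions, neither fatal: the parenthetical crediting ``modulus monotonicity'' with keeping $\tilde g_t^{(s)}$ in $\{|w|>1\}$ is not quite the right reason (a decreasing modulus could a priori reach $1$); what actually prevents crossing is that $\tfrac{d}{dt}\log|w|=(1-|w|^2)/|\tilde\lambda-w|^2$ vanishes on the unit circle, so the trajectory can leave the exterior only by hitting the singularity $w=\tilde\lambda(t)$, and avoidance of that on the relevant time range comes precisely from your bootstrap/tube control, not from monotonicity alone. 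Likewise, in the contraction step the estimate $F_t'=-1+O(|w|^{-2})$ must hold on the whole segment joining the two trajectories; this is fine since their difference is $O(1)$ while both moduli are of order $|z|e^{-t}\ge M$ on $(-\infty,\tau_M]$, but it merits an explicit word.
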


In order to prove that Lemma \ref{lem-gts} follows from \ref{lem-gtstilde}, one applies complex inversion and time reversal so as to define 
for $t\leq 0$, $\tilde{g}_t^{(s)}(z):=1/g_{-t}^{(s)}(1/z)$ and $\tilde{\lambda}(t):=1/\lambda(-t)$, where $g_t^{(s)}(z)$ is defined by \eqref{3}. Then $\tilde{g}_t^{(s)}$  is as in Lemma \ref{lem-gtstilde} and for $s\rightarrow +\infty$, it converges to the limit $\tilde{g}_t$ obeying \eqref{5}. 
It then finally suffices to check that $g_{t}$, defined for $t\geq 0$ as $g_t(z):=1/{\tilde{g}_{-t}({1}/{z})}$, satisfies  \eqref {gt}.\\

 We then define  a reversed radial LLE, as the solution to the ODE in the unit disk $\mathbb D$,
\begin{equation}\label{f-tilde}
\partial_t\tilde{f}_t(z)=\tilde{f}_t(z)\frac{\tilde{f}_t(z)+\lambda(t)}{\tilde{f}_t(z)-\lambda(t)},
\quad\tilde{f}_0(z)=z,\,\,\,  \forall z\in\D,\,\,\, t\geq 0.
\end{equation}
\begin{lemma}\label{lem::relation}
For $f_t$ as defined in \eqref{eq:PDE} and $\tilde f_t$ as defined in \eqref{f-tilde}, we have the equivalence in law,
\begin{equation}\label{eq:equiv}
\lim_{t\rightarrow+\infty} e^t\tilde{f}_t(z)\stackrel{\rm (law)}=f_0(z).
\end{equation}
\end{lemma}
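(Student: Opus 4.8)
The plan is to avoid the PDE \eqref{eq:PDE} for $f_0$ altogether and instead observe that the reversed flow $\tilde f_t$ of \eqref{f-tilde} and the auxiliary family $g_t^{(s)}$ of \eqref{3} are integral curves of one and the same non-autonomous vector field $V(w,t):=w\frac{w+\lambda(t)}{w-\lambda(t)}$, differing only in where they are pinned. First I would record that, for $z\in\mathbb{D}$, an orbit of $V$ stays inside the disc and away from the moving pole $\lambda(t)\in\partial\mathbb{D}$: indeed $\frac{d}{dt}|\tilde f_t(z)|^2=2|\tilde f_t(z)|^2\,\Re\frac{\tilde f_t(z)+\lambda(t)}{\tilde f_t(z)-\lambda(t)}=2|\tilde f_t(z)|^2\frac{|\tilde f_t(z)|^2-1}{|\tilde f_t(z)-\lambda(t)|^2}<0$, so the flow is a well-defined family of univalent self-maps. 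Writing $\Psi_{t_0\to t_1}$ for the associated Loewner flow, which obeys the cocycle relation $\Psi_{t_1\to t_2}\circ\Psi_{t_0\to t_1}=\Psi_{t_0\to t_2}$ and $\Psi_{t_0\to t_0}=\mathrm{id}$, one has by construction $\tilde f_t=\Psi_{0\to t}$ and $g_t^{(s)}=\Psi_{s\to t}\circ(e^{-s}\,\cdot\,)$, the latter because $g_s^{(s)}(z)=e^{-s}z$.

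The key step is then purely algebraic. Setting $t=0$ in the expression for $g_t^{(s)}$ and using the cocycle identity $\Psi_{s\to 0}=(\Psi_{0\to s})^{-1}=\tilde f_s^{-1}$ gives the pathwise identity $g_0^{(s)}(z)=\tilde f_s^{-1}(e^{-s}z)$, equivalently $(g_0^{(s)})^{-1}=e^s\tilde f_s$ for every $s$, with both sides driven by the same realization of $\lambda$. This exact relation is what replaces the problematic direct It\^o analysis of $f_0$: it expresses the approximants of the forward map in terms of the ODE-driven reversed flow, to which It\^o calculus does apply.

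To conclude I would pass to the limit $s\to\infty$. Lemma \ref{lem-gts} taken at $t=0$ yields $g_0^{(s)}\to g_0=f_0^{-1}$ locally uniformly; since all the maps involved are univalent, this convergence transfers to the inverse maps by the Carath\'eodory kernel theorem (equivalently, by normality of the families), so that $(g_0^{(s)})^{-1}\to f_0$. Combining with the identity of the previous paragraph gives $e^s\tilde f_s\to f_0$ as $s\to\infty$, and relabelling $s$ as $t$ produces \eqref{eq:equiv}; note that this argument in fact delivers almost sure convergence, of which the stated equivalence in law is an immediate consequence, and that no use of the stationarity of the increments of $L$ is needed at this stage (that property enters only in the subsequent derivation of the PDE for $G$ via It\^o calculus applied to $\tilde f_t$).

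The main obstacle is the first paragraph: one must verify rigorously that $V(\cdot,t)$ generates a genuine invertible conformal flow on $\mathbb{D}$ — that orbits never reach the singular circle and that $\Psi_{t_0\to t_1}$ is well defined and univalent on the relevant subdomains — so that the cocycle relation and the inversion $\Psi_{s\to 0}=\tilde f_s^{-1}$ are legitimate, and that the convergence of $g_0^{(s)}$ really does transfer to the inverses uniformly on compact subsets of $\Omega_0$. Both points are standard in Loewner theory, but they are where the care lies; everything after them is the one-line cocycle computation together with Lemma \ref{lem-gts}.
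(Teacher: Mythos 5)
Your proof is correct and follows essentially the same route as the paper: your cocycle identity $(g_0^{(s)})^{-1}=e^{s}\tilde f_s$ is exactly the paper's observation that $g_t^{(s)}(z)=\tilde f_t\big(g_0^{(s)}(z)\big)$ (both solve \eqref{3} and agree at $t=0$), evaluated at $t=s$, after which the limit $s\to+\infty$ is taken via Lemma \ref{lem-gts}, just as you do. Your additional remarks---that the flow stays in $\mathbb{D}$ away from the pole, that passing to inverse maps requires Carath\'eodory/normal-family convergence, and that the convergence is in fact pathwise (so the stated equality in law is a weakening, stationarity of increments entering only later in the martingale argument)---are correct refinements of details the paper leaves implicit.
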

\begin{proof}
For any fixed $s\geq0$, let $g_t^{(s)}$ be as above. Then  we have $g_t^{(s)}(z)\stackrel{\rm (law)}=\tilde{f}_t\big(g_0^{(s)}(z)\big)$, because both obey \eqref{3}, and they coincide at $t=0$ because of the initial condition in \eqref{f-tilde}. 
We then have,  
$e^t\tilde{f}_t(z)\stackrel{\rm (law)}=e^t g_t^{(s)}\big((g_0^{(s)})^{-1}(z)\big)$. Letting $s=t$, we get
\begin{equation}\label{eq:etfg}
e^t\tilde{f}_t(z)\stackrel{\rm (law)}=(g_0^{(t)})^{-1}(z),
\end{equation}
and if we let $t\rightarrow+\infty$, by Lemma \ref{lem-gts} we have $\lim\limits_{t\rightarrow+\infty} e^t\tilde{f}_t(z)\stackrel{\rm (law)}=g_0^{-1}(z)=f_0(z)$.
\end{proof}

Let us define the auxiliary, time-dependent, radial variant of the LLE two-point function $G(z_1,\bar z_2)$ \eqref{eq:GL},
\begin{eqnarray}\nonumber
\widetilde G(z_1,\bar{z}_2,t)&:=&\mathbb{E}\left[z_1^{\frac{q}{2}}\frac{\tilde{f}'_t(z_1)^{\frac{p}{2}}}{\tilde{f}_t(z_1)^{\frac{q}{2}}}\overline{z}_2^{\frac{\bar{q}}{2}}\frac{\overline{\tilde{f}'_t(z_2)}^{\frac{\bar{p}}{2}}}{\overline{\tilde{f}_t(z_2)}^{\frac{\bar{q}}{2}}}\right]\\ \label{eq:tildeG}
      &=&\mathbb{E}\left[z_1^{\frac{q}{2}}X_t(z_1)\overline{z}_2^{\frac{\bar{q}}{2}}Y_t(\bar{z}_2)\right],
\end{eqnarray}
where $\tilde f_t$ is the reversed radial Loewner process \eqref{f-tilde}, together with the shorthand notations,
$$X_t(z):=\frac{\tilde{f}'_t(z)^{\frac{p}{2}}}{\tilde{f}_t(z)^{\frac{q}{2}}},\,\,\, Y_t(\bar z):=\overline{X_t(z)}=\frac{\overline{\tilde{f}'_t(z)}^{\frac{\bar{p}}{2}}}{\overline{\tilde{f}_t(z)}^{\frac{\bar{q}}{2}}}.$$

%\textcolor{red}{By Lemma \ref{lem::relation} we have
%\begin{align}\label{hereonlyuseuse}
%\lim_{t\rightarrow+\infty} e^{t}\tilde{f}_t(z)\stackrel{\rm (law)}=f_0(z),
%\end{align}
%where $\tilde{f}_t$ is defined as follows:
%\begin{equation}\label{hereonlyuse}
%\partial_t\tilde{f}_t(z)=\tilde{f}_t(z)\frac{\tilde{f}_t(z)+\lambda(t)}{\tilde{f}_t(z)-\lambda(t)},
%\quad\tilde{f}_0(z)=z,\,\,\,  z\in\D,
%\end{equation}}
%By the Markov property of L\'evy process, we know that for any $s\leq t$:
%\begin{align}
%\tilde{f}_t(z)\stackrel{\rm (law)}=\lambda(s)\Ti f/_{t-s}(\Ti f/_s(z)/\lambda(s)).
%\end{align}
%\textcolor{red}{Putting  $f:=f_0(z)$, according to the definition, we wish to compute
%\[
%\mathbb{E}[|f'(z)^p|]=\mathbb{E}[f'(z)^{\frac{p}{2}}\overline{f'(z)}^{\frac{p}{2}}],
%\]
%for $p\in\mathbb{R}$. In fact we  study the more general two-point function for $z_1,z_2\in\D$,
%\[
%G(z_1,\overline{z_2}):=\mathbb{E}[z_1^{\frac{q}{2}}
%\frac{f'(z_1)^{\frac{p}{2}}}{f(z_1)^{\frac{q}{2}}}\overline{z_2}^{\frac{\bar{q}}{2}}
%\frac{\overline{f'(z_2)}^{\frac{\bar{p}}{2}}}{\overline{f(z_2)}^{\frac{\bar{q}}{2}}}].
%\]
%where $p,q\in\mathbb{C}$.}

By using \eqref{eq:etfg}, where the r.h.s. and its derivative are locally uniformly bounded by the Koebe distorsion theorem, 
the two-point function $G(z_1,\bar{z}_2)$ \eqref{eq:GL} is, by Lebesgue's dominated convergence theorem, the limit
\begin{align}\label{tendstoinfinity}
\lim_{t\rightarrow+\infty} e^{\Re(p-q)t}\widetilde{G}(z_1,\bar{z}_2,t)=G(z_1,\bar{z}_2).
\end{align}
\begin{rkk}\label{hol}The same argument implies that $\widetilde{G}(z_1,\bar{z}_2,t)$ and $G(z_1,\bar{z}_2)$ are holomorphic with respect to both $z_1$ and $\bar{z}_2$. 
\end{rkk}
As explained in \cite{MR2153402,BS}, the idea is then to construct a martingale $\mathcal M_s$ related to $\widetilde G$. The vanishing of the drift term in its It\^o derivative then yields a partial differential equation obeyed by $\widetilde G$. 

For $s\leq t$, define the two-point martingale $(\mathcal M_s)_{t\geq s\geq 0}$ with 
$$\mathcal{M}_s:=\mathbb{E}[X_t(z_1)Y_t({\bar z}_2)|\mathcal{F}_s],$$ 
where the random variable is integrable for fixed $z_1$ and $z_2$, and where $\mathcal{F}_s$ is the $\sigma$-algebra generated by the L\'evy process filtration $\{L_u, u\leq s\}$. By the Markov property of the L\'evy process, we know that for any $s\leq t$,
\begin{align}
\tilde{f}_t(z)\stackrel{\rm (law)}=\lambda(s)\Ti f/_{t-s}(\Ti f/_s(z)/\lambda(s)).
\end{align}
Therefore, 
\begin{equation}\mathcal{M}_s=X_s(z_1) Y_s(\bar{z}_2) \widetilde G(z_{1,s},\bar{z}_{2,s};\tau),\,\,\,\tau:=t-s, \forall \tau >0,
\end{equation} 
where
$$
z_{1,s}:=\frac{\tilde{f}_s(z_1)}{\lambda(s)},\,\,\, \bar z_{2,s}:=\overline{\left(\frac{\tilde{f}_s(z_2)}{\lambda(s)}\right)}=\overline{\tilde{f}_s(z_2)}\lambda(s).
$$
 In order to prepare for It\^o calculus, we have \cite[Section 4, Eqs. (47-49)]{DHLZ}
\begin{align}\label{eq:s-derivatives1}
dX_s(z_1)=& X_s(z_1)\left[\frac{p}{2}-\frac{q}{2}-\frac{p}{(1-z_{1,s})^2}+\frac{q}{1-z_{1,s}}\right]ds,\\[3mm]\label{eq:s-derivatives2}
dY_s(\bar z_2)=& Y_s(\bar z_2)\left[\frac{\bar{p}}{2}-\frac{\bar{q}}{2}-\frac{\bar{p}}{(1-\bar{z}_{2,s})^2}+\frac{\bar{q}}{1-\bar{z}_{2,s}}\right]ds,\\[3mm] \label{eq:s-derivatives3}
\frac{\partial z_{1,s}}{\partial s}{\big\vert_{\lambda(s)}}=& z_{1s}\frac{z_{1,s}+1}{z_{1,s}-1},\,\,\,\frac{\partial\bar{z}_{2,s}}{\partial s}\big\vert_{\lambda(s)}=\bar{z}_{2,s}\frac{\bar{z}_{2,s}+1}{\bar{z}_{2,s}-1}.
\end{align}

Let us write $\mathcal M_s$  as a formal function of two variables, 
$$H(s,L_s):=\mathcal M_s=X_s(z_1) Y_s(\bar{z}_2) \widetilde{G}(z_{1,s},\bar{z}_{2,s},t-s).$$
It is a (local) martingale for all $s\leq t$, thus by It\^o calculus its total $s$-derivative vanishes,
$$
\Lambda H(s,L_s)+\partial_sH(s,L_s)=0,
$$
where $\Lambda$ is the generator of the L\'evy process $L_s$.

We have from Eqs. \eqref{eq:s-derivatives1}, \eqref{eq:s-derivatives2}, \eqref{eq:s-derivatives3},
\begin{align*}
&\partial_s H=H\left[\frac{p}{2}-\frac{q}{2}-\frac{p}{(1-z_{1,s})^2}+\frac{q}{1-z_{1,s}}\right]\\[3mm]
+&H\left[\frac{\bar{p}}{2}-\frac{\bar{q}}{2}-\frac{\bar{p}}{(1-\bar{z}_{2,s})^2}+\frac{\bar{q}}{1-\bar{z}_{2,s}}\right]\\[3mm]
-&X_s(z_1)Y_s(\bar{z}_2)\partial_\tau \widetilde{G}(z_{1,s},\bar{z}_{2,s},t-s)\\[3mm]
+&X_s(z_1)Y_s(\bar{z}_2)\partial_{z_1}\widetilde{G}(z_{1,s},\bar{z}_{2,s},t-s)z_{1s}\frac{z_{1,s}+1}{z_{1,s}-1}\\[3mm]
+&X_s(z_1)Y_s(\bar{z}_2)\partial_{\bar{z}_2}\widetilde{G}(z_{1,s},\bar{z}_{2,s},t-s)\bar{z}_{2,s}\frac{\bar{z}_{2,s}+1}{\bar{z}_{2,s}-1},
\end{align*}
where $\tau:=t-s$.
Since neither $X_s(z)$ nor its complex conjugate $Y_s(z)$ vanish in $\mathbb D$, we deduce that
\begin{align}\label{usedusedherehere}
-\Lambda \widetilde{G}(z_{1,s},\bar{z}_{2,s},t-s)&=\widetilde{G}(z_{1,s},\bar{z}_{2,s},t-s)\left[\frac{p}{2}-\frac{q}{2}-\frac{p}{(1-z_{1,s})^2}+\frac{q}{1-z_{1,s}}\right]\nonumber\\
&+\widetilde{G}(z_{1,s},\bar{z}_{2,s},t-s)\left[\frac{\bar{p}}{2}-\frac{\bar{q}}{2}-\frac{\bar{p}}{(1-\bar{z}_{2,s})^2}+\frac{\bar{q}}{1-\bar{z}_{2,s}}\right]\nonumber\\
&-\partial_\tau \widetilde{G}(z_{1,s},\bar{z}_{2,s},t-s)+\partial_{z_1}\widetilde{G}(z_{1,s},\bar{z}_{2,s},t-s)z_{1s}\frac{z_{1,s}+1}{z_{1,s}-1}\nonumber\\
&+\partial_{\bar{z}_2}\widetilde{G}(z_{1,s},\bar{z}_{2,s},t-s)\bar{z}_{2,s}\frac{\bar{z}_{2,s}+1}{\bar{z}_{2,s}-1}.
\end{align}

Notice that by \eqref{tendstoinfinity}, it holds that, as $t\to +\infty$,
\begin{equation*}
\Re (p-q)\exp\left[\Re (p-q)t\right]\widetilde{G}(z_{1},\bar{z}_2,t) 
+\exp\left[\Re (p-q)t\right]\partial_t \widetilde{G}(z_{1},\bar{z}_2,t)\to 0,
\end{equation*}
so that 
\begin{equation}\label{eq:limG}
\lim_{t\to +\infty}\exp\left[\Re (p-q)t\right]\partial_t \widetilde{G}(z_{1},\bar{z}_2,t) = - \Re (p-q)\, G(z_1,z_2).
\end{equation}
Multiplying both sides of \eqref{usedusedherehere} by $\exp\left[\Re (p-q)(t-s)\right]$, and letting $t \to +\infty$, we  get
\begin{align}
-\Lambda G(z_{1},\bar{z}_2)&=G(z_{1},\bar{z}_2)\left[\frac{p}{2}-\frac{q}{2}-\frac{p}{(1-z_{1})^2}+\frac{q}{1-z_{1}}\right]\nonumber\\
&+G(z_{1},\bar{z}_2)\left[\frac{\bar{p}}{2}-\frac{\bar{q}}{2}-\frac{\bar{p}}{(1-\bar{z}_{2})^2}+\frac{\bar{q}}{1-\bar{z}_{2}}\right]\nonumber\\
&+\Re (p-q)G\left(z_{1},\bar{z}_2\right)\nonumber\\
&+\partial_{z_1}G(z_{1},\bar{z}_2)z_{1}\frac{z_{1}+1}{z_{1}-1}+\partial_{\bar{z}_2}G(z_{1},\bar{z}_2)\bar{z}_2\frac{\bar{z}_2+1}{\bar{z}_2-1}.
\end{align}
We finally get that $G(z_1,\bar{z}_2)$ satisfies $\mathcal{P}(D)\,G(z_1,\bar{z}_2) = 0$, where
\begin{align}\label{generaloperator}
\mathcal{P}(D)&:=\Lambda+z_{1}\frac{z_{1}+1}{z_{1}-1}\partial_{z_1}+\bar{z}_2\frac{\bar{z}_2+1}{\bar{z}_2-1}
\partial_{\bar{z}_2}+p-q+\bar{p}-\bar{q}\nonumber\\[3mm]
&-\frac{p}{(1-z_{1})^2}+\frac{q}{1-z_{1}}-\frac{\bar{p}}{(1-\bar{z}_{2})^2}+\frac{\bar{q}}{1-\bar{z}_{2}}.
\end{align}
Recall the definition of $\Lambda$ acting on a $C^{\infty}(\mathbb R^2)$ function $u$,
$$
 \Lambda u(x)=\lim\limits_{t\downarrow 0}\frac{1}{t}\left({\mathbb E}^x[u(L_t)]-u(x)\right).
$$
For $k,l\in\Z$, we have for $z=re^{i\theta}$,
\begin{align}\label{levysymbolspecial}
\Lambda(z^k\bar{z}^l)&=r^{k+l}\Lambda (e^{i\theta(k-l)})%\nonumber\\[3mm]
=r^{k+l}\lim\limits_{t\downarrow 0}\frac{1}{t}\left({\mathbb E}^\theta[e^{i(k-l)L_t}]-e^{i(k-l)\theta}\right)\nonumber\\[3mm]
&=r^{k+l}\lim\limits_{t\downarrow 0}\frac{1}{t}(e^{-t\eta(k-l)}-1)\, e^{i(k-l)\theta}%\nonumber\\[3mm]
=-\eta(k-l)z^k\bar{z}^l,
\end{align}
%\begin{align}\label{levysymbolspecial}
%\Lambda(z^k\bar{z}^l)&=r^{k+l}\Lambda (e^{i\theta(k-l)})\nonumber\\[3mm]
%&=r^{k+l}\lim\limits_{t\downarrow 0}\frac{1}{t}\left({\mathbb E}^\theta[e^{i(k-l)L_t}]-e^{i(k-l)\theta}\right)\nonumber\\[3mm]
%&=r^{k+l}\lim\limits_{t\downarrow 0}\frac{1}{t}(e^{-t\eta(k-l)}-1)\, e^{i(k-l)\theta}\nonumber\\[3mm]
%&=-\eta(k-l)z^k\bar{z}^l,
%\end{align}
where $\eta$ is the L\'evy symbol of $L_t$.
%%%%%%%%%%%%%%%%%%%%%%%%%%%%%%%%%%%%%%%%%%%

%%%%%%%%%%%%%%%%%%%%%%%%%%%%%%%%%%%%%%%%%%%

\subsection{Drifted Brownian motion}
In this section, we consider the special L\'evy process
$L_t=at+\sqrt{\kappa}B_t$, where $a\in\R, \kappa\geq 0$ and $B_t$ is standard one-dimensional Brownian motion. These processes are the most general L\'evy processes with a.s. continuous trajectories. By definition of the  L\'evy symbol
$$
\expect[e^{i\xi L_t}]=\expect[e^{i\xi(at+\sqrt{\kappa}B_t)}]=e^{ia\xi t-t\frac{\kappa}{2}\xi^2}=e^{-t\eta(\xi)}.
$$
So
$$
\eta(\xi)=\frac{\kappa}{2}\xi^2-ia\xi.
$$
By \eqref{levysymbolspecial}, we have
$$
\Lambda (z^k\bar{z}^l)=-\eta(k-l)z^k\bar{z}^l=\left(-\frac{\kappa}{2}(k-l)^2+ia(k-l)\right)z^k\bar{z}^l,
$$
so that the L\'evy generator in the Brownian drift case is explicitly
$$
\Lambda=-\frac{\kappa}{2}(z\partial_z-\bar{z}\partial_{\bar{z}})^2+ia(z\partial_z-\bar{z}\partial_{\bar{z}}).
$$
The operator in \eqref{generaloperator} thus becomes
\begin{align}\label{eq:PD}
\mathcal{P}(D)=&-\frac{\kappa}{2}(z_1\partial_{z_1}-\bar{z}_2\partial_{\bar{z}_2})^2+z_{1}\left(\frac{z_{1}+1}{z_{1}-1}+ia\right)\partial_{z_1}\nonumber\\[3mm]
+&\bar{z}_2\left(\frac{\bar{z}_2+1}{\bar{z}_2-1}-ia\right)
\partial_{\bar{z}_2}+p-q+\bar{p}-\bar{q}\nonumber\\[3mm]
-&\frac{p}{(1-z_{1})^2}+\frac{q}{1-z_{1}}-\frac{\bar{p}}{(1-\bar{z}_{2})^2}+\frac{\bar{q}}{1-\bar{z}_{2}}.
\end{align}
\subsubsection{Algebraic solutions}
We want to find some solutions to the PDE
\begin{equation}\label{eq:PDEG}
\mathcal{P}(D)G(z_1,\bar{z}_2)=0,\,\,\,G(0,0)=1,
\end{equation}
and follow the method of Ref. \cite{DNNZ}, by looking for solutions of the form,
\begin{equation}\label{eq:Gfactor}
G(z_1,\bar{z}_2)=(1-z_1)^\alpha(1-\bar{z_2})^{\bar{\alpha}}P(z_1\bar{z}_2),\,\,\,P(0)=1.
\end{equation}
The action of the partial differential operator $\mathcal P(D)$ \eqref{eq:PD} readily gives
%$$
%\mathcal{P}(D)[(1-z_1)^\alpha(1-\bar{z_2})^{\bar{\alpha}}P(z_1\bar{z}_2)]=(I)+(II)+(III),
%$$
%where
%\begin{align*}
%&(I)=z_1\bar{z}_2(1-z_1)^{\alpha-1}(1-\bar{z}_2)^{\bar{\alpha}-1}(\kappa|\alpha|^2P(z_1\bar{z}_2)+2(z_1\bar{z}_2-1)P'(z_1\bar{z}_2)),\\[3mm]
%&(II)=[\mathcal{P}(\partial)(1-z_1)^\alpha](1-\bar{z}_2)^{\bar{\alpha}}P(z_1\bar{z}_2),\\[3mm]
%&(III)=[\mathcal{P}(\bar{\partial})(1-\bar{z}_2)^{\bar{\alpha}}](1-z_1)^\alpha P(z_1\bar{z}_2);\\[3mm]
%\end{align*}
%$$
%\mathcal{P}(D)[(1-z_1)^\alpha(1-\bar{z_2})^{\bar{\alpha}}P(z_1\bar{z}_2)]=(I)+(II)+(III),
%$$
%where
\begin{align*}
&\mathcal{P}(D)[(1-z_1)^\alpha(1-\bar{z_2})^{\bar{\alpha}}P(z_1\bar{z}_2)]\\
&=z_1\bar{z}_2(1-z_1)^{\alpha-1}(1-\bar{z}_2)^{\bar{\alpha}-1}(\kappa \alpha\bar{\alpha}P(z_1\bar{z}_2)+2(z_1\bar{z}_2-1)P'(z_1\bar{z}_2))\\[3mm]
&+[\mathcal{P}(\partial)(1-z_1)^\alpha](1-\bar{z}_2)^{\bar{\alpha}}P(z_1\bar{z}_2)
+[\mathcal{P}(\bar{\partial})(1-\bar{z}_2)^{\bar{\alpha}}](1-z_1)^\alpha P(z_1\bar{z}_2),\\[3mm]
\end{align*}
where
\begin{align}\label{Pdelta}
\mathcal{P}(\partial) &:= -\frac{\kappa}{2}(z_1\partial_{z_1})^2+\left(\frac{z_1+1}{z_1-1}+ia\right)z_1\partial_{z_1}+p-q+\frac{q}{1-z_1}-\frac{p}{(1-z_1)^2},\\[3mm] \nonumber
\mathcal{P}(\bar{\partial})&:=-\frac{\kappa}{2}(\bar{z}_2\partial_{\bar z_2})^2+\left(\frac{\bar{z}_2+1}{\bar{z}_2-1}-ia\right)\bar{z}_2\partial_{\bar z_2}+\bar{p}
-\bar{q}+\frac{\bar{q}}{1-\bar{z}_2}-\frac{\bar{p}}{(1-\bar{z}_2)^2}.
\end{align}
Notice that as complex conjugates, $$ \forall z\in \mathbb D,\,\,\,\mathcal{P}(\partial)(1-z)^\alpha=0\Leftrightarrow\mathcal{P}(\bar{\partial})(1-\bar{z})^{\bar{\alpha}}=0.$$ So if we have,
\begin{equation}\label{eq:cond1}
\mathcal{P}(\partial)(1-z_1)^\alpha=0,
\end{equation}
then equation \eqref{eq:PDEG} reduces for \eqref{eq:Gfactor} to
%$$\mathcal{P}(D)[(1-z_1)^\alpha(1-\bar{z_2})^{\bar{\alpha}}P(z_1\bar{z}_2)]=0,\,\,\,P(0)=1$$
\begin{align}\nonumber
%&\mathcal{P}(D)[(1-z_1)^\alpha(1-\bar{z_2})^{\bar{\alpha}}P(z_1\bar{z}_2)]=0\\[3mm]
&\kappa\alpha\bar\alpha P(z_1\bar{z}_2)+2(z_1\bar{z}_2-1)P'(z_1\bar{z}_2)=0,\,\,\,P(0)=1\\[3mm]
&\Leftrightarrow P(z_1\bar{z}_2)=(1-z_1\bar{z}_2)^{-\frac{\kappa}{2} \alpha \bar\alpha} \label{eq:sol}.
\end{align}

Let us now look for $\alpha$ such that Eq. \eqref{eq:cond1} is satisfied. A direct computation readily gives \cite{DNNZ},
 $$
\mathcal{P}(\partial)(1-z)^\alpha=(1-z)^\alpha A+(1-z)^{\alpha-1}B+(1-z)^{\alpha-2}C,
$$
where
\begin{align}\label{pararouge}
A&:=-\frac{\kappa}{2}\alpha^2+(1+ia)\alpha+p-q,\\[3mm]\label{pararouge2}
B&:=\kappa\alpha^2-\left(\frac{\kappa}{2}+3-ia \right)\alpha +q,\\[3mm]\label{pararouge3}
C&:=-\frac{\kappa}{2}\alpha^2+\left(2+\frac{\kappa}{2}\right)\alpha-p.
\end{align}

Notice that $A+B+C=0$.  For any $\alpha \in \mathbb C$,  the choice of  $p, q$ such that $B=0$ and $C=0$, yields a solution to \eqref{eq:cond1}, hence together with \eqref{eq:sol} a solution \eqref{eq:Gfactor} to \eqref{eq:PDEG}. 

We thus get the identity for drifted SLE,
\begin{equation}\label{eq:G12}
G(z_1,\bar z_2)=\mathbb E\left[z_1^{\frac{q}{2}}\frac{f'(z_1)^{\frac{p}{2}}}{f(z_1)^{\frac{q}{2}}}\bar{z}_2^{\frac{\bar{q}}{2}}\frac{\overline{f'(z_2)}^{\frac{\bar{p}}{2}}}{\overline{f(z_2)}^{\frac{\bar{q}}{2}}}\right]=(1-z_1)^\alpha(1-\bar{z}_2)^{\bar{\alpha}}(1-z_1\bar{z}_2)^{-\frac{\kappa}{2} \alpha\bar\alpha},
\end{equation}
 where the quadratic equations $B=0$, $C=0$ yield $p$ and $q$ in terms of $\alpha \in \mathbb C$ under the parametric form,
 \begin{align}\label{prouge}
& p=-\frac{\kappa}{2}\alpha^2+\left(2+\frac{\kappa}{2}\right)\alpha, \quad\,\,\, \alpha\in \mathbb C,\\ \label{qrouge}
& q=-\kappa\alpha^2+\left(\frac{\kappa}{2}+3-ia \right)\alpha . 
 \end{align}
These results generalize those found for real $p,q$ and $a=0$ in \cite{DHLZ}. The complex $p,q$ case, still for  $a=0$, has been thoroughly studied in Ref. \cite{ho:tel-01581324}. These equations generalize in the complex $p,q$ case, hence in four-dimensional space, the so-called \emph{red parabola} of the real $(p,q)$-plane described in \cite{DHLZ}. 
%\subsubsection{Proof of Theorem \ref{mainthm}}\label{real}
By remark \ref{hol}, and \cite[Lemma 3.1]{DNNZ} the space of holomorphic solutions in $z_1$,  $\bar{z}_2$ to the linear PDE in \eqref{eq:PDEG} is one-dimensional. As a consequence, we have proven the following 
\begin{theo}\label{thmrouge}
Let $f(z)=f_0(z)$ where $f_t$ is the drifted whole-plane Loewner process driven by $\lambda(t)=e^{i(at+\sqrt{\kappa}B_t)}, a \in \mathbb R$. For $(p,q)\in \mathbb C^2$, let the complex `red parabola' $\mathcal R$ be defined as the two-dimensional manifold, 
\begin{align} \label{Red} 
p=-\frac{\kappa}{2}\alpha^2+\left(2+\frac{\kappa}{2}\right)\alpha,\quad\,\,\,
q-p=-\frac{\kappa}{2}\alpha^2+(1-ia)\alpha,\quad\,\,\, \alpha\in \mathbb C.
\end{align}
For $(p,q)\in \mathcal R, \,\, z_1, z_2\in \mathbb D$, we identically have 
\begin{equation*}
G(z_1,\bar z_2)=\mathbb E\left[z_1^{\frac{q}{2}}\frac{f'(z_1)^{\frac{p}{2}}}{f(z_1)^{\frac{q}{2}}}\bar{z}_2^{\frac{\bar{q}}{2}}\frac{\overline{f'(z_2)}^{\frac{\bar{p}}{2}}}{\overline{f(z_2)}^{\frac{\bar{q}}{2}}}\right]=(1-z_1)^\alpha(1-\bar{z}_2)^{\bar{\alpha}}(1-z_1\bar{z}_2)^{-\frac{\kappa}{2} \alpha\bar\alpha}.
\end{equation*}
In particular, for $z_1=z_2=z$,
\begin{equation}\label{Gzbarz}
G(z,\bar z)=\mathbb E\left[\left|z^{q}\right|\left|\frac{f'(z)^{p}}{f(z)^{q}}\right|\right]=(1-z)^\alpha(1-\bar{z})^{\bar{\alpha}}(1-z\bar{z})^{-\frac{\kappa}{2} \alpha\bar\alpha}.
\end{equation}
\end{theo}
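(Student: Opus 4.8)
The PDE $\mathcal{P}(D)G=0$ with normalization $G(0,0)=1$ \eqref{eq:PDEG} has already been established for the two-point function via the reversed-flow martingale argument, so the plan is to exhibit an explicit solution of this PDE of the right form and then argue that it coincides with the expectation $G$. Following \cite{DNNZ}, I would first insert the separated ansatz \eqref{eq:Gfactor}, namely $G(z_1,\bar z_2)=(1-z_1)^\alpha(1-\bar z_2)^{\bar\alpha}P(z_1\bar z_2)$ with $P(0)=1$ and a free parameter $\alpha\in\mathbb{C}$. The product of the two conjugate boundary factors reflects the singular behaviour of the random map at the preimage $z=1$ of the tip, while the dependence through the single product $z_1\bar z_2$ is dictated by the rotational structure of the generator $\Lambda$.

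Applying $\mathcal{P}(D)$ \eqref{eq:PD} to this ansatz produces the factorization recorded just above the theorem: a diagonal contribution proportional to $\kappa\alpha\bar\alpha P+2(z_1\bar z_2-1)P'$, together with the two one-variable pieces $[\mathcal{P}(\partial)(1-z_1)^\alpha]$ and its complex conjugate. The key step is then to kill the one-variable pieces by imposing $\mathcal{P}(\partial)(1-z_1)^\alpha=0$ \eqref{eq:cond1}. Expanding $\mathcal{P}(\partial)(1-z)^\alpha=A(1-z)^\alpha+B(1-z)^{\alpha-1}+C(1-z)^{\alpha-2}$ with $A,B,C$ as in \eqref{pararouge}--\eqref{pararouge3}, and using the identity $A+B+C=0$, it suffices to require $B=0$ and $C=0$, which then forces $A=0$ automatically. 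Solving these two quadratics gives $p$ and $q$ as functions of $\alpha$, i.e. \eqref{prouge}--\eqref{qrouge}; subtracting them yields $q-p=-\tfrac{\kappa}{2}\alpha^2+(1-ia)\alpha$, which is precisely the second defining equation of the complex red parabola $\mathcal{R}$ \eqref{Red}.

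With the one-variable pieces gone, the residual equation is the separable first-order ODE $\kappa\alpha\bar\alpha P(w)+2(w-1)P'(w)=0$, $P(0)=1$, in the single variable $w=z_1\bar z_2$, whose unique solution is $P(w)=(1-w)^{-\kappa\alpha\bar\alpha/2}$ \eqref{eq:sol}; this produces the candidate \eqref{eq:G12}. The coincidence limit $z_1=z_2=z$ then follows by writing $(1-\bar z)^{\bar\alpha}=\overline{(1-z)^\alpha}$ and noting $\alpha\bar\alpha=|\alpha|^2\in\mathbb{R}$, $z\bar z=|z|^2$, so the product of boundary factors becomes $|(1-z)^\alpha|^2(1-|z|^2)^{-\kappa|\alpha|^2/2}$; on the probabilistic side the same conjugation identities turn $z_1^{q/2}\bar z_2^{\bar q/2}$, $f'(z)^{p/2}\overline{f'(z)}^{\bar p/2}$ and $f(z)^{-q/2}\overline{f(z)}^{-\bar q/2}$ into the moduli $|z^q|$, $|f'(z)^p|$ and $|f(z)^q|^{-1}$, yielding \eqref{Gzbarz}.

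The step I expect to be genuinely delicate is not the algebra but the \emph{identification} of this explicit solution with the expectation $G$: the operator $\mathcal{P}(D)$ is far from having a unique null solution, so satisfying the PDE together with $G(0,0)=1$ is not by itself enough. The clean way to close this gap is to run the martingale argument in reverse — substitute the explicit candidate $\mathcal{G}(z_1,\bar z_2)=(1-z_1)^\alpha(1-\bar z_2)^{\bar\alpha}(1-z_1\bar z_2)^{-\kappa\alpha\bar\alpha/2}$ back into the process $\mathcal{M}_s=X_s(z_1)Y_s(\bar z_2)\widetilde G(z_{1,s},\bar z_{2,s},t-s)$, verify by It\^o that its drift vanishes (which is exactly the PDE just solved), and then upgrade the resulting \emph{local} martingale to a \emph{true, uniformly integrable} one. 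The real work lies in this final integrability and regularity control, made harder here by the complex exponents $p,q$, which introduce winding and argument factors through the fractional powers and branch choices, and by the need to control the boundary behaviour of the random map $f$ near $z=1$ uniformly enough to justify interchanging the limit $t\to\infty$ with the expectation in \eqref{tendstoinfinity}.
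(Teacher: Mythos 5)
Your proposal follows essentially the same route as the paper: its proof consists precisely of inserting the ansatz \eqref{eq:Gfactor} into the PDE \eqref{eq:PDEG}, killing the one-variable pieces by imposing $B=0$ and $C=0$ in \eqref{pararouge2}--\eqref{pararouge3} (so that $A=0$ follows from $A+B+C=0$), and solving the residual first-order ODE to obtain \eqref{eq:sol} and hence \eqref{eq:G12}. The identification/uniqueness issue you flag at the end is a legitimate concern, but the paper does not address it either; it passes directly from the existence of this algebraic solution of the PDE (which the true expectation $G$ satisfies) to the statement of Theorem \ref{thmrouge}.
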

Hence, in the case of the complex red parabola \eqref{prouge}, \eqref{qrouge}, we find that the complex generalized \emph{bulk} spectrum is simply given by 
\begin{equation}\label{betaalpha}
\beta(p,q;\kappa,a)=\frac{1}{2}\kappa|\alpha|^2.
\end{equation}
\begin{rkk}\label{tip}
\emph{Tip spectrum.}  When $2\Re \alpha +1\leq 0$, the presence in \eqref{Gzbarz} of the singular factor $|(1-z)^\alpha|^2$, besides that of the bulk singular one, brings in an extra singular contribution to the integral means near $z=1$.  
%If $2\Re \alpha +1\leq 0$ in the red parabola parameterization \eqref{Red}, a further contribution to the  spectrum arises from the singular behaviour of the integral means  near $z=1$, due to the presence of the singular factor $|(1-z)^\alpha|^2$ in $G(z,\bar z)$ \eqref{Gzbarz}.
This yields the new complex generalized \emph{tip} spectrum \cite{BS,MR3638311,DHLZ} along the red parabola \eqref{Red},
\begin{equation}\label{betatip}
\beta(p,q;\kappa,a)=\frac{1}{2}\kappa|\alpha|^2-2\Re \alpha -1,\,\,\,2\Re \alpha\leq -1.
\end{equation}
\end{rkk}
Let us now turn to the case of \emph{real} points along the complex red  parabola $\mathcal R$ \eqref{Red}.
 \begin{cor}\label{mainthm}
 Let $f(z)=f_0(z)$ where $f_t$ is the drifted whole-plane Loewner process driven by $\lambda(t)=e^{i(at+\sqrt{\kappa}B_t)}$.  If $p,q$ take the following values:
 $$
%\begin{cases}
p=p(\kappa,a)=\frac{(4+\kappa)^2}{8\kappa}\left(1+\frac{4a^2}{(2+\kappa)^2}\right),\;
q=q(\kappa,a)=\frac{4+\kappa}{2\kappa}\left(1+\frac{4a^2}{(2+\kappa)^2}\right);
%\end{cases}
$$
 then the  generalized integral means spectrum $\beta(p,q)$ of $f$ is equal to $p$.
 \end{cor}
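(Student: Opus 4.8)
The plan is to recognise the prescribed pair $(p,q)$ as a \emph{real} point lying on the complex red parabola $\mathcal R$ of Theorem \ref{thmrouge}, produced by a well-chosen \emph{complex} value of the parameter $\alpha$, and then to read the generalized integral means spectrum directly off the exact one-point function \eqref{Gzbarz}. The crucial feature is that Theorem \ref{thmrouge} supplies a closed form for $G(z,\bar z)$ at such points; the spectrum is therefore \emph{not} obtained as a maximum over the competing phases $\beta_{\mathrm{tip}},\beta_0,\beta_{\mathrm{lin}},\beta_1$, but is simply extracted from the $r\to 1^-$ asymptotics of $\int_{r\partial\mathbb D}G(z,\bar z)\,|dz|$. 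This is precisely what makes the corollary an immediate consequence of the theorem.

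First I would exhibit the relevant $\alpha$. Imposing that $p$ in \eqref{prouge} be real forces, upon writing $\alpha=\alpha_1+i\alpha_2$, either $\alpha\in\mathbb{R}$ or $\alpha_1=\frac{4+\kappa}{2\kappa}$; since the real branch of \eqref{prouge} attains the maximal value $\frac{(4+\kappa)^2}{8\kappa}$, which the prescribed $p(\kappa,a)$ strictly exceeds whenever $a\neq 0$, the real option is excluded and one must have $\Re\alpha=\frac{4+\kappa}{2\kappa}$. Requiring in addition that $q$ in \eqref{qrouge} be real then fixes $\Im\alpha=-\frac{a(4+\kappa)}{\kappa(2+\kappa)}$, that is
\begin{equation*}
\alpha=\frac{4+\kappa}{2\kappa}\left(1-\frac{2ia}{2+\kappa}\right).
\end{equation*}
A direct substitution into \eqref{prouge} and \eqref{qrouge} then confirms that this $\alpha$ reproduces exactly the prescribed real values $p(\kappa,a)$ and $q(\kappa,a)$, so that $(p,q)\in\mathcal R$ and Theorem \ref{thmrouge}, with its explicit $G(z,\bar z)$, applies.

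Next I would compute the spectrum from \eqref{Gzbarz}. Writing $G(z,\bar z)=|1-z|^{2\Re\alpha}\,e^{-2\Im\alpha\,\arg(1-z)}\,(1-|z|^2)^{-\frac{\kappa}{2}|\alpha|^2}$ and integrating over the circle $|z|=r$, the uniform bulk factor $(1-|z|^2)^{-\frac{\kappa}{2}|\alpha|^2}$ produces the divergence $(1-r)^{-\frac{\kappa}{2}|\alpha|^2}$, while near its only singular point $z=1$ the remaining factor is governed by $\bigl[(1-r)^2+\theta^2\bigr]^{\Re\alpha}$. Because here $\Re\alpha=\frac{4+\kappa}{2\kappa}>0$, one has $2\Re\alpha+1>0$, this angular integral stays $O(1)$, and the hypothesis of Remark \ref{tip} is \emph{not} met; the bulk exponent \eqref{betaalpha} therefore governs the spectrum, giving $\beta(p,q;\kappa,a)=\frac{\kappa}{2}|\alpha|^2$.

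It remains to identify this exponent with $p$. From $\Re\alpha=\frac{4+\kappa}{2\kappa}$ one gets $|\alpha|^2=\frac{(4+\kappa)^2}{4\kappa^2}\bigl(1+\frac{4a^2}{(2+\kappa)^2}\bigr)$, whence $\frac{\kappa}{2}|\alpha|^2=\frac{(4+\kappa)^2}{8\kappa}\bigl(1+\frac{4a^2}{(2+\kappa)^2}\bigr)=p(\kappa,a)$, as claimed. The only genuinely delicate point in the argument is the phase determination of the previous paragraph: one must verify that the extra singular factor $|(1-z)^\alpha|^2$ near the tip does not upgrade the divergence — equivalently, that $2\Re\alpha>-1$ — so that $\beta=p$, and not the tip value $\frac{\kappa}{2}|\alpha|^2-2\Re\alpha-1$ of \eqref{betatip}, is the correct answer; everything else is the bookkeeping of placing $(p,q)$ on $\mathcal R$ and the elementary evaluation of $|\alpha|^2$.
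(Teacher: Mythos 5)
Your proposal is correct and follows essentially the same route as the paper: you identify the complex $\alpha=\frac{4+\kappa}{2\kappa}\bigl(1-\frac{2ia}{2+\kappa}\bigr)$ placing the real pair $(p,q)$ on the red parabola $\mathcal R$, invoke the explicit one-point function of Theorem \ref{thmrouge}, use $\Re\alpha>0$ to rule out a tip contribution at $z=1$, and conclude $\beta=\frac{\kappa}{2}|\alpha|^2=p$. The only (harmless) variation is how you exclude real $\alpha$ — via the maximality of $p$ on the real branch of \eqref{prouge} rather than the paper's observation that $\Im q=0$ with $\alpha_2=0$ forces $\alpha_1=0$ or $a=0$.
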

\begin{proof}
%Since we want to calculate the average spectrum, we need $p,q\in\mathbb{R}$, so we need to get  $\alpha$ from $A=B=C=0$.
Let us look for exponents $p,q\in\mathbb{R}$, as parameterized by \eqref{prouge} and \eqref{qrouge}, with $\alpha=\alpha_1+i\alpha_2$ and $\alpha_1,\alpha_2 \in \mathbb{R}$. %By $C=0$, we get $p=-\frac{\kappa}{2}\alpha^2+(2+\frac{\kappa}{2})\alpha$.
The condition $\Im{p}=0$ gives
$$
\alpha_2\left(-\kappa\alpha_1+2+\frac{\kappa}{2}\right)=0,
$$
hence either $\alpha_2=0$ or $\alpha_1={(4+\kappa)}/{2\kappa}$.
%By $B=0$, we get $q=-\kappa\alpha^2+\left(3+\frac{\kappa}{2}\right)\alpha-ia\alpha$. 
The condition $\Im{q}=0$ yields 
$$2\kappa\alpha_1\alpha_2-\left(\frac{\kappa}{2}+3\right)\alpha_2+a\alpha_1=0.$$
So if $\alpha_2=0$, we have either $\alpha_1=0$ or $a=0$. The first case is trivial, while the second one  is the driftless case  studied in   \cite{DHLZ}. So, assuming $a\neq 0$, we obtain
$$
\alpha_1=\frac{4+\kappa}{2\kappa},\quad \alpha_2=-\frac{a(4+\kappa)}{\kappa(2+\kappa)},
$$
and
$$
\alpha=\frac{4+\kappa}{2\kappa}\left(1-i\frac{2a}{2+\kappa}\right),
$$
which in turn gives
\begin{align}
p &= p(\kappa,a):=\frac{(4+\kappa)^2}{8\kappa}\left(1+\frac{4a^2}{(2+\kappa)^2}\right),\label{preal} \\
q &=q(\kappa,a):=\frac{4+\kappa}{2\kappa}\left(1+\frac{4a^2}{(2+\kappa)^2}\right) \label{qreal}.
\end{align}
Notice the further identity  $\frac{\kappa}{2}\alpha\bar \alpha=p$.
So for these special real values of $p$ and $q$ we have
$$
\mathbb E\left[|z|^q\frac{|f'(z)|^p}{|f(z)|^q}\right]=\frac{|(1-z)^\alpha|^2}{(1-|z|^2)^{\frac{\kappa|\alpha|^2}{2}}}.
$$
Notice also that $\Re{\alpha}>0$, so that the singularity at $z=1$ does not contribute to the circle integral 
$$
 \int_{|z|=r<1}\frac{|(1-z)^\alpha|^2}{(1-|z|^2)^{\frac{\kappa|\alpha|^2}{2}}}|dz|\asymp_{r\to 1^-} (1-r)^{-\frac{\kappa|\alpha|^2}{2}}.
$$
So in the case \eqref{preal} \eqref{qreal} the averaged generalized spectrum is simply $\beta(p,q)=p$.
\end{proof}
\subsection{Check of integral means spectra on the integrable complex `red parabola'}\label{Redcheck}
  
As in \cite[Section 5.2.1]{DHLZ}, we will find that along the `red parabola' $\mathcal R$, a succession of explicit complex integral means spectra reproduces the result $\beta:=\kappa|\alpha|^2/2$ of Theorem \ref{thmrouge}. In addition to formulae \eqref{beta1complexe}, \eqref{tbis}, \eqref{beta1spiralecomplexe}, \eqref{tter} for the complex generalized spectrum $\beta_1$ of (drifted) whole-plane SLE, we shall need the SLE complex \emph{bulk} spectrum $\beta_0(p), p\in \mathbb C$, and some extensions of both $\beta_0$ and $\beta_1$ \cite[Section 5.1]{DHLZ}.
 
\subsubsection{SLE complex bulk spectrum}The SLE complex bulk spectrum $\beta_0(p), p\in \mathbb C$, can be obtained from the results of  \cite{PhysRevLett.89.264101,2008NuPhB.802..494D}, and reads \cite{BDone},
\begin{align}\label{beta0}
&\beta_0(p)=s_0(p)+\Re p -1,\\
&s_0(p)=s_0(t,\tilde t),\,\,\, t:=\Re p,\,\,\,\tilde t:=\Im p,
\end{align}
where the expression for $s_0(t,\tilde t)$ is
\begin{align}
\label{s0}
&s_0(t,{\tilde t})=1+b'-t+\sqrt{(b'-t)^2+{\tilde t}^2} -(2b')^{\frac{1}{2}}
\left[b'-t+\sqrt{(b'-t)^2+{\tilde t}^2}\right]^{\frac{1}{2}},\\ \label{bprime}
%\label{s0}
&s_0(t,{0})=1+2(b'-t) -2\sqrt{b'}\sqrt{b'-t},\quad \,\,\,
b':=\frac{(4+\kappa)^2}{8\kappa}.
\end{align}
By introducing the variables,
\begin{eqnarray}
\label{tau0}
\tau_0&:=&b'-t,\\
\label{taubar}
\bar{\tau}&:=&\frac{1}{2}\left[b'-t+\sqrt{(b'-t)^2+{\tilde t}^2}\right]
 =\frac{1}{2}\left(\tau_0+\sqrt{\tau_0^2+{\tilde t}^2}\right),
\end{eqnarray}
%Notice that the last equation is inverted into
%\begin{eqnarray}
%\label{tautaubar}
%\tau=\bar{\tau}-\frac{{\tilde t}^2}{4\bar{\tau}}
%\end{eqnarray}
the function $s_0(t,\tilde t)$ \eqref{s0} can then be recast as a function of the \emph{single} variable
 $\bar\tau$, as
\begin{eqnarray}
\label{staubar}
 s_0(t,{\tilde t})&=&s(\bar \tau):=1+2\bar{\tau} -2\sqrt{b'}\sqrt{\bar{\tau}},\\ \nonumber
% \label{stau0}
s_0(t,{0})&=&s(\tau_0)=1+2{\tau_0} -2\sqrt{b'}\sqrt{{\tau_0}}.
\end{eqnarray} 
\subsubsection{Extensions of complex spectra $\beta_0$ and $\beta_1$} As in Refs.\cite{DNNZ,DHLZ} it is natural to define auxiliary \emph{pseudo}-integral means spectra, 
which help in understanding phase transitions that are mediated by overlaps between various analytic expressions of the spectra. They are obtained by restoring the usual sign indeterminacy in front of square root operations \cite[Section 4.2]{DNNZ}, \cite[Section 5.1]{DHLZ}. Let us define the auxiliary functions,
\begin{align}\label{beta0pm}
\beta^{\pm}_0(p)&:=s^{\pm}_0(p)+\Re p -1,\,\,\, p\in \mathbb C,\\ \label{s0pm}
s^{\pm}_0(p)&=s^{\pm}_0(t,\tilde t)=s^{\pm}(\bar \tau):=1+2\bar{\tau} \pm 2\sqrt{b'}\sqrt{\bar{\tau}},\\\nonumber 
\bar{\tau}&=\frac{1}{2}\big(\Re (b'-p) +|b'-p|\big),\,\,\,\,\,\,\quad \quad b'=\frac{(4+\kappa)^2}{8\kappa}\,\,\,\\ \nonumber
&=\frac{1}{2}\left(b'-t+\sqrt{(b'-t)^2+{\tilde t}^2}\right),\,\, t:=\Re p,\,\,\tilde t:=\Im p,
\end{align}
such that the complex bulk integral means spectrum \eqref{beta0} is given by the $(-)$-branch, $\beta_0\equiv\beta_0^{-}$.
Similarly, we define 
\begin{align}\label{beta1spiralecomplexeext}
&\beta^{\pm}_1(p,q;\kappa,a):= s_1^{\pm}(p-q;\kappa,a)+\Re p-1,\,\,\, p,q\in \mathbb C,\\ \label{s1pm}
&s_1^{\pm}(p-q;\kappa,a)=s_1^{\pm}(\tau):=2\tau+\frac 12\mp \frac 12\sqrt{1+2\kappa \tau},\\ \label{tterext} %\nonumber
&1+2\kappa \tau=\frac{1}{2}\left\{\Re\left[(1+ia)^2+2\kappa (p-q)\right]+\left |(1+ia)^2+2\kappa(p-q)\right |\right\},
\end{align}
such that the complex generalized spectrum \eqref{beta1spiralecomplexe} associated with spiral whole-plane SLE is given by the $(+)$-branch, $\beta_1\equiv\beta_1^{+}$.
\subsubsection{Complex spectra $\beta_1^{\pm}$ along $\mathcal R$}
From parameterization \eqref{Red}, we first find the identity along the red parabola $\mathcal R$,
$$(1+ia)^2+2\kappa (p-q)=(1+ia-\kappa \alpha)^2.$$  
Using the general identity, 
\begin{equation}\label{rez2} \frac{1}{2}\left[\Re (z^2)+|z^2|\right]=(\Re z)^2, z\in \mathbb C,\end{equation}
we find for \eqref{tterext},
$$1+2\kappa \tau=[\Re (1+ia-\kappa \alpha)]^2=(1-\kappa \Re \alpha)^2,$$
so that $s_1^{\pm}$ \eqref{s1pm} reads
$$s_1^{\pm}(\tau)= \frac{1}{\kappa}\left[(1-\kappa \Re \alpha)^2-1\right]+\frac{1}{2}\mp\frac{1}{2}|1-\kappa \Re \alpha|.$$
We simultaneously have from \eqref{Red} and \eqref{rez2}, 
\begin{equation}\label{rep}
\Re p=-\frac{\kappa}{2}\Re (\alpha^2)+\left(2+\frac{\kappa}{2}\right)\Re \alpha=\frac{\kappa}{2}|\alpha^2|-\kappa(\Re \alpha)^2+\left(2+\frac{\kappa}{2}\right)\Re \alpha.
\end{equation}
Combining the last two equations gives
$$s_1^{\pm}(\tau)+\Re p-1=\frac{\kappa}{2}|\alpha^2|+\frac{\kappa}{2}\Re \alpha-\frac{1}{2}\mp\frac{1}{2}|1-\kappa \Re \alpha|.$$
Therefore, we get for \eqref{beta1spiralecomplexeext} the branch-dependent identity,
\begin{align}
\label{beta1pmred}\beta^{\pm}_1(p,q;\kappa,a)=s_1^{\pm}(\tau)+\Re p-1=\frac{\kappa}{2}|\alpha^2|, \,\,\,\kappa \Re \alpha \gtreqqless 1.
%&s_1^{+}(\tau)+\Re p-1=\frac{\kappa}{2}|\alpha^2|, \,\,\,\kappa \Re \alpha \leq 1.
\end{align}
%We therefore find for \eqref{beta1spiralecomplexeext}
%\begin{align}\label{beta1pmred}\beta^{\pm}_1(p,q;\kappa,a=\frac{\kappa}{2}|\alpha^2|, \,\,\,\kappa \Re \alpha \gtreqless 1.\end{align}
This shows that the result of Theorem \ref{thmrouge} for spiral whole-plane SLE is recovered for $\Re \alpha \geq 1/\kappa$ by the `physical' branch $\beta_1^+$ of the generalized complex spectrum, and for $\Re \alpha \leq 1/\kappa$ by its `unphysical' branch $\beta_1^-$, in a way entirely similar to the real case studied in \cite[Section 5.2.1]{DHLZ}.
\subsubsection{Complex spectra $\beta^{\pm}_0$ along $\mathcal R$}
From parameterization \eqref{Red} we first get the identity,
$$b'-p=\frac{\kappa}{2}\left(\alpha-\frac{4+\kappa}{2\kappa}\right)^2,$$
from which we deduce with the help of \eqref{rez2},
$$\bar \tau=\frac{1}{2}\big(\Re (b'-p) +|b'-p|\big)=\frac{\kappa}{2} \left(\Re \alpha-\frac{4+\kappa}{2\kappa}\right)^2.$$
This in turn gives  
\begin{align*}%\label{beta0pmrbis}
s_0^{\pm}(\bar \tau)=1+\kappa \left(\Re \alpha-\frac{4+\kappa}{2\kappa}\right)^2\pm \left(2+\frac{\kappa}{2}\right)\left|\Re \alpha-\frac{4+\kappa}{2\kappa}\right|,
\end{align*}
which, together with \eqref{rep} yields
 \begin{align*}
s^{\pm}_0(\bar \tau)+\Re p -1=\frac{\kappa}{2}|\alpha^2|-\left(2+\frac{\kappa}{2}\right)\Re \alpha +\kappa\left(\frac{4+\kappa}{2\kappa}\right)^2\pm \left(2+\frac{\kappa}{2}\right)\left|\Re \alpha-\frac{4+\kappa}{2\kappa}\right|.
\end{align*}
Thus we find the branch-dependent identity,
\begin{align}
\label{beta0pmred}\beta^{\pm}_0(p;\kappa)=s_0^{\pm}(\bar{\tau})+\Re p-1=\frac{\kappa}{2}|\alpha^2|, \,\,\,\kappa \Re \alpha \gtreqqless 2+\frac{\kappa}{2}.
%&s_1^{+}(\tau)+\Re p-1=\frac{\kappa}{2}|\alpha^2|, \,\,\,\kappa \Re \alpha \leq 1.
\end{align}
We thus see that the result of Theorem \ref{thmrouge} for spiral whole-plane SLE is recovered for $\Re \alpha \leq 2/\kappa+1/2$ by the `physical' branch $\beta_0^-$ of the standard complex spectrum, and for $\Re \alpha \geq 2/\kappa+1/2$ by its `unphysical' branch $\beta_0^+$, in a way again similar to the real case studied in \cite[Section 5.2.1]{DHLZ}. We thus arrive at
\begin{prop}\label{overlap}Along the red parabola $\mathcal R$ \eqref{Red}, the integral means spectrum of the drifted whole-plane SLE is successively given by
\begin{align*}
&\beta^{-}_0(p;\kappa)=\beta^{-}_1(p,q;\kappa,a)=\frac{\kappa}{2}|\alpha^2|,\,\,\, \Re \alpha \in (-\infty,{1}/{\kappa}],\\
&\beta^{-}_0(p;\kappa)=\beta^{+}_1(p,q;\kappa,a)=\frac{\kappa}{2}|\alpha^2|,\,\,\, \Re \alpha \in [1/\kappa,2/\kappa+1/2],\\
&\beta^{+}_0(p;\kappa)=\beta^{+}_1(p,q;\kappa,a)=\frac{\kappa}{2}|\alpha^2|,\,\,\, \Re \alpha \in [2/\kappa+1/2,+\infty).
\end{align*}
\end{prop}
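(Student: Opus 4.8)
The plan is to obtain the proposition as a direct synthesis of the two branch-dependent identities just established, namely \eqref{beta1pmred} for the generalized spectrum $\beta_1^{\pm}$ and \eqref{beta0pmred} for the bulk spectrum $\beta_0^{\pm}$. Each of these already expresses the corresponding value along $\mathcal{R}$ purely in terms of the single real parameter $\Re\alpha$, so the entire task reduces to partitioning the $\Re\alpha$-axis by the two crossover thresholds appearing in those identities and reading off, interval by interval, which branch of each spectrum reproduces the exact value $\frac{\kappa}{2}|\alpha|^2$ guaranteed by Theorem \ref{thmrouge}.

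First I would record the ordering of the two thresholds. The branch of $\beta_1$ switches at $\Re\alpha=1/\kappa$, since \eqref{beta1pmred} carries the condition $\kappa\Re\alpha\gtreqqless 1$, whereas the branch of $\beta_0$ switches at $\Re\alpha=2/\kappa+1/2$, since \eqref{beta0pmred} carries $\kappa\Re\alpha\gtreqqless 2+\frac{\kappa}{2}$. Because $\kappa>0$ one has $1/\kappa<2/\kappa+1/2$, so these two points split the real line into exactly the three consecutive intervals of the statement. Next, in each interval I would simply test the two inequalities against $\Re\alpha$: for $\Re\alpha\leq 1/\kappa$ both $\kappa\Re\alpha\leq 1$ and $\kappa\Re\alpha\leq 2+\frac{\kappa}{2}$ hold, so \eqref{beta1pmred} and \eqref{beta0pmred} select the pair $\beta_1^-,\beta_0^-$; for $1/\kappa\leq\Re\alpha\leq 2/\kappa+1/2$ one has $\kappa\Re\alpha\geq 1$ while $\kappa\Re\alpha\leq 2+\frac{\kappa}{2}$, selecting $\beta_1^+,\beta_0^-$; and for $\Re\alpha\geq 2/\kappa+1/2$ both inequalities reverse, selecting $\beta_1^+,\beta_0^+$. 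In every case the selected branches equal $\frac{\kappa}{2}|\alpha^2|$, which reproduces the three lines of the proposition verbatim.

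The conceptual point I would stress, to make clear that this is a genuine consistency check rather than a tautology, is that along $\mathcal{R}$ the one-point function $G(z,\bar z)$ is known in closed form by Theorem \ref{thmrouge}, so the true integral means spectrum equals $\frac{\kappa}{2}|\alpha|^2$ throughout. The middle interval $[1/\kappa,\,2/\kappa+1/2]$ is precisely the overlap on which the two \emph{physical} branches $\beta_0\equiv\beta_0^-$ and $\beta_1\equiv\beta_1^+$ simultaneously yield this value, while on the two outer intervals one physical branch is analytically continued by the \emph{unphysical} branch of the other spectrum. I do not anticipate a substantial obstacle; the only point requiring care is to confirm that the two thresholds are correctly ordered and that the branch assignments glue continuously across $\Re\alpha=1/\kappa$ and $\Re\alpha=2/\kappa+1/2$, so that the three expressions fit together into a single continuous spectrum, exactly in parallel with the real-variable analysis of \cite[Section 5.2.1]{DHLZ}.
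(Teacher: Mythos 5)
Your proposal is correct and follows essentially the same route as the paper: the Proposition is obtained there exactly as a synthesis of the two branch-dependent identities \eqref{beta1pmred} and \eqref{beta0pmred}, partitioning the $\Re\alpha$-axis by the thresholds $1/\kappa$ and $2/\kappa+1/2$ and reading off which branches reproduce the value $\frac{\kappa}{2}|\alpha^2|$ of Theorem \ref{thmrouge}. Your closing remark about the overlap interval of the two physical branches $\beta_0^-$ and $\beta_1^+$ is also precisely the observation the paper makes in the remark following the Proposition.
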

\begin{rkk}The two `physical' integral means spectra $\beta_0^-$ \eqref{beta0} and $\beta_1^+$ \eqref{beta1spiralecomplexe} overlap along the red parabola $\mathcal R$ \eqref{Red} in the interval $\Re \alpha \in [1/\kappa,2/\kappa+1/2]$,  a result which can be directly compared to \cite[Eqs. (93)-(95)]{DHLZ}. The integral means spectrum is always given by one of those two `physical' spectra, which coincides with the 'physical' branch of the other spectrum in the preceding overlap interval, or with its `unphysical' branch outside the said interval.
\end{rkk}
This corresponds  to the presence of a two-dimensional ``overlap ribbon" on the red parabola, where the complex generalized integral means spectrum takes both the $\beta_0^-$ and $\beta_1^+$ forms. In the 4-dimensional $(p,q)$ space, there exists a larger \emph{phase-transition} manifold, that is defined by the single condition that these two spectra are equal. This three-dimensional manifold must  intersect the above overlap ribbon on a certain phase-transition \emph{line}. The study of such phase-transition manifolds is left to a future work.
\section{General L\'evy processes with special symbols}
In this section, we generalize the results in \cite{DHLZ}, \cite{DNNZ},\ \cite{LY4}, \cite{LY1},\cite{LY2} and \cite{LY3} to the generalized integral means spectrum: in other words, we investigate the  values of $(p, q)$ for which the generalized integral means spectrum for L\'evy-Loewner evolution has an exact form.

 For this purpose, we assume in this section that  $G(z,\bar{z})$ \eqref{eq:GL} may be written as
$$
G(z,\bar{z})=(1-z)(1-\bar{z})h(z,\bar{z}),
$$
where $h(z,\bar z)$ is \emph{separately analytic} with respect to $z$ and $\bar z$ and satisfies the boundary condition $h(0,0)=1$. By applying \eqref{generaloperator}, we get
\begin{align*}
&\Lambda \left[(1-z)(1-\bar{z})h\right]\\
+&\left[\bar{q}(1-z)-\bar{p}\frac{1-z}{1-\bar{z}}+q(1-\bar{z})-p\frac{1-\bar{z}}{1-z}+(\bar{p}-\bar{q}+p-q)(1-z)(1-\bar{z})\right]h\\
&+\frac{1-z}{1-\bar{z}}\bar{z}(1+\bar{z})(h+(\bar{z}-1)\partial_{\bar{z}}h)+\frac{1-\bar{z}}{1-z}z(1+z)(h+(z-1)\partial_z h)=0
\end{align*}
 The  coefficient of $h=h(z,\bar{z})$ in this equation is the sum of a polynomial in $z$, $\bar z$, and of the polar part,
 %\begin{equation*}
% -\bar{p}\frac{1-z}{1-\bar{z}}-p\frac{1-\bar{z}}{1-z}+\frac{1-z}{1-\bar{z}}\bar{z}(1+\bar{z})+\frac{1-\bar{z}}{1-z}z(1+z).
 %\end{equation*}
 \begin{equation*}
 \frac{1-\bar{z}}{1-z}\left[-p+z(1+z)\right]+\frac{1-z}{1-\bar{z}}\left[-\bar p+\bar{z}(1+\bar{z})\right].
 \end{equation*}
The latter clearly becomes pole free, i.e., a polynomial in $z$, $\bar z$ if and only if $p=\bar p=2$,
%\begin{align*}
%&\frac{1-z}{1-\bar{z}}\left(-\bar{p}+(\bar{p}-\bar{q})(1-\bar{z})^2+\bar{z}(1+\bar{z})\right)\\
%+&\frac{1-\bar{z}}{1-z}\left(-p+(p-q)(1-z)^2+z(1+z)\right)\\
%+&\bar{q}(1-z)+q(1-\bar{z}).
%\end{align*}
% This factor is the sum of a polynomial in $z,\bar z$ and 
 %$$R(z,\bar z)=\frac{\bar z(1+\bar z)(1-z)^2+z(1+z)(1-\bar z)^2-p(1-\bar z)^2-\bar p(1-z)^2}{(1-z)(1-\bar z)}.$$
 %In order for $R$ to be a polynomial, its numerator must vanish if $z$ or $\bar z$ is set to be 1 (we think here of $R$ as a function of two independent variables), which leads to $p(=\bar p)=2$. 
 which we shall hereafter assume. Under this condition, the above equation becomes
 \begin{multline}
\Lambda[(1-z)(1-\bar{z})h]+(z+1)(\bar{z}-1)z\partial_zh+
(\bar{z}+1)(z-1)\bar{z}\partial_{\bar{z}}h\\+ \left[(z-1)(3-\bar{q})\bar{z}+(\bar{z}-1)(3-q)z\right]h=0.
\end{multline}
Besides the restriction to $p=2$, we shall also assume that $q\in\R$ and that the L\'evy process $L_t$ is symmetric. We then get
\begin{multline}\label{pdebyh}
\Lambda[(1-z)(1-\bar{z})h]+(z+1)(\bar{z}-1)z\partial_zh+
(\bar{z}+1)(z-1)\bar{z}\partial_{\bar{z}}h\\+(3-q)(2z\bar{z}-z-\bar{z})h=0.
\end{multline}
In order to analyze this equation, we use the Fourier expansion of $t\mapsto h(re^{it},re^{-it})$:
\begin{equation} \label{fourier}
h(z,\bar z)=\sum_{n=-\infty}^{+\infty}\theta_n(\xi)z^n,\;\xi:=z\bar z,\,z=re^{it}.
\end{equation}
When replacing $h$ by this expansion in the equation, we get a recursion formula between the $\theta_n$'s for $n\in \mathbb Z$. 
More precisely, by writing that the $n$'th Fourier coefficient of the left side of \eqref{pdebyh} vanishes, we obtain for $n \in \mathbb Z$,
\begin{align} \nonumber
&2\xi(\xi-1)\theta_n'(\xi)-\Big(\eta_n+n+(\eta_n+2q-n-6)\xi\Big)\theta_n\\ \label{equn}%[5mm]
&+\xi\Big(\eta_n+n+q-2\Big)\theta_{n+1}(\xi)+(\eta_n-n+q-2)\theta_{n-1}(\xi)=0.
\end{align}

Note  that the assumption that $L_t$ is symmetric implies that $h(z,\bar z)$ is symmetric w.r.t. $z$ and $\bar z$, which translates into,
\begin{equation}\label{-nn}
\theta_{-n}(\xi)=\xi^n\theta_n(\xi),
\end{equation}
from which we may simply recast expansion \eqref{fourier} above as
\begin{equation}\label{thetaseries} h(z,\bar z)=\theta_0(\xi)+\sum_{n=1}^\infty \theta_n(\xi)(z^n+\bar{z}^n)=\theta_0(\xi)+\sum_{n=1}^\infty 2\theta_n(\xi)r^n\cos{nt}.\end{equation}
Before continuing, let us recall that we are looking for the integral means, i.e., the angular integrals,
$$I(r)=\int_{0}^{2\pi}G(re^{it},re^{-it})dt,$$
that can be easily expressed in terms of $\theta_j$'s as 
\begin{equation}\label{eqn:I}
\frac{I(r)}{2\pi}=(1+r^2)\theta_0(r^2)-2r^2\theta_1(r^2),
\end{equation}
so that we only need to compute $\theta_0$ and $\theta_1$. For later purposes, let us also mention that $\theta_0(0)=1$. 
We thus focus on the equations for  $n=0$ and $n=1$ (recall that $\eta_0=0$ and that $\theta_{-1}(\xi)=\xi\theta_1(\xi)$),
\begin{align}\label{equ0}
&(\xi-1)\theta_0'(\xi)-(q-3)\theta_0(\xi)+(q-2)\theta_1(\xi)=0,\\ \nonumber
&2\xi(\xi-1)\theta_1'(\xi)-\left[\eta_1+1+(\eta_1+2q-7)\xi\right]\theta_1(\xi)+(\eta_1+q-1)\xi\theta_2(\xi)\\ \label{eqn:1}
&+(\eta_1+q-3)\theta_0(\xi)=0.
\end{align}
There are two simple cases where we can explicitly compute $\theta_0$ and $\theta_1$.
\begin{enumerate}
\item The first case is when the coefficient of the $\theta_0$-term in the second equation vanishes, i.e., when 
\begin{equation}\label{eta3-q}\eta_1=3-q,
\end{equation}
 which requires $q<3$. 
In this case we may take $\theta_1=0$ (and actually $\theta_n=0$ for $n\geq 1$) and $\theta_0$ to be the solution to 
$$(\xi-1)\theta_0'(\xi)-(q-3)\theta_0(\xi)=0,\;\theta_0(0)=1.$$
This gives 
\begin{equation}\label{3-q}h(z,\bar z)=\theta_0(\xi)=\frac{1}{(1-\xi)^{3-q}},
\end{equation}
and
$$G(z,\bar z)=\frac{(1-z)(1-\bar z)}{(1-z\bar z)^{3-q}},$$ 
so that $\beta(2,q)=3-q >0$. 
\item The second case is by letting the coefficient of the $\theta_2$-term vanish in the second equation, i.e., by taking 
\begin{equation}\label{eta1-q}\eta_1=1-q,
\end{equation}
which requires $q\leq 1$. 
We then get a system of coupled ODEs for $\theta_0$ and $\theta_1$, which we must solve with initial data $\theta_0(0)=1$, $\theta_1(0)$ finite. We find
\begin{equation}\label{1-q} 
\begin{cases}
\theta_0(\xi)=(1+\xi)(1-\xi)^{-(4-q)}\\[4mm]
\theta_1(\xi)=-\frac{2}{2-q}(1-\xi)^{-(4-q)},
\end{cases}
\end{equation}
from which we deduce  that $\beta(2,q)=4-q (>3)$.
\end{enumerate}
Eqs. \eqref{3-q} and \eqref{1-q}  generalize results of Ref. \cite{LY2}  to $q\neq 0$. 

As noticed in \cite{LY3}, the preceding method generalizes: more precisely for any $n\geq 1$, if we let the coefficient of the $\theta_{n-1}$-term vanish in the $n$th equation, i.e., by taking $\eta_n=2-q+n$, then the solution of the system is $\theta_p=0$ for $p\geq n$, while $\theta_0,..\theta_{n-1}$ are the solutions of the $n$ first equations, with the initial data $\theta_0(0)=1$.\\
Another possible generalization is by letting vanish, again in the $n$th equation, the coefficient of the $\theta_{n+1}$-term, i.e., by taking $\eta_n=2-q-n$; then the $n$ first equations allow us to compute $\theta_0,..\theta_{n-1}$, which is more than needed since we only need to know $\theta_0$ and $\theta_1$. Having dealt in the last section with the $n=1$ case, let us now investigate the $n=2$ case.
%\subsection{Essai}\label{4-qcase}
\subsection{The $\eta_2=4-q$ case}\label{4-qcase}
We take $\theta_{n\geq 2}=0$, and have to solve the following system of differential equations:
\begin{align*}
\begin{cases}
(x-1)\theta_0'(x)+(3-q)\theta_0(x)+(q-2)\theta_1(x)=0\\
2x(x-1)\theta_1'(x)-\left[1+\eta_1+(\eta_1+2q-7)x\right]\theta_1(x)+(\eta_1-3+q)\theta_0(x)=0.
\end{cases}
\end{align*}
We can assume the existence of a real number $\delta$ (to be determined later) and of two functions $f_0, f_1:\,[0,1]\to \R$ such that $\theta_0$ and $\theta_1$ have the following form,
\begin{equation*}
\theta_0(x)=(1-x)^{-\delta}f_0(x), \quad \theta_1(x)=(1-x)^{-\delta}f_1(x).
\end{equation*}
The ODE system satisfied by $f_0$ and $f_1$ is then
\begin{align}\label{eqn::a10}
\begin{cases}
(x-1)f_0'+(3-q-\delta)f_0+(q-2)f_1=0\\
2x(x-1)f_1'-\left[1+\eta_1+(\eta_1+2q-7+2\delta)x\right]f_1+(\eta_1-3+q)f_0=0.
\end{cases}
\end{align}
The form of the equation coefficients suggests we define $\delta'$ as,
\begin{equation}
\label{dd'}
\delta':=\delta+q-3,
\end{equation}
so that 
\begin{align}\label{eqn::a1}
\begin{cases}
(x-1)f_0'-\delta' f_0+(q-2)f_1=0\\
2x(x-1)f_1'-\big(1+\eta_1+(\eta_1-1+2\delta')x\big)f_1+(\eta_1-3+q)f_0=0.
\end{cases}
\end{align}
From the first equation we extract $f_1$ as a function of $f_0$ and $f_0'$ for $q\neq 2$,
\begin{equation}\label{f1}
f_1(x)=\frac{x-1}{2-q}f_0'(x)-\frac{\delta'}{2-q}f_0(x),\quad q\neq 2.
\end{equation}
Substituting into \eqref{eqn::a1}, we obtain the following degree two differential equation satisfied by $f_0$, 
\begin{align}\label{eqn::a2}
&\frac{2x(x-1)^2}{2-q}f_0''(x)+\left(2x(x-1)\frac{1-\delta'}{2-q}-\frac{x-1}{2-q}\left(1+\eta_1+(\eta_1-1+2\delta')x\right)\right)f_0'(x)+\nonumber\\
&\left(\eta_1+q-3+\frac{\delta'}{2-q}\left(1+\eta_1+(\eta_1-1+2\delta')x\right)\right)f_0(x)=0.
\end{align}
We want to find $\delta'$ such that this equation reads 
\begin{align}\label{eqn::hp}
x(x-1)f_0''(x)+((a+b+1)x-c)f_0'(x)+abf_0(x)=0,
\end{align}
so that $f_0(x)=\,_2F_1(a,b,c;x)$, the hypergeometric function.
This  identification shows that $\delta'$ must obey the following relation, 
\begin{align}\label{eqn::b2}
E(\delta',\eta_1,q):=2\delta'(\delta'+\eta_1)+(2-q)(\eta_1+q-3)=0, 
\end{align}
so that \eqref{eqn::a2} simplifies into
\begin{align}\label{eqn::eqn::a8}
&x(x-1)f_0''(x)+\left(\frac{3-\eta_1-4\delta'}{2}x-\frac{1+\eta_1}{2}\right)f_0'(x)\\
&-\delta'\,\frac{1-\eta_1-2\delta'}{2}f_0(x)=0.
\end{align}
This yields
\begin{align}\label{eqn:param}
a=-\delta',\quad b=\frac{1-\eta_1}{2}-\delta', \quad c=\frac{1+\eta_1}{2},
\end{align}
and we can choose $f_0(x)= %{}_2F_1
\hypgeo{2}{1}(a,b;c;x)$ and $\theta_0(x)=(1-x)^{-\delta} \hypgeo{2}{1}(a,b;c;x)$.  
The other independent solution to the hypergeometric equation is $$x^{1-c}\hypgeo{2}{1}(1+a-c,1+b-c;2-c;x),$$ which is \emph{non-analytic} at the origin $x=0$, hence is discarded as a candidate for $f_0$.\\

Before continuing, let us consider general symmetric L\'evy processes: what are the possible couples $(\eta_1,\eta_2)$? We know that we must have $\eta_2\geq 0$ and the L\'evy-Khinchine formula implies that $\eta_1\geq \eta_2/4$. It happens  that every couple $(\eta_1,\eta_2)$ such that $\eta_1\geq\eta_2/4\geq 0$ actually corresponds to some (symmetric) L\'evy process \cite{applebaum2009} \footnote{R\'emy Rhodes, private communication.}. Moreover the case $\eta_1=\eta_2/4$ exactly corresponds to an SLE process, while in the case  $\eta_2=0$, $L_t$ is a pure jump process with jumps equal to $k\pi$, $k$ odd (notice that the other case is similar with $k$ even, yielding a continuous process on the circle).

For $\eta_2=4-q$, the preceeding constraints on $(\eta_1,\eta_2)$ become $q\leq 4$ and $\eta_1\geq\eta_2/4=1-q/4$, with equality for SLE$_\kappa$, with $\kappa=2-q/2$. Let us then define 
\begin{equation}\label{D4-q}
D_{4-q}:=\{(q,\eta_1)\in\R^2;\,q\leq 4 ,\,\eta_1\geq 1- q/4\}.
\end{equation}
Let us return to Eq. \eqref{eqn::b2}. It can be shown (see below) that $\eta_1^2-2(2-q)(\eta_1+q-3)\geq0$ in $D_{4-q}$; we may thus extract $\delta'$ from \eqref{eqn::b2}:
\begin{align} \nonumber
\delta_\pm&=\delta'_\pm+3-q\\ \label{eqn:delta'}
\delta'_\pm&=\delta'_\pm(q, \eta_1):=\frac{1}{2}\left(-\eta_1\pm\sqrt{\eta_1^2-2(2-q)(\eta_1+q-3)}\right).
\end{align}
Replacing in \eqref{eqn:param} $\delta'$ by its value in terms of $\eta_1,q$, we get:
\begin{align}\label{apm}
&a_{\pm}=a_{\pm}(\eta_1, q):=\frac{1}{2} \eta_1 \mp\frac{1}{2}\sqrt{\eta_1^2-2(2-q)(\eta_1+q-3)}\\ \label{bpm}
& b_{\pm}=b_{\pm}(\eta_1, q):=\frac{1}{2} \mp \frac{1}{2} \sqrt{\eta_1^2-2(2-q)(\eta_1+q-3)}\\ \label{cnpm}
&c=c(\eta_1, q):=\frac{1+\eta_1}{2}.
\end{align}
We can now compute $f_1$ as,
\begin{align}
f_1(x)&=\frac{x-1}{2-q}f_0'(x)+\frac{-\delta'}{2-q}f_0(x)\nonumber\\
&=\frac{a}{2-q}\hypgeo{2}{1}(a,b;c;x)+\frac{ab(x-1)}{c(2-q)}\hypgeo{2}{1}(a+1,b+1;c+1;x)\label{f_1}.
\end{align}
Using expression \eqref{eqn:I} for the integral means, we get
\begin{align}\label{Ir}
\frac{1}{2\pi}I(r)=(1-r^2)^{-\delta}\left[(1+r^2)f_0(r^2)-2r^2f_1(r^2)\right].
\end{align}
\subsubsection{The $(+)$-branch}\label{+branch}
Let us first consider the case where $$a=a_+,\, b=b_+.$$
 We then have $c-a_+ - b_+=\sqrt{{\eta_1^2-2(2-q)(\eta_1+q-3)}}>0$, so that 
 \begin{equation}\label{Gamma1} 
 0<\hypgeo{2}{1}(a,b;c;1)=\frac{\Gamma(c)\Gamma(c-a-b)}{\Gamma(c-a)\Gamma(c-b)}<\infty.
 \end{equation}
If furthermore, $a+b+1< c$, then $\hypgeo{2}{1}(a+1,b+1;c+1;1)<\infty$, and
\[
\Delta:=\lim_{r\rightarrow 1} \left((1+r^2)f_0(r^2)-2r^2f_1(r^2)\right)=2\left(1-\frac{a}{2-q}\right)\hypgeo{2}{1}(a,b;c;1)<\infty.
\]
  We have thus proven that if $\Delta> 0$, the spectrum is
\[
\beta(2, q)=3-q+\delta_{+}'(\eta_1,q)=3-q+\frac{1}{2}\left(\sqrt{\eta_1^2-2(2-q)(\eta_1+q-3)}-\eta_1\right).
\]
%\textcolor{red}{[But $\Delta=0$ implies that $\eta_1=1-q$, a known case for which the spectrum is $4-q$ which is equal to $\delta_+$.]} 
One can check that the coefficient $A:=1-a_+/(2-q)>0$ for $q<3$, whereas for $q\geq 3$, it vanishes for $\eta_1=1-q$ or $\eta_1=3-q$.  In $D_{4-q}$, $A$ is thus positive since $\eta_1 \geq 1-q/4$ is located outside the non-positive interval $[1-q,3-q]$.  So we get that the spectrum  is equal to $\delta_+$ on the subset of $D_{4-q}$ of points for which $a+b+1<c$.  
This condition is just $Z>1$, with 
\begin{equation}
\label{Z}
Z=Z(\eta_1,q):={\eta_1^2-2(2-q)(\eta_1+q-3)}.
\end{equation} Note that $Z$ can be written as 
\begin{align}\label{Zbis}
Z&=(q-3)^2+(\eta_1+q-2)^2-1\\ 
\label{Zter}
&=X^2+Y^2-1, \quad X:=q-3, \quad Y:=\eta_1+q-2,
\end{align}
so that the condition $Z>1$ corresponds to the \emph{exterior} of the \textcolor{blue}{\bf blue ellipse} of equation $Z=1$,  i.e., $X^2+Y^2=2$ in the $(q,\eta_1)$-plane (Fig. \ref{qeta1}). 

Notice that the set $Z=0$ is the co-centered \textcolor{ao(english)}{\bf green ellipse} of equation $X^2+Y^2=1$, which only intersects $D_{4-q}$ at the tangency point $(\frac{12}{5},\frac{2}{5})$ with the $(\eta_1=1-q/4)$-line, implying that $Z\geq 0$ in $D_{4-q}$, as mentioned above.

In the \emph{interior} of the blue ellipse, we instead have $a+b< c< a+b+1$, and we use the Euler transformation,
\begin{equation}\label{euler}
\hypgeo{2}{1}(a,b;c;x)=(1-x)^{c-a-b}\hypgeo{2}{1}(c-a,c-b;c;x),
\end{equation}
so that 
\begin{equation}\label{euler+}
\hypgeo{2}{1}(a+1,b+1;c+1;x)=(1-x)^{c-a-b-1}\hypgeo{2}{1}(c-a,c-b;c+1;x).
\end{equation}
We then get from \eqref{f_1} 
\begin{align}
f_1(x)%&=\frac{x-1}{2-q}f_0'(x)+\frac{-\delta'}{2-q}f_0(x)\nonumber\\
=\frac{a}{2-q}\hypgeo{2}{1}(a,b;c;x)-\frac{ab}{c}\frac{1}{(2-q)}(1-x)^{c-a-b}\hypgeo{2}{1}(c-a,c-b;c+1;x)\label{f_1+}.
\end{align}
We now have 
$\hypgeo{2}{1}(a,b;c;1)<\infty$ and $\hypgeo{2}{1}(c-a,c-b;c+1;1)<\infty$, so the second term in \eqref{f_1+} still vanishes as $x\to 1$. This again yields 
\[
0<\Delta=2\left(1-\frac{a}{2-q}\right)\hypgeo{2}{1}(a,b;c;1)<\infty,
\]
i.e., the same result as found outside the blue ellipse. We thus find for the $(+)$-branch, 
\begin{equation} \label{Ireq}\frac{1}{2\pi}I(r)\sim 2 (1-r^2)^{-\delta_{+}}\left[1-\frac{a_+}{2-q}\right] \frac{\Gamma(c)\Gamma(c-a_+-b_+)}{\Gamma(c-a_+)\Gamma(c-b_+)},\quad r\to 1^-,\quad q\neq 2.
\end{equation}
\subsubsection{The $(-)$-branch}
Let us now consider the other possible choice, $$a=a_{-},\, b=b_{-}.$$
We then use  both \eqref{euler} and \eqref{euler+} in \eqref{f_1}, 
\begin{align}
f_1(x)%&=\frac{x-1}{2-q}f_0'(x)+\frac{-\delta'}{2-q}f_0(x)\nonumber\\
=(1-x)^{c-a-b}\left[\frac{a}{2-q}\hypgeo{2}{1}(c-a,c-b;c;x)-\frac{ab}{c}\frac{1}{(2-q)}\hypgeo{2}{1}(c-a,c-b;c+1;x)\right]\label{f_1-},
\end{align}
where now $c-a-b<0$, and where 
\begin{align*}&\hypgeo{2}{1}(c-a,c-b;c;1)=\frac{\Gamma(c)\Gamma(a+b-c)}{\Gamma(a)\Gamma(b)}<\infty,\\
&\hypgeo{2}{1}(c-a,c-b;c+1;1)=\frac{\Gamma(c+1)\Gamma(a+b-c+1)}{\Gamma(a+1)\Gamma(b+1)}<\infty.
\end{align*}
From the well-known identity $\Gamma(x+1)=x\Gamma(x)$, we finally get for \eqref{f_1-}
$$f_1(x)=(1-x)^{c-a-b}\frac{c-b}{2-q} \frac{\Gamma(c)\Gamma(a+b-c)}{\Gamma(a)\Gamma(b)}.$$
$I(r)$ \eqref{Ir} is now equivalent for $r^2=x\to 1$ to
$$\frac{1}{2\pi}I(r)\sim 2 (1-x)^{-\delta_{-}+c-a-b}\left[1-\frac{c-b}{2-q}\right] \frac{\Gamma(c)\Gamma(a+b-c)}{\Gamma(a)\Gamma(b)}.$$
where we recall that $a=a_{-}, b=b_{-}$. We now use the \emph{duality} formulae
$$c-a_{-}=b_+,\,\,c-b_{-}=a_+,\,\,a_{-}+b_{-}-c=c-a_+-b_+,$$
and $$\delta_{\pm}=3-q+\delta'_{\pm},\,\,\delta'_{-}-\delta'_+=c-a_{-}-b_{-},$$
so that for $r\to 1^-$,
$$\frac{1}{2\pi}I(r)\sim 2 (1-r^2)^{-\delta_{+}}\left[1-\frac{a_+}{2-q}\right] \frac{\Gamma(c)\Gamma(c-a_+-b_+)}{\Gamma(c-a_+)\Gamma(c-b_+)},$$
which is exactly the same as the result \eqref{Ireq} for the $(+)$-choice in Section \ref{+branch}. 
\begin{rkk}{\bf The $q=2$ case.}\label{q=2}
Up to now, we have assumed that $q\neq 2$. The solution for $q=2$, thus  $\eta_2=2, \eta_1\geq 1/2$, can be obtained by continuity as the $q\to 2$ limit of $f_0$ and $f_1$ \eqref{f1}. Eq. \eqref{apm} gives 
\begin{align*}
&a_+ =\frac{1}{2\eta_1}(\eta_1-1)(q-2) +\mathcal O\left((q-2)^2\right),\\
&b_+=\frac{1}{2}(1-\eta_1)+\mathcal O(q-2),\quad c=\frac{1}{2}(1+\eta_1),
\end{align*}
so that $f_0=1$, and \eqref{f1} has a finite limit when $q\to 2$,
\begin{equation}\label{f1q=2} 
f_1(x)=\frac{\eta_1-1}{2\eta_1}\left[1-\frac{1-\eta_1}{1+\eta_1}(1-x)\hypgeo{2}{1}(1,\tfrac{1}{2}(3-\eta_1);\tfrac{1}{2}(3+\eta_1);x)\right],
\end{equation}
together with $\delta'_+=0$ and $\delta_+=1$. Eq. \eqref{Ireq} simply becomes 
$\frac{1}{2\pi}I(r)\sim  (1-r^2)^{-1}\left(3-1/\eta_1\right)$, as $r\to 1^-$, so that $\beta(p=2,q=2)=1$.
\end{rkk}
We therefore proved the following
%\begin{equation}
%\hypgeo{2}{1}(a_{-},b_{-};c;x)=(1-x)^{a_++b_+-c}\hypgeo{2}{1}(a_+,b_+;c;x)
%\end{equation}
%so that 
%\begin{equation}
%\hypgeo{2}{1}(a_{-}+1,b_{-}+1;c+1;x)=(1-x)^{a_++b_+-c-1}\hypgeo{2}{1}(a_+,b_+;c+1;x).
%\end{equation}}
\begin{theo}
For a L\'evy process, with symbols $\eta_2=4-q,\,\,\eta_1\geq \eta_2 /4$ for $q \leq 4$, the generalized integral means spectrum of the corresponding LLE is 
\begin{equation}\label{beta2q}\beta(2, q)=3-q+\frac{1}{2}\left(\sqrt{\eta_1^2-2(2-q)(\eta_1+q-3)}-\eta_1\right).\end{equation}
In particular, if $q=2,\eta_2=2$, the standard integral means spectrum at $p=2$ of the logarithm of the LLE is independent of $\eta_1 \geq 1/2$, and equal to $\beta(2,2)=1$. 
\end{theo}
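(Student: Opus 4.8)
The plan is to assemble the ingredients developed in this section into a single asymptotic analysis of the integral means $I(r)$ as $r\to 1^-$. Starting from the reduction valid for a symmetric L\'evy process at $p=2$, I would write $G(z,\bar z)=(1-z)(1-\bar z)h(z,\bar z)$ and invoke the PDE \eqref{pdebyh} together with the Fourier expansion \eqref{thetaseries}. The key simplification is that, by \eqref{eqn:I}, only the two coefficients $\theta_0$ and $\theta_1$ enter $I(r)$, so I would impose the hypothesis $\eta_2=4-q$, set $\theta_{n\geq 2}=0$, and thereby reduce the recursion to the closed $2\times 2$ ODE system \eqref{eqn::a10}.

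Next I would perform the substitution $\theta_j(x)=(1-x)^{-\delta}f_j(x)$ and fix $\delta$ (equivalently $\delta'=\delta+q-3$) so that $f_0$ solves the hypergeometric equation \eqref{eqn::hp}, with parameters $a,b,c$ read off in \eqref{apm}--\eqref{cnpm}. Requiring the solution to be analytic at the origin forces $f_0={}_2F_1(a,b;c;x)$, and then $f_1$ is recovered from \eqref{f_1}. At this stage $I(r)$ is expressed via \eqref{Ir} entirely in terms of hypergeometric functions evaluated near $x=r^2\to 1$, so the problem becomes one of controlling their boundary behaviour.

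The analytic heart of the argument is precisely this $x\to 1$ limit, and I would treat the two admissible parameter choices $(a_+,b_+)$ and $(a_-,b_-)$ in turn. For the $(+)$-branch one has $c-a_+-b_+=\sqrt{Z}>0$ with $Z$ as in \eqref{Z}, so Gauss's summation \eqref{Gamma1} gives a finite nonzero value at $x=1$; when $a+b+1<c$ (the exterior of the blue ellipse $Z=1$) the $\theta_1$ contribution also converges, whereas in the interior one first applies the Euler transformation \eqref{euler}--\eqref{euler+} to expose a vanishing $(1-x)^{\sqrt{Z}}$ factor. In either regime the bracket in \eqref{Ir} tends to a strictly positive constant $\Delta$, provided the prefactor $1-a_+/(2-q)$ does not vanish; I would verify this by checking that in $D_{4-q}$ \eqref{D4-q} the value $\eta_1\geq 1-q/4$ lies outside the interval $[1-q,3-q]$ on which that coefficient could vanish. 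This yields the leading behaviour \eqref{Ireq}, hence $\beta(2,q)=\delta_+=3-q+\delta'_+$, which is exactly \eqref{beta2q}.

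Finally I would show the $(-)$-branch produces the same spectrum: applying both Euler transformations in \eqref{f_1} extracts an explicit $(1-x)^{c-a_--b_-}$ factor, and the duality identities $c-a_-=b_+$, $c-b_-=a_+$, together with $\delta'_--\delta'_+=c-a_--b_-$, collapse the result into the identical asymptotic \eqref{Ireq}. The case $q=2$, excluded above because of the $1/(2-q)$ factors, then follows by continuity as in Remark \ref{q=2}, giving $f_0=1$ and $\beta(2,2)=1$ independently of $\eta_1\geq 1/2$. The main obstacle I anticipate is this branch-dependent boundary analysis of ${}_2F_1$ at $x=1$: one must correctly track the sign of $c-a-b$ across the blue ellipse and confirm that the leading coefficient $\Delta$ is genuinely nonzero, since a spurious cancellation there would depress the exponent and falsify the claimed spectrum.
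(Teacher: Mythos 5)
Your proposal is correct and follows essentially the same route as the paper's own proof: the reduction to the closed system for $\theta_0,\theta_1$ under $\eta_2=4-q$, the substitution $\theta_j=(1-x)^{-\delta}f_j$ forcing the hypergeometric equation, the branch-dependent analysis of ${}_2F_1$ at $x=1$ via Gauss summation and the Euler transformations, the positivity check of the coefficient $1-a_+/(2-q)$ in $D_{4-q}$, the duality identities collapsing the $(-)$-branch onto \eqref{Ireq}, and the continuity argument at $q=2$. No gaps to report.
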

In the $q=0$ case, Eq. \eqref{beta2q} recovers a result of \cite{LY3}.
\begin{figure}[h!]
\begin{center}
\includegraphics[width=9cm]{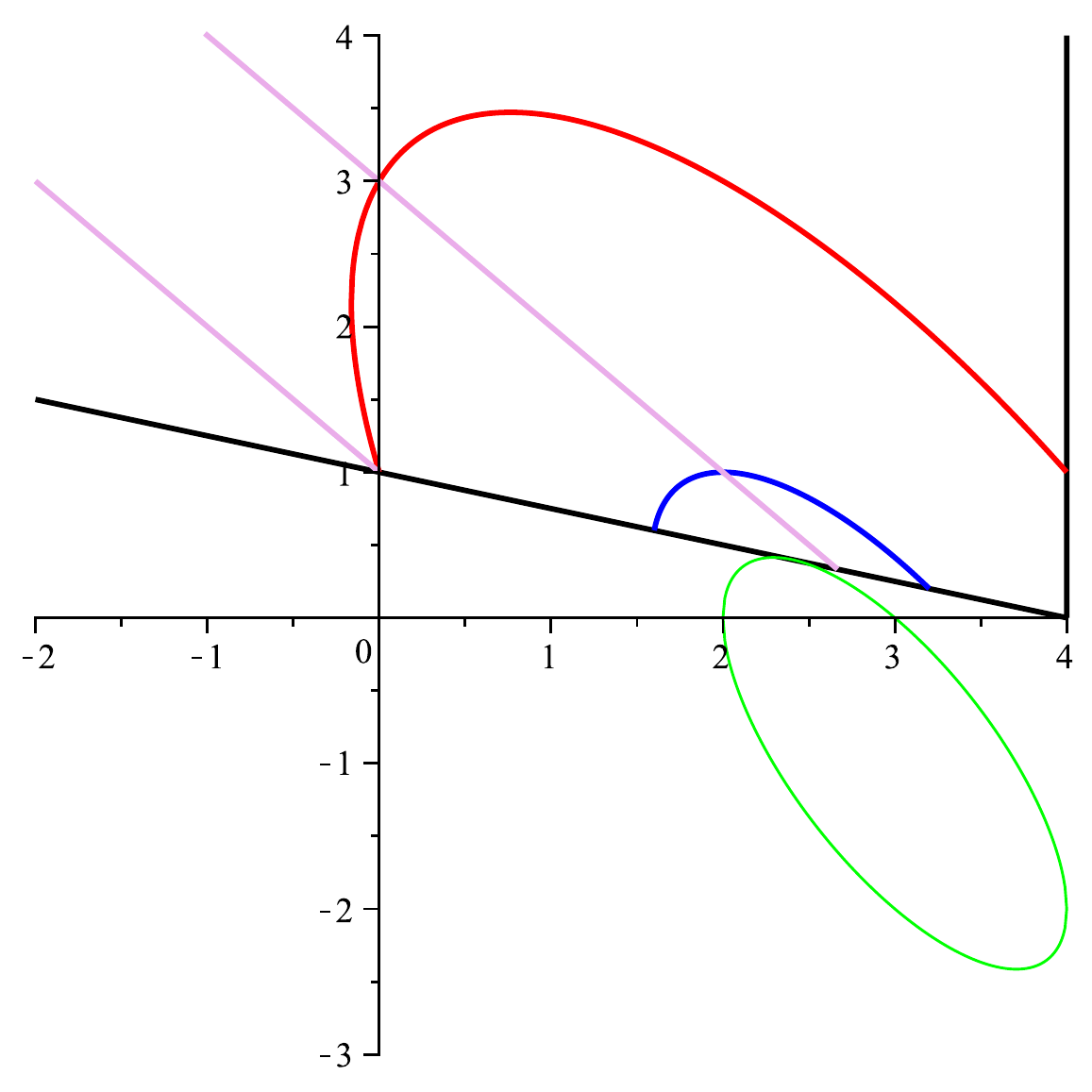}
\end{center}
\caption{Domain lines in the $(q,\eta_1)$ plane, for the $\eta_2=4-q$ case. The two purple straight lines have for equations, $\eta_1=1-q$ and $\eta_1=3-q$. The \textcolor{ao(english)}{\bf green}, \textcolor{blue}{\bf blue} and \textcolor{red}{\bf red} ellipses have for respective equations $X^2+Y^2=1,2,10$, with $X=q-3,Y=\eta_1+q-2$, with same center located at $(q,\eta_1)=(3,-1)$.}
\label{qeta1} 
\end{figure}
\subsubsection{Algebraic solutions}As a transition to the next section, let us look for purely algebraic solutions of the form,
$$\theta_j(x)=(1-x)^{-\alpha} f_j(x),\quad f_j(x)=A_j^0+A_j^1(1-x),\,j=0,1.$$
where $A_j^k,k=0,1$ are fixed coefficients and with the understanding that $\theta_j=0$ for $j\geq 2$. 
Recall that the hypergeometric function $\hypgeo{2}{1}$ is given by the well-known series expansion,
\begin{equation}\label{hypergeo}
\hypgeo{2}{1}(a,b;c;x)=\sum_{n=0}^\infty \frac{(a)_n(b)_n}{(c)_n} \frac{x^n}{n!}=1+\frac{a b}{c} \frac{x}{1!}+\frac{a(a+1)b(b+1)}{c(c+1)} \frac{x^2}{2!}+\cdots,
\end{equation} 
with 
\begin{align*}
(a)_n=
\begin{cases}
1\quad \quad \quad  \quad  \quad  \quad \quad \quad  \quad  \quad \,\, n=0\\
a(a+1)\cdots (a+n-1) \quad  n\geq 1,
\end{cases}
\end{align*}
so that the fact that $f_0(x)=\hypgeo{2}{1}(a,b;c;x)$ is at most linear in $x$ implies that either: $a=a_+=0$; $a_+=-1$; $b=b_+=0$; or $b_+=-1$. The function $f_1$ \eqref{f1} is then also linear, assuming for now that $q\neq 2$. For later convenience, let us write \eqref{apm} \eqref{bpm} as
\begin{align}\label{a+b+}
&a_+=\tfrac{1}{2}\eta_1- \tfrac{1}{2}{Z}^{1/2},\quad
%\label{b+
b_+=\tfrac{1}{2}- \tfrac{1}{2}{Z}^{1/2},\\ \nonumber
&Z=\eta_1^2-2(2-q)(\eta_1+q-3).
\end{align}\\
$\bullet$ {\bf The $a_+=0$ case}.
The equation $Z=\eta_1^2$ gives for $q\neq 2$, $\eta_1=3-q$, which recovers the algebraic solutions, Eqs. \eqref{eta3-q} and \eqref{3-q}. For $q=2$, Remark \eqref{q=2} and Eq. \eqref{f1q=2} yield for $\eta_1=1$, $f_0=1,f_1=0$, in agreement with \eqref{3-q}.\\
 
\noindent$\bullet$ {\bf The $a_+=-1$ case}. The equation $Z=(\eta_1+2)^2$ yields $(4-q) (\eta_1+q-1)=0$. Hence we first recover the algebraic case $\eta_1=1-q$, as in Eqs. \eqref{eta1-q} and \eqref{1-q}. 
The other case, $q=4$, is the vertical boundary line for $D_{4-q}$ (Fig \ref{qeta1}), where $\eta_2=0$, and for which Eq. \eqref{beta2q} gives $\delta_+=\beta(p=2,q=4)=0$, so that $\theta_j=f_j,\, j=0,1$. One further finds $b_+=-c=-\frac{1}{2}(1+\eta_1)$, so that one gets the polynomial solutions, $f_0(x)=\hypgeo{2}{1}(-1,b_+;c;x)=1+x$, and  
from \eqref{f1} $f_1(x)=\frac{1}{2}(1+x)$.\\

\noindent$\bullet$ {\bf The $b_+=0$ case}. From \eqref{a+b+} we get the condition $Z=1$, which in parameterization \eqref{Zbis} of $Z$ \eqref{Z} is just that defining the \textcolor{blue}{\bf blue ellipse} as $X^2+Y^2=2$. Its solution is given by $\eta_1=2-q\pm\sqrt{-q^2+6q-7}$. The condition $(q,\eta_1)\in D_{4-q}$ \eqref{D4-q} selects the $(+)$-branch only,  and restricts the range of parameter $q$ to $q\in[\frac{8}{5},\frac{16}{5}]$ (see Fig. \ref{qeta1}). This yields a first line of algebraic solutions,
\begin{align*}
\begin{cases}
\eta_1&=2-q+\sqrt{1+(q-2)(4-q)}\\
\beta(2,q)&=\frac 12\left(5-q-\sqrt{1+(q-2)(4-q})\right)\\
q&\in [\frac{8}{5},\frac{16}{5}].
\end{cases}
\end{align*}
\\

\noindent$\bullet$ {\bf The $b_+=-1$ case}. From \eqref{a+b+} we get the condition $Z=9$, which in parameterization \eqref{Zbis} is defining a \textcolor{red}{\bf red ellipse} as $X^2+Y^2=10$. Its solution is given by $\eta_1=2-q\pm\sqrt{1+6q-q^2}$. The condition $(q,\eta_1)\in D_{4-q}$ \eqref{D4-q}  allows for both $(\pm)$-branches,  but restricts the range of parameter $q$ to $[3-\sqrt{10},4]$ for the 
$(+)$-branch, and to $[3-\sqrt{10},0]$ for the $(-)$-branch (see Fig. \ref{qeta1}). This finally gives algebraic solutions for,
\begin{align*}
\begin{cases}
\eta_1 &=2-q+\sqrt{1+6q-q^2}\\
\beta(2,q) &=\frac 12(7-q-\sqrt{1+6q-q^2})\\
q&\in[3-\sqrt{10},4],
\end{cases}
\end{align*}
and
\begin{align*}
\begin{cases}
\eta_1 &=2-q-\sqrt{1+6q-q^2}\\
\beta(2,q) &=\frac 12(7-q+\sqrt{1+6q-q^2})\\
q&\in[3-\sqrt{10},0].
\end{cases}
\end{align*}

From \eqref{hypergeo} and for $n\geq 2$, one further finds a whole series of algebraic solutions where the $f_j$'s are \emph{polynomials of degree} $n$, when either $(a)_{n+1}=0$ or $(b)_{n+1}=0$, i.e., either $a_+=-n$ or $b_+=-n$.\\

\noindent$\bullet$ {\bf The $a_+=-n$ case}. From \eqref{a+b+}, one finds 
\begin{equation}\label{eta1na}
\eta_1=\frac{(2-q)(3-q)-2n^2}{2-q+2n}.
\end{equation}
One recovers the two linear cases seen above,  $n=0$,  $\eta_1=3-q$ with $q\leq 8/3$ and  $n=1$, $\eta_1=1-q$ with $q\leq 0$. For $n \geq 2$, the L\'evy symbol condition $\eta_1\geq 1-q/4$ requires that $q\leq 2-2n$. Eq. \eqref{eta1na} can be recast as 
$$(2-q+2n)(\eta_1+q-3+2n)=2n(n-1).$$ This corresponds to a branch of a {\bf hyperbola} $\mathcal H_n$ in the $(q,\eta_1)$ plane, defined in affine coordinates by
\begin{equation}\label{Hn}
X_nY_n=2n(n-1),\,\,\, X_n:=2-q+2n,\,\,Y_n:=\eta_1+q-3+2n,\,\,\, q\leq 2-2n ,\,\, n\geq 2.
\end{equation}\\
\noindent$\bullet$ {\bf The $b_+=-n$ case}. From \eqref{a+b+} one readily finds
\begin{equation}\label{eta1nb}
Z=(2n+1)^2 ,
%\eta_1=\frac{(2-q)(3-q)-2n^2}{2-q+2n}.
\end{equation}
which in parameterization \eqref{Zbis} is defining the {\bf ellipse} $\mathcal E_n$ by the equation $$X^2+Y^2=1+(2n+1)^2.$$This in turn yields
\begin{equation}\label{eta1nbis}
\eta_1=2-q\pm\sqrt{(2-q)(q-4)+(2n+1)^2},
\end{equation}
together with 
\begin{align}\label{beta1nb}
\beta(2, q)&=3-q+\frac{1}{2}\left(2n+1-\eta_1\right)\\ \nonumber
&=\frac{1}{2}\left(2n+5-q\mp \sqrt{(2-q)(q-4)+(2n+1)^2}\right).
\end{align}
The condition $(q,\eta_1)\in D_{4-q}$ \eqref{D4-q}  allows for both $(\pm)$-branches, but restricts for the 
$(+)$-branch the range of parameter $q$ to $[3-\sqrt{(2n+1)^2+1},4]$, and for the $(-)$-branch to $[3-\sqrt{(2n+1)^2+1},\tfrac{8}{5}(1-n)]$. For $n=0,1$ one recovers the blue  and red ellipses of Fig. \ref{qeta1}.
\\

Let us finally investigate what happens on the boundary of $D_{4-q}$.
\begin{enumerate}
\item  {\it On the $\eta_1=1-q/4$ line:}\\
 As we have already seen, this case occurs when $\eta_2=4\eta_1$ and the process is an SLE$_\kappa$ with $\kappa=2\eta_1=2-q/2$, for which we get from the above,
\begin{align}\label{SLEspec}
\begin{cases}
\beta(2,q)&=1-\frac{1}{4}q,\;q\geq \frac{12}{5},\\
&=4-\frac{3}{2}q,\;q\leq\frac{12}{5}.\\
\end{cases}
\end{align}
 It is known that the SLE generalized spectrum has several phases \cite{DHLZ}, among which, the standard `bulk' spectrum \cite{BS}, 
\begin{equation}\label{BS}
\beta_0(p;\kappa):=-p+\frac{(4+\kappa)^2}{4\kappa}-\frac{\kappa+4}{4\kappa}\sqrt{(4+\kappa)^2-8\kappa p},
\end{equation}
and the `unbounded whole-plane' one \cite{DHLZ}, 
\begin{equation}\label{eq:DHLZ}\beta_1(p,q;\kappa):=3p-2q-\frac 12-\frac 12\sqrt{1+2\kappa(p-q)}.
\end{equation}
We thus have 
\begin{align*}
\begin{cases}
\beta_0(2;\kappa=2-q/2)&=1-\frac{1}{4}q,\;q\geq -4,\\
&=\frac{16}{4-q},\;q\leq -4.\\
\beta_1(2,q;\kappa=2-q/2)&=4-\frac{3}{2}q,\;q\leq3,\\
&=7-\frac{5}{2}q,\;q\geq3.
\end{cases}
\end{align*}
Hence, on the SLE boundary line,  a \emph{phase transition} takes place at $q=12/5$ in the spectrum \eqref{SLEspec}, in the sense that 
$\beta(2,q)=\beta_1(2,q;\kappa=2-q/2)$ if $q\leq12/5$ and $\beta(2,q)=\beta_0(2;\kappa=2-q/2)$ if $q\geq 12/5$. One can check that the phase transition point $(p=2,q=12/5)$ is located on the so-called `green parabola' that  delineates the respective domains of validity of $\beta_0$ and $\beta_1$ for whole-plane SLE$_{\kappa=4/5}$ \cite[Sec. 5.2.2]{DHLZ}. 
\item {\it On the  $q=4$ line:}\\
Here, $\eta_2=0$ and the spectrum is $\beta(2,4)=0, \forall \eta_1$. This case corresponds to a pure jump L\'evy process, whereas the equality $q=2p=4$ corresponds to a bounded LLE process \cite{DHLZ}; the above result then agrees with \cite{Chen2008SchrammLoewnerED}.
\end{enumerate}
 Fig. \ref{qeta1} summarizes the results of this section, showing 
\begin{itemize}
\item the domain $D_{4-q}$, domain of validity of the hypergeometric analysis; 
\item the domain of definition of the square root involved in the expression of the spectrum $\beta(2,q)$ \eqref{beta2q}, which is the exterior of the \textcolor{ao(english)}{\bf green} ellipse (thus containing $D_{4-q}$);
\item the $\eta_1=1-q$ and $\eta_1=3-q$ lines, corresponding to degenerate hypergeometric solutions with $a_+=-1,0$;  
\item the special solutions with $b_+=-1,0$, for which the hypergeometric $f_0,f_1$ are degree $1$ polynomials, which respectively correspond to the two  \textcolor{red}{\bf red} and \textcolor{blue}{\bf blue} ellipses (intersected with $D_{4-q}$). Note that the phase-transition point $(q=12/5,\eta_1=2/5)$ is the intersection point of the boundary of $D_{4-q}$ with the \textcolor{ao(english)}{\bf green} ellipse (a single tangency point).
\end{itemize}
\bigskip
%\subsection{Essai2}\label{-qcase}
\subsection{The $\eta_2=-q$ case}\label{-qcase}
Here the points $(q,\eta_1)$ must belong to
\begin{equation}\label{D-q} 
D_{-q}=\{(q,\eta_1):\,q\leq 0,\,\eta_1\geq-q/4\}.
\end{equation}
In this case, the first three equations \eqref{equn} together with \eqref{-nn} form a system of coupled ODEs with unknowns $\theta_j$,$j=0,1,2$,%\begin{align*}
%\begin{cases}
%(x-1)\theta_0'(x)+(3-q)\theta_0(x)+(q-2)\theta_1(x)=0\\
%2x(x-1)\theta_1'(x)-\left[1+\eta_1+(\eta_1+2q-7)x\right]\theta_1(x)+(\eta_1-3+q)\theta_0(x)=0.
%\end{cases}
%\end{align*}}
%\begin{align}\label{equ0bis}\begin{cases}
%(\xi-1)\theta_0'(\xi)-(q-3)\theta_0(\xi)+(q-2)\theta_1(\xi)=0,\\
%2\xi(\xi-1)\theta_1'(\xi)-\left[\eta_1+1+(\eta_1+2q-7)\xi\right]\theta_1(\xi)+(\eta_1+q-3)\theta_0(\xi)\\+(\eta_1+q-1)\xi\theta_2(\xi) 
%=0.
%\end{cases}
%\end{align}
%\begin{align}\label{equ2}
%\begin{cases}
%(\xi-1)\theta_0'(\xi)-(q-3)\theta_0(\xi)+(q-2)\theta_1(\xi)=0,\\ %\nonumber
%2\xi(\xi-1)\theta_1'(\xi)-\left[\eta_1+1+(\eta_1+2q-7)\xi\right]\theta_1(\xi)+(\eta_1+q-3)\theta_0(\xi)\\ %\label{eqn:1}
%+(\eta_1+q-1)\xi\theta_2(\xi)=0,\\ 
%2\xi(\xi-1)\theta_2'(\xi)-\left[\eta_2+2+(\eta_2+2q-8)\xi\right]\theta_2(\xi)+(\eta_2+q-4)\theta_1(\xi)\\+\xi (\eta_2+q)\theta_3(\xi)=0.
%\end{cases}
%\end{align}}
\begin{align}\label{equ2bis}
\begin{cases}
(\xi-1)\theta_0'(\xi)-(q-3)\theta_0(\xi)+(q-2)\theta_1(\xi)=0,\\ %\nonumber
2\xi(\xi-1)\theta_1'(\xi)-\left[\eta_1+1+(\eta_1+2q-7)\xi\right]\theta_1(\xi)+(\eta_1+q-3)\theta_0(\xi)\\ %\label{eqn:1}
+(\eta_1+q-1)\xi\theta_2(\xi)=0,\\ 
2\xi(\xi-1)\theta_2'(\xi)-\left[2-q+(q-8)\xi\right]\theta_2(\xi)-4\theta_1(\xi)=0.
%+\textcolor{red}{0\, \xi \theta_3(\xi)}
\end{cases}
\end{align}
\subsubsection{Polynomial Ansatz.}Let us now consider for $n\geq 0$ the following {\it Ansatz}, 
\begin{equation}\label{ansatzseries}
\theta_j(\xi)=(1-\xi)^{-\alpha}f_j(\xi),\quad f_j(\xi)=\sum_{k=0}^{n}A_j^kP_k(\xi),\quad P_k(\xi):=(1-\xi)^k,\;j=0,1,2.
\end{equation}
%Since all derivatives $f_j', j=0,1,2$, appear with a $(\xi-1)$-factor or a $\xi(\xi-1)$-factor in \eqref{equ2-quater}, the derivation of the $P_j^k$ term of order $k$ in $f_j$ produces only terms of orders $k$ and $k+1$. Likewise, the presence of only degree 1 polynomial factors in front of the $f_j$s in \eqref{equ2-quater} insures that each $P_j^k$ yields  only terms of orders $k$ and $k+1$. Therefore, in coupled equations \eqref{equ2-quater}, among the $P_j^{k\leq 2}$ monomials, only  the $P_j^{k= 2}$ ones  may contribute terms which recombine with degree 3 contributions coming from the $P_j^{k=3}$ monomials. \eqref{ansatzpol}
Eqs. \eqref{equ2bis} give 
\begin{align}\label{equ2-quater}
\begin{cases}
[1]\bullet \,\, (\xi-1)f_0'(\xi)-(q-3+\alpha)f_0(\xi)+(q-2)f_1(\xi)=0,\\ %\nonumber
[2]\bullet \,\, 2\xi(\xi-1)f_1'(\xi)-\left[\eta_1+1+(\eta_1+2\alpha+2q-7)\xi\right]f_1(\xi)\\ %\label{eqn:1}
\,\,\,\,\,\,\,\,\,\,\,\,\, +(\eta_1+q-3)f_0(\xi)+(\eta_1+q-1)\xi f_2(\xi)=0,\\ 
[3]\bullet \,\, 2\xi(\xi-1)f_2'(\xi)-\left[2-q+(2\alpha+q-8)\xi\right]f_2(\xi)-4f_1(\xi)=0.%+\textcolor{red}{0\, \xi \theta_3(\xi)}=0.
\end{cases}
\end{align}
Consider then in each left-hand side of the three equations [1], [2], [3] in \eqref{equ2-quater}, the contributions arising for a \emph{fixed} $k$ from the monomials $A_j^kP_k$ in $f_j, j=0,1,2$ \eqref{ansatzseries}. Because of the universal presence of factors $(\xi-1)$ or $\xi(\xi-1)$ in front of derivatives $P_k'(\xi)=-kP_{k-1}(\xi)$, and of polynomials of degree at most 1 in $\xi$ in front of $P_k(\xi)$, only $P_k$ and $P_{k+1}$ monomials will result from $P_k$.  We explicitly find for the three lines the resulting contributions,
\begin{align}\label{PkPk+1}
\begin{cases}
[1]\bullet \left\{[k-(q-3+\alpha)]A_0^k+(q-2)A_1^k\right\} P_k,\\ %\nonumber
[2]\bullet\left\{(\eta_1+q-3)A_0^k+2\left[k-(\eta_1+\alpha+q-3)\right]A_1^k +(\eta_1+q-1)A_2^k\right\}P_k\\
\,\,\,\,+\left\{(-2k+\eta_1+2\alpha+2q-7)A_1^k -(\eta_1+q-1)A_2^k\right\}P_{k+1},\\ 
[3]\bullet\left\{-4A_1^k+\left[2k-(2\alpha-6)\right]A_2^k\right\}P_k+\left(-2k+2\alpha+q-8\right)A_2^k P_{k+1}.
\end{cases}
\end{align}
When summing up over $k\in \{0,\cdots,n\}$ to reconstruct the $f_j$'s in \eqref{equ2-quater},  each monomial $P_k$ must get an overall vanishing coefficient in order to satisfy the equations. Therefore, collecting all coefficients of terms $P_k$, we are led to the recursions,
%\begin{align}\label{PkPk}
%\begin{cases}
%\bullet \left\{[k-(q-3+\alpha)]A_0^k+(q-2)A_1^k\right\} P_k=0,\\ %\nonumber
%\bullet \left\{(\eta_1+q-3)A_0^k+2\left[k-(\eta_1+\alpha+q-3)\right]A_1^k +(\eta_1+q-1)A_2^k\right\}P_k\\
%\,\,\,\, +\left\{[-2(k-1)+\eta_1+2\alpha+2q-7]A_1^{k-1} -(\eta_1+q-1)A_2^{k-1}\right\}P_{k}=0,\\ 
%\bullet \left\{-4A_1^k+\left[2k-(2\alpha-6)\right]A_2^k\right\}P_k+\left[-2(k-1)+2\alpha+q-8\right]A_2^{k-1} P_{k}=0,
%\end{cases}
%\end{align}
%where $(k-1)=k-1$ for $k\geq 1$ and $(k-1)=0$ for $k=0$. 
%Note  that in the  $k=0$ case, there are \emph{no} $A_1^{k-1}$ and $A_2^{k-1}$ terms,  which are thus by convention set equal to $0$. 
%This leads to 
%the recursions,
\begin{align}\label{AkAk-1}
\begin{cases}
[1] \bullet\,\, [k-(q-3+\alpha)]A_0^k+(q-2)A_1^k=0,\\ %\nonumber
[2] \bullet\, \,(\eta_1+q-3)A_0^k+2\left[k-(\eta_1+\alpha+q-3)\right]A_1^k +(\eta_1+q-1)A_2^k\\
\,\,\,\,\,=[2(k-1)-(\eta_1+2\alpha+2q-7)]A_1^{k-1} +(\eta_1+q-1)A_2^{k-1},\\ 
[3] \bullet\,\, -4A_1^k+2\left[k-(\alpha-3)\right]A_2^k=\left[2(k-1)-(2\alpha+q-8)\right]A_2^{k-1}.
\end{cases}
\end{align}
Note  that in the  $k=0$ case, there are \emph{no} $A_1^{k-1}$ and $A_2^{k-1}$ terms,  which are thus set by convention equal to $0$.
%\begin{align}\label{AkAk-1bis}
%\begin{cases}
%[k-(q-3+\alpha)]A_0^k+(q-2)A_1^k=0,\\ %\nonumber
%(\eta_1+q-3)A_0^k+2\left[k-(\eta_1+\alpha+q-3)\right]A_1^k +(\eta_1+q-1)A_2^k\\=-\left\{(-2(k-1)+\eta_1+2\alpha+2q-7)A_1^{k-1} -(\eta_1+q-1)A_2^{k-1}\right\},\\ 
%-4A_1^k+\left[2k-(2\alpha-6)\right]A_2^k=\left[2(k-1)-(2\alpha+q-8)\right]A_2^{k-1}.
%\end{cases}
%\end{align}

This system is best written under a matricial form, by defining successively,
\begin{align*}
&D_k(\alpha):=
&\begin{pmatrix}k-(\alpha+q-3)&q-2&0\\
\eta_1+q-3&2k-2(\eta_1+\alpha +q-3)&\eta_1+q-1\\
0&-4&2k-2(\alpha-3)
%0&\eta_1+2\alpha_1+2q-7&-\eta_1-q+1&\eta_1+q-3&-2\eta_1-2q-2\alpha+8&\eta_1+q-1\\
%0&2&\alpha-3&0&0&0\\
%0&0&q+2\alpha-8&0&-4&8-2\alpha
\end{pmatrix}
\end{align*}
and 
\begin{align*}
&C_{k-1}(\alpha):=
&\begin{pmatrix}0&0&0\\
0&2(k-1)-(\eta_1+2\alpha +2q-7)&\eta_1+q-1\\
0&0&2(k-1)-(2\alpha+q-8)
%0&\eta_1+2\alpha_1+2q-7&-\eta_1-q+1&\eta_1+q-3&-2\eta_1-2q-2\alpha+8&\eta_1+q-1\\
%0&2&\alpha-3&0&0&0\\
%0&0&q+2\alpha-8&0&-4&8-2\alpha
\end{pmatrix}.
\end{align*}
Let us finally define the column vectors,
\begin{align*}
{\bf A}^k:=\left(
\begin{array}{c}
A_0^k\\
A_1^k\\
A_2^k\\
\end{array}
\right),\quad k\geq 0,
\end{align*}
so that recursions \eqref{AkAk-1} become for $k\geq 1$
\begin{align}\label{DkAk}
D_k(\alpha){\bf A}^k=C_{k-1}(\alpha) {\bf A}^{k-1},\quad k\geq 1,
\end{align}
together with the initial condition,
\begin{align}\label{D0A0}
D_0(\alpha){\bf A}^0={\bf 0},
\end{align}
and the closure relation
\begin{align}\label{An+1}
{\bf A}^{n+1}={\bf 0},
\end{align}
such that ${\bf A}^{k}={\bf 0},\,\, \forall k\geq n+1$. 

Equation \eqref{D0A0} shows that for a non-trivial solution to exist, one must have 
$$\det D_0(\alpha)=0,$$
with ${\bf A}^0$ an eigenvector of vanishing eigenvalue.
The determinant of $D_0$ is%, with the previous notation $\alpha'=\alpha+q-3$,
%\begin{equation*}
%\det D_0(\alpha)=2(\alpha-1)\left[2\alpha'(\alpha'+\eta_1-2)-q(\eta_1+q-3)\right] 
%\end{equation*}
\begin{align}\label{detD0}
\det D_0(\alpha)&=2(1-\alpha)E_0(\alpha)\\ \nonumber
E_0(\alpha)&:=2(\alpha+q-3)(\eta_1+\alpha+q-5)-q(\eta_1+q-3).
\end{align}
 Let us look for the solutions to
%\begin{align}\label{alpha0pm}
%E_0(\alpha)&=0\Leftrightarrow \alpha_0^{\pm}=5-q-\frac 12\eta_1\pm\frac 12\sqrt{\eta_1^2+2q\eta_1-4\eta_1+2q^2-6q+4},\\ 
%\label{alpha1pm}
%E_1(\alpha)&=0\Leftrightarrow \alpha_1^{\pm}=4-q-\frac 12\eta_1\pm\frac 12\sqrt{\eta_1^2+2q\eta_1-4\eta_1+2q^2-6q+4},
%\end{align}
\begin{align}\label{alphapm0bisbis}
%E_1(\alpha)&=0\Leftrightarrow \alpha=\alpha_0^{\pm}:=\alpha_1^{\pm}+1,\\ \label{alpha1pmbis}
E_0(\alpha)&=0\Leftrightarrow \alpha=\alpha_0^{\pm}:= 3-q+\frac 12\left(2-\eta_1\pm\sqrt{\hat Z}\right),\\ \label{hatZnew} 
\hat Z&=\hat Z(\eta_1,q):= (\eta_1-2)^2+2q(\eta_1+q-3),
\end{align}
which yields the set of non-trivial zeroes of $\det D_0$. $\hat Z$ can also be written as 
\begin{align}\nonumber%\label{hatZbisbis}
\hat Z&=(q-1)^2+(\eta_1+q-2)^2-1\\ 
\label{hatZterter}
&=\hat X^2+Y^2-1, \quad \hat X:=q-1, \quad Y:=\eta_1+q-2.
\end{align}
Thus $\hat Z$ is non-negative outside the \textcolor{ao(english)}{\bf green} ellipse $\hat Z=0$ (Fig \ref{Fig-q}), and since $q\leq 0$ in $D_{-q}$, expression \eqref{hatZterter} is clearly non-negative there, vanishing only for $q=0,\eta_1=2$, so that  $\alpha_0^{\pm}$ \eqref{alphapm0bisbis} is defined and real in $D_{-q}$.
 Observe also that a translation maps $Z$ \eqref{Zbis} to $\hat Z$ \eqref{hatZterter},
\begin{equation}\label{hatZZ}
\hat Z(\eta_1,q)=Z(\eta_1-2,q+2).
\end{equation}
The null eigenvectors $\bf A^0(\alpha)$ of $D_0(\alpha)$ with vanishing eigenvalue are given, either for $\alpha=\alpha_0^{\pm}$  or for $\alpha=1$,  by  the one-dimensional space
\begin{align}\label{A0}
\bf A^0=\bf A^0(\alpha)=\begin{cases}
A_0^0\in \mathbb R\\
A^0_1=\frac{1}{q-2} (\alpha+q-3)A_0^0\\
A^0_2=-\frac{2}{\alpha -3} A^0_1.
\end{cases}
\end{align} 
As it will appear shortly, the value of the generalized spectrum is given by the root  $\alpha_0^{+}$ in \eqref{alphapm0bisbis}.
To check this, consider the first integrability line, $\eta_1=3-q$, where \eqref{alphapm0bisbis} gives, 
%\begin{align}
%\begin{cases}
%\alpha_{+}&=3-q,\,\,\,q\leq 1\\&=2,\,\,\,\,\,\,\, \quad q\geq 1\\
%\alpha_{-}&=2,\,\,\,\,\,\,\,\quad q\leq 1\\
%&=3-q,\,\,\,q\geq 1.\\
%\end{cases}
%\end{align}
\begin{align*}
\begin{cases}
\alpha_0^{+}&=3-q,\,\,\,\quad \alpha_0^{-}=2,\,\,\,\quad \quad q\leq 1\\&=2\,\,\,\,\,\,\,\,\,\,\, \quad \quad \quad =3-q,\,\,\,\,\,q\geq 1.
%\alpha_{-}&=2,\,\,\,\,\,\,\,\quad q\leq 1\\
%&=3-q,\,\,\,q\geq 1.\\
\end{cases}
\end{align*}
Therefore, the choice of root $\alpha_0^+$ reproduces for $q\leq 1$, hence $q\leq 0$ in $D_{-q}$  the expected spectrum \eqref{3-q} $\beta(2,q)=3-q$.
 For the second integrability line, $\eta_1=1-q$, one similarly finds 
 \begin{align*}
\begin{cases}
\alpha_0^{+}&=4-q,\,\,\,\quad \alpha_0^{-}=3,\,\,\,\quad \quad q\leq 1\\&=3\,\,\,\,\,\,\,\,\,\,\, \quad \quad \quad =4-q,\,\,\,\,\,q\geq 1.
%\alpha_{-}&=2,\,\,\,\,\,\,\,\quad q\leq 1\\
%&=3-q,\,\,\,q\geq 1.\\
\end{cases}
\end{align*}
%\begin{align}
%\begin{cases}
%\alpha_{+}&=4-q,\,\,\,q\leq 1\\&=3,\,\,\,\,\,\,\, \quad q\geq 1\\
%\alpha_{-}&=3,\,\,\,\,\,\,\,\quad q\leq 1\\
%&=4-q,\,\,\,q\geq 1.\\
%\end{cases}
%\end{align}
The condition $\eta_1\geq 0$ requires that $q\leq 1$, thus the choice of root $\alpha_0^+$ again gives for $q\leq 0$ in $D_{-q}$ the expected spectrum \eqref{1-q} $\beta(2,q)=4-q$. 
\subsubsection{Recursion} Assume for the time being that for $k\geq 1$, $\det D_k(\alpha)\neq 0$, so that $D_k(\alpha)$ is invertible. We immediately get from \eqref{DkAk}
\begin{align}\label{Ak}
{\bf A}^k&=M_k\, {\bf A}^{0}\\\nonumber M_k&=\prod_{\ell=0}^{k-1} D^{-1}_{k-\ell}(\alpha) C_{k-\ell-1}(\alpha) \\  \label{Mk}
&:=D^{-1}_{k}(\alpha) C_{k-1}(\alpha) D^{-1}_{k-1}(\alpha) C_{k-2}(\alpha)\cdots D^{-1}_{1}(\alpha) C_{0}(\alpha).
\end{align}
Notice that $D_k$ and $C_k$ obey a simple \emph{shift} relation,
%\begin{equation}\label{shift}
%D_{k}(\alpha)=D_{k-1}(\alpha-1),\quad C_{k}(\alpha)=C_{k-1}(\alpha-1),
%\end{equation}
%so that 
\begin{equation}\label{shift0}
D_{k}(\alpha)=D_{0}(\alpha-k), \quad C_{k}(\alpha)=C_{0}(\alpha-k).
\end{equation}
Result \eqref{Mk} can then be rewritten simply as,%Assuming for the time being that $\det D_0(\alpha-k)\neq 0$ for $k\geq 1$, so that $D_0(\alpha-k)$ is invertible, we immediately get 
\begin{align}\label{Mkl}
M_k&=\prod_{\ell=0}^{k-1} D^{-1}_{0}(\alpha-k+\ell) C_0(\alpha-k+\ell+1)\\ \nonumber
&:=D^{-1}_{0}(\alpha -k) C_{0}(\alpha-k+1) D^{-1}_{0}(\alpha-k+1) C_{0}(\alpha-k+2)\cdots D^{-1}_{0}(\alpha-1) C_{0}(\alpha).
\end{align}
\subsubsection{Polynomial solutions}\label{ellipses} Requiring the $f_j$'s to be polynomials of given degree $n\geq 1$ is equivalent to requiring that ${\bf A}^{n+1}={\bf 0}$ \eqref{An+1}. From \eqref{DkAk}, we get $$D_{n+1}(\alpha) {\bf A}^{n+1}=C_{n} {\bf A}^{n}={\bf 0},$$ i.e., 
\begin{align}\label{eq7}
&[2n-(\eta_1+2\alpha +2q-7)]A^n_1+(\eta_1+q-1) A^n_2=0,\\ \label{eq8} 
&[2 n-(2\alpha+q-8)] A^n_2=0.
\end{align} 
If $2 n-(2\alpha+q-8)\neq 0$, then $A^n_2=0$, and for a non-vanishing solution to exist, one needs the condition $2n-(\eta_1+2\alpha +2q-7)=0$ to hold for $\alpha=\alpha_0^{\pm}$. 
This gives  $\pm \sqrt{\hat Z}=2n-1$, with $n\geq 1$. This selects the $(+)$-branch $\alpha=\alpha_0^+$, together with
\begin{equation}\label{hatZk}
\hat Z=(2n-1)^2,\,\,n\geq 1.
\end{equation}
Because of \eqref{hatZterter}, one thus finds from \eqref{hatZk} a set of {\bf ellipses} $\hat {\mathcal E}_n$ in the  $(q,\eta_1)$ plane, satisfying the equation 
\begin{equation}\label{hatelln}
\hat X^2+Y^2=(2n-1)^2+1.
\end{equation}
It is interesting to note that it is far from obvious that  condition \eqref{hatZk}, while necessary,  is also \emph{sufficient} to obtain that $A_2^n=0$ at level $n$ of recursion \eqref{Ak}, when starting from eigenvector ${\bf A}^0(\alpha)$ \eqref{A0} for $\alpha=\alpha_0^+$. We checked with {\bf Mathematica\textregistered}  that this is indeed the case, but despite repeated attempts, a combinatorial-like proof has eluded us. 
{\rkk Degeneracy of $D_k(\alpha).$} Let us finally consider the degenerate case when $\det D_k(\alpha)=\det D_0(\alpha-k)=0$. Since $\alpha=\alpha_0^+$, this  requires that either $\alpha_0^+ -k=\alpha_0^-$ or $\alpha_0^+ -k=1$.\\
$\bullet$ {\bf Case} $\alpha_0^+ -k=\alpha_0^-$. This gives $\hat Z=k^2$, and since $\hat Z=(2n-1)^2$, we get $k=2n-1$. Since the recursion stops at level $n+1$, we are only interested in the cases where $k\leq n+1$, hence $n\leq 2$. \\
When $n=1$, a direct computation gives the solution at level $k=n=1$ with the necessary condition $A^1_2=0$, as  $A_1^1=-\frac{1}{4}(\eta_1+q-1) A_2^0$, and $A_0^1=\frac{q-2}{\alpha+q-4}A_1^1.$ Adding to it any null eigenvector of $D_1(\alpha_0^+)=D_0(\alpha_0^-)$ would also provide a solution to the recursion at level $k=1$, but not its closure at level $n+1=2$. Taking into account the boundary condition $\theta_0(0)=f_0(0)=A^0_0+A_0^1=1$ yields the explicit solution on ellipse $\hat{\mathcal E}_1$,
%$$A^0_0=2\frac{2-q}{\eta_1+1},A_1^0=\frac{\eta_1-3}{\eta_1+1},A_2^0=\frac{4}{q-2}\frac{\eta_1+q-3}{\eta_1+1},$$ 
%$$A_0^1=\frac{\eta_1+2q-3}{\eta_1+1}, A_1^1=\frac{\eta_1+q-1}{2-q}\frac{\eta_1+q-3}{\eta_1+1},A_2^1=0.$$\\
%$$\theta_j(\xi)=(1-\xi)^{-\alpha_0^+}f_j(\xi),\,\,\,f_j(\xi)=A_j^0+A_j^1(1-\xi),\,\,\,j=0,1,2,$$
\begin{align*}
&\theta_j(\xi)=(1-\xi)^{-\alpha_0^+}f_j(\xi),\,\,\,f_j(\xi)=A_j^0+A_j^1(1-\xi),\,\,\,j=0,1,2,\\
&j=0:\,\,\,A^0_0=2\frac{2-q}{\eta_1+1},\,\,\,A_0^1=\frac{\eta_1+2q-3}{\eta_1+1},\\ 
&j=1:\,\,\,A_1^0=\frac{\eta_1-3}{\eta_1+1},\,\,\, A_1^1=\frac{\eta_1+q-1}{2-q}\frac{\eta_1+q-3}{\eta_1+1},\\
&j=2:\,\,\,A_2^0=\frac{4}{q-2}\frac{\eta_1+q-3}{\eta_1+1},\,\,\,A_2^1=0.
\end{align*}\\
 When $n=2$, we get $k=n+1=3$. In that case, ${\bf A}^3 ={\bf 0}$ is a trivial solution yielding  ${\bf A}^k ={\bf 0}, \,\, \forall k\geq 3$, and while adding to it any null vector of $D_3(\alpha_0^+)$  would still satisfy the recursion at level $3$, it would not close the latter at next levels. \\
$\bullet$ {\bf Case} $\alpha_0^+ -k=1$. From \eqref{alphapm0bisbis} and \eqref{hatZk}, one finds $\eta_1=5+2(n-k)-2q$,  and one has to consider 
the intersection in $D_{-q}$ of this straight line with ellipse $\hat {\mathcal E}_n$.  One finds two solutions, 
$q_{n,k}^{\pm}:=n-k+2\pm \sqrt{\Delta},\,\,\, \Delta=(n-1)^2+(k-1)(2n+1-k).$ As before, we are only interested in recursion levels $1\leq k\leq n+1$, so that $\Delta \geq 0$. One also has $\Delta=(n-k+2)^2+2(k-2)(2n+1-k)$, so that for $k\geq 2$, there is one admissible root, $q_{n,k}^-\leq0$, whereas $q_{n,k}^+\geq 0$, and for $k=1$, $q_{n,1}^-=2,\,q_{n,1}^+=2n$ which are not in $D_{-q}$. By \emph{continuity}, at points $q_{n,k}^-$ with $2\leq k\leq n+1$ on $\hat {\mathcal E}_n$, the generalized integral means spectrum is still $\beta(2,q)=\alpha_0^+$.\\

Eq. \eqref{hatelln} gives for ellipse $\hat{\mathcal E}_n$ the equations in Cartesian coordinates $(q,\eta_1)\in D_{-q}$,%\begin{itemize}
%\item %$\{E_1(\alpha)=0,\eta_1+2\alpha+2q-7-2n=0\}$\\
%with explicit solutions:
\begin{align}\label{ellk}
\begin{cases}
\eta_1&=2-q+\sqrt{q(2-q)+(2n-1)^2}\\
\alpha&=\alpha_0^+=\beta(2,q)=\frac 12(5-q+2k-\sqrt{q(2-q)+(2n-1)^2})\\
q&\in [1-\sqrt{1+(2n-1)^2},0],
\end{cases}
\end{align}
and
\begin{align}\label{redell2}
\begin{cases}
\eta_1&=2-q-\sqrt{q(2-q)+(2n-1)^2}\\
\alpha&= \alpha_0^+=\beta(2,q)=\frac 12(5-q+2k+\sqrt{q(2-q)+(2n-1)^2})\\
q&\in [1-\sqrt{1+(2n-1)^2},\inf\{-\frac 85(n-\frac 32),0\}].%\textcolor{red}{\backslash\{1-\sqrt{5}\},}
\end{cases}
\end{align}
%\textcolor{red}{where the exception appears because one cannot solve \eqref{cond1bis}  in this case.}
In the latter set, the $q=-\frac 85 (n-\frac 32)$ point for $n\geq 2$ corresponds to SLE$_\kappa$ with $\kappa=\frac 45 (n-\frac 32)$ and $\alpha=\beta\left(2,-\frac 85 (n-\frac 32)\right)
=\beta_1\left(2,-\frac 85 (n-\frac 32) ;\frac 45 (n-\frac 32)\right)$ in \eqref{eq:DHLZ}.\\
%\end{itemize}
\begin{figure}[!ht]\begin{center}
\includegraphics[width=8cm]{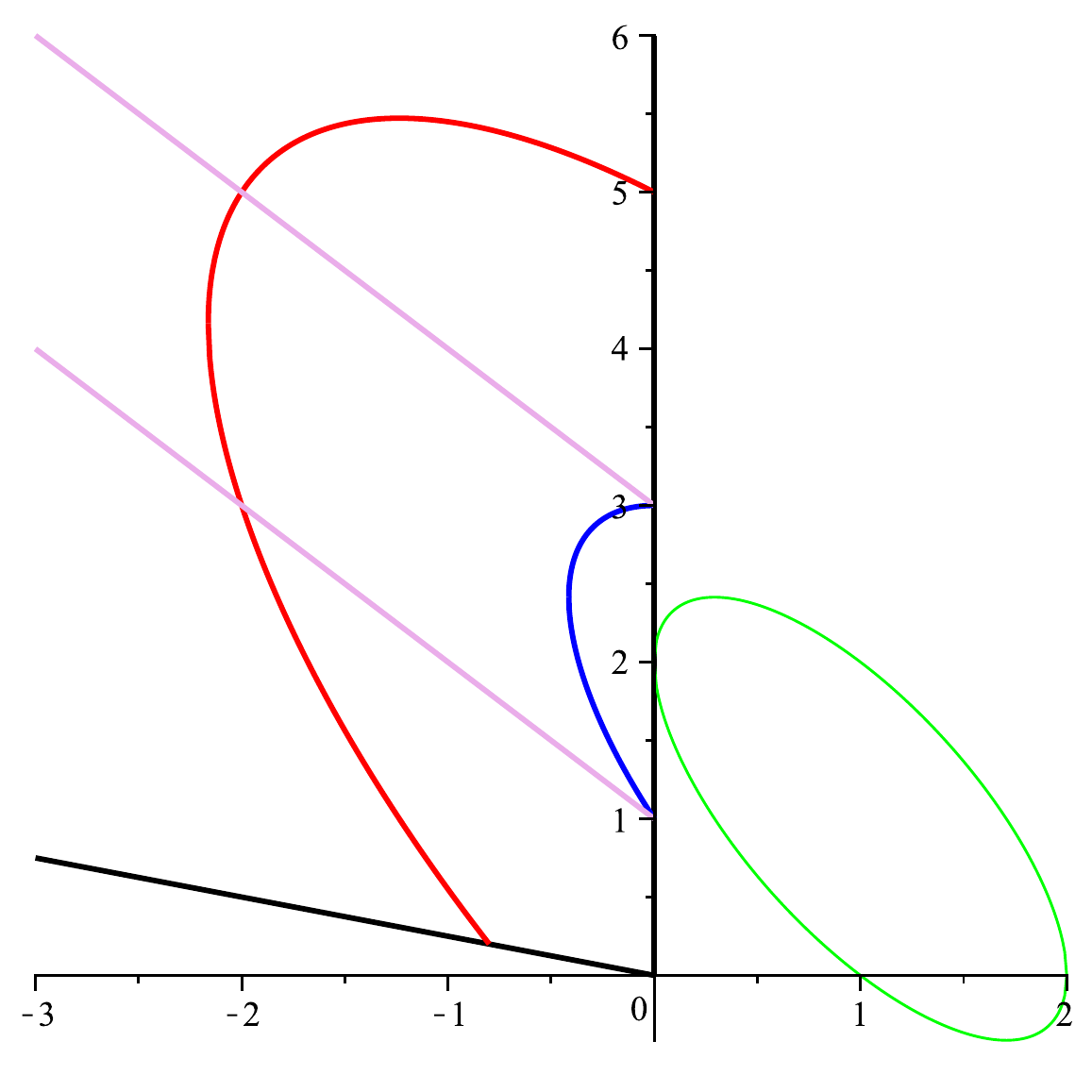}
\caption{{Domain lines in the $(q,\eta_1)$ plane, for the $\eta_2=-q$ case. The two purple straight lines have for equations, $\eta_1=1-q$ and $\eta_1=3-q$. The \textcolor{ao(english)}{\bf green}, \textcolor{blue}{\bf blue} and \textcolor{red}{\bf red} ellipses have for respective equations $\hat X^2+Y^2=1,2,10$, with $\hat X=q-1,Y=\eta_1+q-2$, with a common center located at $(q,\eta_1)=(1,1)$.}}
\label{Fig-q}
\end{center}%$(q,\eta_1)=(1,1)$
\end{figure}
The results are summarized in Fig. \ref{Fig-q}. The \textcolor{blue}{\bf blue} and \textcolor{red}{\bf red} ellipses in Fig.  \ref{Fig-q} above respectively correspond to the first $n=1$ and $n=2$ cases.  
Because of \eqref{hatZZ}, these ellipses are images by the $(q\to q-2,\eta_1\to \eta_1+2)$ translation of the corresponding same color ellipses of Fig. \ref{qeta1} in Section \ref{4-qcase}, with a common center now located at $(q,\eta_1)=(1,1)$.  
Note also that if the generalized spectrum is given in the whole region $D_{-q}$ by $\alpha_0^+$ in Eq. \eqref{alphapm0bisbis}, then on the SLE$_{\kappa}$-line where $\kappa=2\eta_1=\eta_2 /2=-q/2$ it coincides with the spectrum $\beta_1(2,q;\kappa=-q/2)=5-3q/2$, and there is no phase transition along that line, in contrast to the $(\eta_2=4-q)$-case. 
The phase transition now takes place on the $q=0=\eta_2$ line, at its $\eta_1=2$ contact point with the green ellipse, since the predicted spectrum $\alpha_0^+$ is there equal to $5-\eta_1$ for $0\leq \eta_1\leq 2$ and to $3$ for $\eta_1\geq 2$. This agrees with results \eqref{1-q}, $\beta(2,0)=4$ for $\eta_1=1$, and  \eqref{3-q}, $\beta(2,0)=3$ for $\eta_1=3$.
{\rkk Alternative condition.} For completeness, let us also mention that the alternative condition in \eqref{eq8}, $2 n-(2\alpha+q-8)= 0$ with $\alpha=\alpha_0^{\pm}$,  yields $2n+q+\eta_1=\pm \sqrt{\hat Z}$, 
leading to  $\eta_1+q-1=\frac{q(q-2)}{4(n+1)}-n$ and to the selection of the $(+)$-branch. From \eqref{eq7}, one further needs to check that $(\eta_1+q+1)A_1^n=(\eta_1+q-1)A_2^n$ upon starting the recursion \eqref{Ak}  from the null-eigenvector \eqref{A0}. Using again  {\bf Mathematica\textregistered} shows  this equality \emph{not} to hold, thus leading to no further solution.
\subsubsection{General solution to the Fuchsian system}\label{Fuchs}
 From the perspective of Fuchsian systems \cite{MR2363178,LY3}, the initial equations \eqref{equ2bis} for the vector function, 
 $\vartheta(\xi):= (\theta_0(\xi),\theta_1(\xi),\theta_2(\xi))^t$ can be written under the matrix form, 
$$\vartheta'(\xi)=  \frac{\mathbb A}{\xi}\vartheta(\xi)+\frac{ \mathbb B}{1-\xi} \vartheta(\xi),$$
where matrices $\mathbb A$ and $\mathbb B$ do not depend on $\xi$. They are simply given here by
\begin{align}\label{AA}
&\mathbb A=\frac{1}{2}
\begin{pmatrix}
0&0&0\\
\eta_1+q-3&-(\eta_1+1)&0\\
0&-4&q-2
%0&\eta_1+2\alpha_1+2q-7&-\eta_1-q+1&\eta_1+q-3&-2\eta_1-2q-2\alpha+8&\eta_1+q-1\\
%0&2&\alpha-3&0&0&0\\
%0&0&q+2\alpha-8&0&-4&8-2\alpha
\end{pmatrix},\\ \label{BB}
&\mathbb B=\frac{1}{2}
\begin{pmatrix}
2(3-q)&2(q-2)&0\\
\eta_1+q-3&-2(\eta_1+q-3)&\eta_1+q-1\\
0&-4&6
\end{pmatrix}.
\end{align}
The discriminant of $\mathbb B$ is 
\begin{align*}\det (\mathbb B-\alpha \mathds{1})&=(1-\alpha)\left[\frac{1}{2}(\eta_1+q-3)(2\alpha+q-6)+(\alpha-2)(\alpha+q-3)\right]\\
&=\frac{1}{2}(1-\alpha)E_0(\alpha)=\frac{1}{4}\det D_0(\alpha),
\end{align*}
so that $\mathbb B$ has for eigenvalues the zeroes of $\det D_0$, $\alpha_0^\pm$ \eqref{alphapm0bisbis} and $1$.

For $\xi\to 1^-$, the vector functions $\vartheta(\xi)= (\theta_0(\xi),\theta_1(\xi),\theta_2(\xi))^t$, have for asymptotic behavior,
% \begin{equation}\vartheta(\xi)=(1-\xi)^{-\alpha_0^+}{\bf f}_{+}(\xi)+(1-\xi)^{-\alpha_0^-}{\bf f}_-(\xi)+(1-\xi)^{-1}{\bf f}_1(\xi),\end{equation}
  \begin{equation}\label{principe}\vartheta(\xi)=(1-\xi)^{-\alpha_0^+}{\bf f}^{+}(\xi)+(1-\xi)^{-\alpha_0^-}{\bf f}^-(\xi)+(1-\xi)^{-1}{\bf f}^1(\xi),\end{equation}
 where ${\bf f}^{\pm}, {\bf f}^1$  are vector functions with Taylor series expansions in $\xi$.  In the generic \emph{non-resonant}  case, where the eigenvalues do not differ by integer numbers, these functions converge at $\xi=1$ towards the  eigenvectors of $\mathbb B$ corresponding to their respective eigenvalue powers 
 $\alpha_0^{\pm},1$. In the \emph{resonan}t case, they can be polynomials in $\xi$, or can involve  polynomials in $-\log(1-\xi)$, which then dominate the limit when $\xi \to 1^-$.   
 
In the resonant case of the ${\mathcal E}_n$ ellipses of Section \ref{ellipses}, we have $\alpha_0^-=\alpha_0^+-(2n-1)$ with $n\geq 1$.  
 These vector functions are then simple polynomials, with $${\bf f}^+(\xi)=(f_0(\xi), f_1(\xi), f_2(\xi))^t,$$ and ${\bf f}^-(\xi)={\bf f}^1(\xi)={\bf 0}$. 
 At $\xi=1$, ${\bf f}^+(\xi)$ becomes the eigenvector ${\bf f}^+(\xi=1)={\bf A}_0(\alpha_0^+)$, as given in \eqref{A0}.
 
 Let us conclude with the following Theorem. 
{\theo The generalized integral means spectrum $\beta(2,q)$ of a whole-plane L\'evy-Loewner  process with L\'evy symbols $\eta_1$ and $\eta_2=-q$ is given in the whole $D_{-q}$  domain \eqref{D-q} by $\alpha_0^+$ in Eq. \eqref{alphapm0bisbis}.}
\begin{proof}
Because of \eqref{principe}, the generalized integral means spectrum must be equal to one of the three eigenvalues  $\alpha_0^+, \alpha_0^-,1$. 
 Eigenvalues $\alpha_0^+$ and $\alpha_0^-$ are equal only when $\hat Z=0$, i.e., on the green ellipse which lies outside $D_{-q}$, 
except for the point $P_0=(q=0, \eta_1=2)$. Therefore in $D^*_{-q}:=D_{-q}\setminus P_0$, we have $\alpha_0^+>\alpha_0^-$. One can also check that on $D_{-q}$, 
$\alpha_0^+ >1$, whereas the equality  $\alpha_0^- = 1$ is realized  in $D_{-q}$ %when $\eta_1=5-q+\frac{4}{q-4}$ for $q\leq 0$, i.e., which is a 
on the branch of hyperbola of equation $(\eta_1+q-8)(q-4)=4$ with $q\leq 0$.  We know that 
$\beta(2,q)=3-q=\alpha_0^+$ on the half-line $\eta_1=3-q, q\leq 0$, as well as $\beta(2,q)=4-q=\alpha_0^+$ on the half-line $\eta_1=1-q, q\leq 0$. Because of the H\"older inequality, the 
generalized integral means spectrum is \emph{convex} in $(p,q)$ \cite{DHLZ}, hence \emph{continuous}. By continuity,  
$\beta(2,q)=\alpha_0^+$ on these integrability lines cannot jump to $\alpha_0^-< \alpha_0^+$ or $1<\alpha_0^+$ in $D^*_{-q}$,  hence $\beta(2,q)=\alpha_0^+$   
in the whole domain.  At the singular point $P_0$ , the spectrum is still $\alpha_0^+$, with a change of its analytic form.
\end{proof}

{\bf Acknowledgements:} This material is based upon work supported by the National Science Foundation under Grant No. DMS-1928930 while Bertrand Duplantier participated in the program ``Analysis and Geometry of Random Spaces",  hosted by the Mathematical Sciences Research Institute in Berkeley, California, during the Spring 2022 semester. The work by Yong Han is supported by the National Natural Science Foundation of China under Grant No.  12131016. B.D.  also wishes to warmly thank Emmanuel Guitter for his help with  {\bf Mathematica\textregistered} and the figures, and Thomas C. Halsey for a critical reading of the manuscript. 
\bibliographystyle{alpha}
\bibliography{biblio2}

\end{document}